\definecolor{e-mail}{rgb}{0,.40,.80}
\definecolor{reference}{rgb}{.20,.60,.22}
\definecolor{citation}{rgb}{0,.40,.80}
\newcommand{\cA}{\mathcal{A}}
\newcommand{\cB}{\mathcal{B}}
\newcommand{\cD}{\mathcal{D}}
\newcommand{\cH}{\mathcal{H}}
\newcommand{\cL}{\mathcal{L}}
\newcommand{\cN}{\mathcal{N}}
\newcommand{\cO}{\mathcal{O}}
\newcommand{\cV}{\mathcal{V}}
\newcommand{\bA}{\mathbf{A}}
\newcommand{\C}{\mathbf{C}}
\newcommand{\CC}{\mathbf{C}}
\newcommand{\CP}{\mathbf{CP}}
\newcommand{\bL}{\mathbb{L}}
\newcommand{\bP}{\mathbb{P}}
\newcommand{\RR}{\mathbf{R}}
\newcommand{\R}{\mathbf{R}}
\newcommand{\ZZ}{\mathbf{Z}}
\newcommand{\Z}{\mathbf{Z}}
\renewcommand{\d}{\mathrm{d}}
\newcommand{\D}{\mathrm{D}}
\newcommand{\rH}{\mathrm{H}}
\newcommand{\rP}{\mathrm{P}}
\newcommand{\T}{\mathrm{T}}
\newcommand{\U}{\mathrm{U}}
\newcommand{\sfH}{\mathsf{H}}
\newcommand{\sfR}{\mathsf{R}}
\newcommand{\sfQ}{\mathsf{Q}}
\newcommand{\sfX}{\mathsf{X}}
\newcommand{\sfp}{\mathsf{p}}
\newcommand{\g}{\mathfrak{g}}
\newcommand{\fg}{\mathfrak{g}}
\newcommand{\fX}{\mathfrak{X}}
\DeclareMathOperator{\Ad}{Ad}
\DeclareMathOperator{\ad}{ad}
\newcommand{\Bcc}{\mathcal{B}_{c.c.}}
\newcommand{\Ber}{\mathrm{Ber}}
\newcommand{\BM}{\mathrm{BM}}
\newcommand{\Bun}{\mathrm{Bun}}
\newcommand{\ch}{\mathrm{ch}}
\newcommand{\Conn}{\mathrm{Conn}}
\newcommand{\dCrit}{\mathrm{dCrit}}
\newcommand{\DR}{\mathrm{DR}}
\newcommand{\dvol}{\mathrm{dvol}}
\newcommand{\End}{\mathrm{End}}
\newcommand{\Hom}{\mathrm{Hom}}
\renewcommand{\Im}{\mathrm{Im}}
\newcommand{\Loc}{\mathrm{Loc}}
\newcommand{\Map}{\mathrm{Map}}
\renewcommand{\Re}{\operatorname{Re}}
\newcommand{\Sing}{\mathrm{Sing}}
\newcommand{\SO}{\mathrm{SO}}
\renewcommand{\O}{\mathrm{O}}
\newcommand{\SU}{\mathrm{SU}}
\newcommand{\Sym}{\mathrm{Sym}}
\newcommand{\tr}{\mathrm{tr}}
\newcommand{\triv}{\mathrm{triv}}
\newcommand{\Vect}{\mathrm{Vect}}
\renewcommand{\Bar}{\overline}
\newcommand{\dbar}{{\Bar{d}}}
\newcommand{\ibar}{{\Bar{i}}}
\newcommand{\jbar}{{\Bar{j}}}
\newcommand{\kbar}{{\Bar{k}}}
\newcommand{\mbar}{{\Bar{m}}}
\newcommand{\<}{\langle}
\renewcommand{\>}{\rangle}
\newcommand{\im}{\mathrm{i}}
\def\dbar{{\overline{\partial}}}
\newcommand{\bu}{{\bullet}}
\newcommand{\Sect}{{\rm Sect}}
\newcommand{\defterm}[1]{\textbf{\emph{#1}}}
\newcommand{\define}{\overset{\rm def}{=}}
\newtheorem{thm}{Theorem}[section]
\newtheorem{prop}[thm]{Proposition}
\crefname{prop}{proposition}{propositions}
\newtheorem{cor}[thm]{Corollary}
\newtheorem{lm}[thm]{Lemma}
\newtheorem{conjecture}[thm]{Conjecture}
\newtheorem{proposal}[thm]{Proposal}
\theoremstyle{definition}
\newtheorem{defn}[thm]{Definition}
\newtheorem{notation}[thm]{Notation}
\theoremstyle{remark}
\newtheorem{remark}[thm]{Remark}
\newtheorem{example}[thm]{Example}
\begin{document}
\title{Batalin--Vilkovisky quantization and supersymmetric twists}
\author{Pavel Safronov}
\address{School of Mathematics, University of Edinburgh, Edinburgh, UK}
\email{p.safronov@ed.ac.uk}
\author{Brian R. Williams}
\address{School of Mathematics, University of Edinburgh, Edinburgh, UK}
\email{brian.williams@ed.ac.uk}
\begin{abstract}
We show that a family of topological twists of a supersymmetric mechanics with a K\"ahler target exhibits a Batalin--Vilkovisky quantization. 
Using this observation we make a general proposal for the Hilbert space of states after a topological twist in terms of the cohomology of a certain perverse sheaf. 
We give several examples of the resulting Hilbert spaces including the categorified Donaldson--Thomas invariants, Haydys--Witten theory and the 3-dimensional A-model.
\end{abstract}
\maketitle

\section*{Introduction}

\subsection*{2d A-model and deformation quantization}

Given a symplectic manifold $(M, \omega)$ we may consider its deformation quantization, i.e. a deformation of the commutative algebra of functions $C^\infty(M)$ to a noncommutative algebra. The existence of such deformation quantizations was shown by De Wilde--Lecomte \cite{DeWildeLecomte} and Fedosov \cite{Fedosov} in the smooth context, Nest--Tsygan \cite{NestTsygan} and Polesello--Schapira \cite{PoleselloSchapira} in the complex-analytic context and Bezrukavnikov--Kaledin \cite{BezrukavnikovKaledin} in the algebraic context. Let us also mention the work of Kontsevich \cite{Kontsevich} who proves the existence of a deformation quantization of Poisson manifolds.

For the symplectic manifold $(M, \omega)$ we may also consider a two-dimensional TQFT known as the 2d A-model. Its category of boundary conditions is the Fukaya category of $M$ and the relationship between the Fukaya category and the category of modules over the deformation quantization algebra has a long history \cite{BresslerSoibelman,Tsygan}.

Let us suppose $M$ is a hyperK\"ahler manifold with complex structures $I, J, K$, K\"ahler structures $\omega_I, \omega_J, \omega_K$ and holomorphic symplectic structures $\Omega_I, \Omega_J, \Omega_K$. Consider the real symplectic structure $\omega_J=\Re \Omega_I$ and the B-field $B=\omega_K$. In this case the relationship between the Fukaya category and the category of complex-analytic deformation quantization (DQ) modules on $(M, \Omega_I)$ is expected to be even more tight. Namely, there are no instanton corrections in the Lagrangian Floer homology with boundary on $I$-holomorphic Lagrangians \cite{SolomonVerbitsky}.

A physical explanation of the relationship between the Fukaya category of $(M, \omega_J, B=\omega_K)$ and the deformation quantization of the complex symplectic manifold $(M, \Omega_I)$ has the following two ingredients:
\begin{itemize}
\item Consider a 2d $\cN=(4, 4)$ $\sigma$-model into $M$. It has a $\CP^1\times \CP^1$ family of topological twists corresponding to the $\CP^1\times \CP^1$ family of generalized complex structures obtained from the hyperK\"ahler structure on $M$. One of the topological twists is the 2d B-model into $(M, I)$ while another topological twist is the 2d A-model into $(M, \omega_J)$ (see \cite[Section 4.6]{Gualtieri} for the formula for the family of generalized complex structures). The interpolating family of 2d TQFTs provides a noncommutative deformation of the derived category of coherent sheaves of $(M, I)$ along the holomorphic Poisson bivector $\Omega_I^{-1}$ \cite{KapustinNC,KapustinAbranes}.

\item There is a \emph{canonical coisotropic brane} $\Bcc$ in the A-model which is supported everywhere. As explained in \cite[Section 11]{KapustinWitten} and \cite[Section 2.2]{GukovWitten} the endomorphism algebra of $\Bcc$ provides a deformation quantization of the algebra of holomorphic functions on $(M, I)$. In particular, any other brane $\cB$ gives rise to a DQ module $\Hom(\Bcc, \cB)$ over $\Hom(\Bcc, \Bcc)$.
\end{itemize}

The above perspective on the 2d A-model of a holomorphic symplectic manifold in terms of the deformation quantization allows us to make sense of categories of boundary conditions even when the 2d A-model itself is ill-defined. For example, consider a topological twist of the 4d $\cN=4$ super Yang--Mills theory known as the GL twist (we consider the case $t=1$, $\Psi=0$ in the notation of \cite{KapustinWitten}). Let $G_\C$ be the complexified gauge group. The compactification of the theory on a Riemann surface $\Sigma$ gives the 2d A-model into the Hitchin moduli space of $\Sigma$ with respect to symplectic structure $\omega_K$. In the complex structure $I$ the Hitchin moduli space is given by the cotangent bundle of the moduli space $\Bun_{G_\C}(\Sigma)$ of $G_\C$-bundles on $\Sigma$. In particular, the category of DQ modules is the category of $D$-modules $\cD(\Bun_{G_\C}(\Sigma))$. Even though the 2d A-model into the Hitchin moduli space is ill-defined (as the space is stacky and singular), the category of $D$-modules is well-defined.

As another example, one may consider a topological twist of the 6d $\cN=(1, 1)$ super Yang--Mills theory. The compactification of the theory on a hyperK\"ahler 4-manifold $X$ gives the 2d A-model into the moduli space $\Bun_{G_\C}(X)$ of $G_\C$-bundles on $X$ which is a holomorphic symplectic stack. As before, the ill-defined 2d A-model may be defined rigorously in terms of DQ modules on $\Bun_{G_\C}(X)$.

\subsection*{1d A-model and BV quantization}

The goal of the present paper is to develop an analogous picture by replacing the 2d A-model into a hyperK\"ahler manifold by the 1d A-model (a topological twist of supersymmetric mechanics) into a K\"ahler manifold equipped with a holomorphic superpotential. The procedure of deformation quantization of a symplectic manifold gets replaced with the procedure of \emph{Batalin--Vilkovisky quantization} of a $(-1)$-shifted symplectic manifold which we briefly recall now.

Let $A$ be a graded commutative Poisson algebra with the Poisson bracket of degree $1$ (a $\bP_0$ algebra). A BV operator is a second order differential operator $\Delta$ on $A$ with symbol the Poisson bracket and which is square-zero. A BV operator allows one to deform the differential $\d$ on $A$ to a square-zero differential $\d + \hbar \Delta$. This is parallel to the fact that in the usual deformation quantization the Poisson bracket controls the first-order deformation of the multiplication. We refer to \cref{sect:BV} for more details on BV quantization.

Consider a K\"ahler manifold $M$ equipped with a closed $(1, 0)$ form $\beta$. For instance, $\beta=\partial W$ for a holomorphic superpotential $W\colon M\rightarrow \C$. In this case the supersymmetric mechanics into $M$ admits $\cN=4$ supersymmetry. As in the 2-dimensional case, there is a $\CP^1\times\CP^1$-family of topological twists. In particular, there is a $\CP^1$ family $\sfQ_\hbar$ of supersymmetric twists interpolating between a ``B twist'' for $\hbar=0$ and an ``A twist'' for $\hbar=1$ (see \cref{fig:ABfamily}).

\begin{figure}[h]
\centering
\begin{tikzpicture}
  \shade[ball color = gray!40, opacity = 0.4] (0,0) circle (2cm);
  \draw (0,0) circle (2cm);
  \draw (-2,0) arc (180:360:2 and 0.6);
  \draw[dashed] (2,0) arc (0:180:2 and 0.6);
    \fill[fill=black] (0,2) circle (2pt) node[above]{$Q_B (\hbar = 0)$};
	\fill[fill=black] (1,-1) circle (2pt) node[right]{$Q_A$};
\end{tikzpicture}
\caption{The $\CP^1$ family of twists $\sfQ_\hbar$. At the special point $\hbar = 0$ is the B-twist $Q_B$. Generically we obtain an A twist $Q_A$.}
\label{fig:ABfamily}
\end{figure}
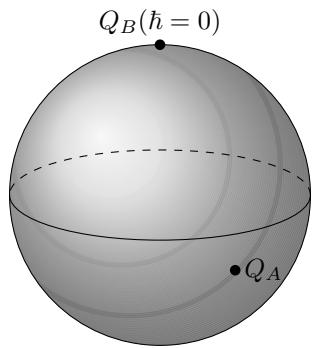

The Hilbert space in the B twist admits an explicit description in terms of (derived) functions on the zero locus $\beta^{-1}(0)$ of $\beta$ (more precisely, one has to consider half-densities on the zero locus). This algebra can be explicitly presented via a Koszul complex. From this presentation it is easy to see that it admits a Poisson bracket of degree $1$ given by the Schouten bracket. Our main observation (see \cref{thm:maintheorem}) is that the family $\sfQ_\hbar$ provides a BV quantization of the zero locus $\beta^{-1}(0)$.

This observation is useful to provide a mathematically rigorous definition of the Hilbert space in the A twist. For instance, it is often the case that one is forced to work with a potential on an infinite-dimensional manifold, so making sense of a Morse--Novikov complex requires hard analysis (for instance, to show that the differential squares to zero). In contrast, the critical locus is often finite-dimensional.

\subsection*{BV quantization and critical cohomology}

Our next observation is that one can in fact provide a topological model of the BV quantization. Namely, the works \cite{BBDJS,BBBBJ} have defined for any $(-1)$-shifted symplectic stack $\sfX$ equipped with an \emph{orientation data}, i.e. the choice of a square root of the canonical bundle $\det(\bL_\sfX)^{\frac 12}$, a perverse sheaf $P_\sfX$ on the underlying classical stack $t_0(\sfX)$. It is expected (see \cref{conj:perverseBV}) that the cohomology of this perverse sheaf gives a model of the BV quantization. Note that the (twisted) BV quantization of $(-1)$-shifted symplectic stacks has been constructed in \cite{PridhamBV}.

Let us give an example of the above. Consider a function $f\colon U\rightarrow \bA^1$ on a smooth affine variety and let $\sfX = \dCrit(f)$ be the \emph{derived critical locus} of $f$. It carries a $(-1)$-shifted symplectic scheme. It has a canonical twisted BV quantization given by the twisted de Rham complex, i.e. the complex of differential forms $\DR(U)(\!(\hbar)\!)$ equipped with the differential $\hbar \d + (\d f)\wedge(-)$. As explained in \cite{SabbahSaito} it is isomorphic to the cohomology of the sheaf of vanishing cycles of $f$ which is exactly what the perverse sheaf $P_\sfX$ is in this case.

The cohomology of the perverse sheaf $P_\sfX$ has been used to define cohomological Hall algebras of quivers with potentials \cite{KontsevichSoibelman}, categorified Donaldson--Thomas invariants \cite{BBBBJ} and complexified Floer homology \cite{AbouzaidManolescu}.

As another example, consider a quasi-smooth derived scheme $\sfX$, so that its underlying scheme $X=t_0(\sfX)$ is a local complete intersection. If $X$ is not smooth, the shifted dualizing complex $\omega_X[-\dim \sfX]$ is not perverse. However, one can define the scheme of singularities $\Sing(X) = t_0(\T^*[-1] \sfX)\xrightarrow{\pi} X$ and the perverse sheaf $P_{\T^*[-1] \sfX}$ on $\Sing(X)$ is such that $\pi_* P_{\T^*[-1]\sfX}\cong \omega_X[-\dim \sfX]$ (see \cite{Kinjo}).

\subsection*{Examples}

We provide many examples where the above ideas give a mathematically rigorous proposal for the space of states in a physical TQFT. In each case we perform the following:
\begin{enumerate}
\item Compute a compactification of a $d$-dimensional TQFT on a $(d-1)$-manifold and rewrite it in terms of (gauged) supersymmetric mechanics for some closed $(1, 0)$-form $\beta$ on an infinite-dimensional K\"ahler manifold $M$ (equipped with a Hamiltonian action).
\item Present $\beta^{-1}(0)$ as a \emph{finite-dimensional} $(-1)$-shifted symplectic stack $\sfX$.
\item Define the space of states to be the cohomology of the perverse sheaf $P_\sfX$.
\end{enumerate}

We observe that in several examples there is a natural grading operator on the space of states manifested in an extra term in the action which is responsible for a ``categorification'' of the corresponding dimensionally reduced theory:
\begin{itemize}
\item The grading in the space of states of the 3d A-model categorifying the Novikov parameter $q$ of the 2d A-model.
\item The grading in the space of states in the Haydys--Witten TQFT categorifying the instanton counting parameter $q$ of the GL twisted 4d $\cN=4$ super Yang--Mills theory.
\end{itemize}

Let us list the relevant examples:
\begin{itemize}
\item \textbf{2d A-model}. Consider a hyperK\"ahler manifold $M$ together with two $I$-holomorphic Lagrangians $L_0, L_1$. The derived intersection $L_0\times_M L_1$ admits a natural $(-1)$-shifted symplectic structure \cite{PTVV}. The space of states in the 2d A-model into $(M, \omega_J)$ compactified on the interval with boundary conditions specified by the Lagrangians $L_0, L_1$ is the cohomology
\[\R\Gamma(L_0\cap L_1, P_{L_0\times_M L_1}).\]
This complex was proposed in \cite{BBDJS,Bussi} as an algebraic model for the Hom spaces in the Fukaya category of a holomorphic symplectic manifold.

\item \textbf{3d A-model}. Consider a hyperK\"ahler manifold $X$ equipped with an isometric $\U(1)$-action rotating the complex structures with fixed points $\pm I$ and acting on $\Omega_I$ with weight $n\neq 0$. The 3d $\cN=4$ supersymmetric $\sigma$-model into $X$ admits a topological twist which gives rise to a 3d TQFT known as the 3d A-model. In the case $X$ is a quaternionic vector space (possibly with an action of a group), this theory has been studied recently in \cite{KapustinVyas,BMG,DGGH,Nakajima} in relation to Coulomb branches.

For a Riemann surface $\Sigma$, the space of sections $\Sect(\Sigma, \fX_\Sigma)$ of the bundle of hyperK\"ahler manifolds $\fX_\Sigma=K^{1/n}_\Sigma\times^{\C^\times} X$ over $\Sigma$ admits a $(-1)$-shifted symplectic structure \cite{GinzburgRozenblyum}. The space of states in the 3d A-model on $\Sigma$ is the cohomology
\[\R\Gamma(\Sect(\Sigma, \fX_\Sigma), P_{\Sect(\Sigma, \fX_\Sigma)}).\]
It admits a grading given by the symplectic volume of the section with respect to $\omega_I$. In the case $X = \T^* Y$, we simply get the Borel--Moore homology
\[\rH^{\BM}_\bullet(\Map(\Sigma, Y))\]
of the moduli space of $I$-holomorphic maps, an answer proposed in \cite{Nakajima}.

\item \textbf{GL twist of the 4d $\cN=4$ super Yang--Mills theory}. The 4d $\cN=4$ super Yang--Mills theory admits a topological twist, the \emph{GL twist}, studied in \cite{KapustinWitten}. It gives rise to a family of 4d TQFTs parametrized by $\Psi\in\CP^1$. Let $G_\C$ be the complexified gauge group. The derived moduli space $\R\Loc_{G_\C}(M)$ of $G_\C$ local systems on a 3-manifold $M$ is a $(-1)$-shifted symplectic stack. The space of states in the GL twist on $M$ for generic $\Psi$ is the cohomology
\[\R\Gamma(\Loc_{G_\C}(M), P_{\R\Loc_{G_\C}(M)})\]
of the perverse sheaf. This may be viewed as a complexified analog of the instanton Floer homology (which gives the space of states in the topologically twisted 4d $\cN=2$ super Yang--Mills theory) and was introduced in \cite{AbouzaidManolescu}.

\item \textbf{Haydys--Witten twist of the 5d $\cN=2$ super Yang--Mills theory}. The 5d $\cN=2$ super Yang--Mills theory admits a topological twist, the \emph{Haydys--Witten twist}, studied in \cite{WittenFivebranes}. Consider the moduli stack $\Bun_{G_\C}(M)$ of principal $G_\C$-bundles on a complex projective surface $M$. The space of states in the Haydys--Witten twist on $M$ is the Borel--Moore homology
\[\rH^{\BM}_\bullet(\Bun_{G_\C}(M))\]
of the moduli stack of $G_\C$-bundles. It admits a grading given by the second Chern character $\int_M \ch_2(P)$ of the $G_\C$-bundle.

\item \textbf{Topological twist of the 7d $\cN=1$ super Yang--Mills theory}. The 7d $\cN=1$ super Yang--Mills theory admits a topological twist on $G_2$ manifolds \cite{AOS}. Consider the derived moduli stack $\R\Bun_{G_\C}(X)$ of $G_\C$-bundles on a projective Calabi-Yau 3-fold $X$. It admits a natural $(-1)$-shifted symplectic structure \cite{PTVV}. The space of states in the topological twist on $X$ is the cohomology
\[\R\Gamma(\Bun_{G_\C}(X), P_{\R\Bun_{G_\C}(X)})\]
of the perverse sheaf on the moduli stack of $G_\C$-bundles. It admits a natural grading by the second Chern character $\int_X \ch_2(P)\wedge \omega$ of the $G_\C$-bundle.
\end{itemize}

We remark on a `chiralization' of the model for BV quantization that we have just proposed.
From the point of view of 2d $\cN=(2,2)$ supersymmetry, the A-model sits in a $\CP^1$-family of twists which at the special point $0 \in \CP^1$ is not a topological twist but a {\em holomorphic} one---this is the so-called half-twist \cite{Kapustin:2005pt}. 
To the data of a K\"ahler manifold $M$ equipped with a closed $(1,0)$-form $\beta$ one can consider the 2d $\cN=(2,2)$ supersymmetric $\sigma$-model into $M$. 
The half-twist produces the holomorphic $\sigma$-model into the derived zero locus $\beta^{-1}(0)$.
The Hilbert space is equipped with a chiral version of a shifted Poisson bracket and the $\CP^1$-family exhibits a chiral BV quantization. 
Upon applying the Zhu algebra construction this recovers the situation above. 

\subsection*{Organization of the paper}

The paper is organized as follows. In \cref{sect:BV} we recall the notion of Batalin--Vilkovisky quantization. We present the BV quantization of odd symplectic manifolds and $(-1)$-shifted symplectic stacks in parallel to emphasize the similarities and differences. For instance, for odd symplectic manifolds there is a canonical line of semidensities which admits a canonical BV operator. In the case of $(-1)$-shifted symplectic stacks the square root of the canonical bundle might not exist and, even if it exists, is not canonical: it is the orientation data. We also conjecture a relationship between the BV quantization and the sheaf of vanishing cycles (\cref{conj:perverseBV}) and describe the latter sheaf in some examples.

In \cref{sect:SUSYmechanics} we describe $\cN=2$ supersymmetric mechanics into a Riemannian manifold equipped with a potential (or a closed one-form $\alpha$). In this section we work on the level of phase spaces and explicitly write down the Hamiltonians for the supersymmetry action. We also show that if the target is K\"ahler and the one-form $\alpha$ is the real part of a closed $(1, 0)$ form $\beta$, the supersymmetry is enhanced to $\cN=4$. By considering a geometric quantization of the phase space we arrive at a description of the Hilbert space of supersymmetric mechanics. In the $\cN=4$ case we describe a family of supercharges $\sfQ_\hbar$ interpolating between the B and the A twist and show that it provides a BV quantization of the B twist (see \cref{thm:maintheorem}). This allows us to formulate a precise proposal for the space of states in the A twist in terms of the cohomology of the perverse sheaf (see \cref{mainproposal}).

In \cref{sect:gaugedmechanics} we describe a gauged version of supersymmetric mechanics and we write down the bosonic actions for both the $\cN=2$ and $\cN=4$ versions. We also formulate a precise proposal for the space of states in the A twist in the presence of gauge symmetries (see \cref{mainproposalstack}).

In \cref{sect:bundlecompactification} we present several results about principal bundles on product manifolds and principal bundles for groups $\Map(N, G)$. These are used in the future sections when we describe compactifications of $G$-gauge theories and rewrite them in terms of $\Map(N, G)$-gauged mechanics.

Finally, in \cref{sect:2dA,sect:3dA,sect:GL,sect:HaydysWitten,sect:G2monopoles} we present the main applications of these ideas which allow us to give a mathematically rigorous definition of the spaces of states in 2d A-model, 3d A-model, the GL twist of the 4d $\cN=4$ super Yang--Mills theory, the Haydys--Witten twist of the 5d $\cN=2$ super Yang--Mills theory and the topological twist of the 7d $\cN=1$ super Yang--Mills theory.

\subsection*{Acknowledgements}
We would like to thank Tudor Dimofte and Sam Gunningham for useful conversations.

\section{Batalin--Vilkovisky quantization}
\label{sect:BV}

Throughout we fix a field $k$. In this section we recall some results on Batalin--Vilkovisky quantization for odd symplectic supermanifolds and $(-1)$-shifted symplectic schemes.

\subsection{Supermanifolds}

Let $V$ be a super $k$-vector space equipped with an odd symplectic structure $\omega$. Consider the $\Z/2$-graded complex $\wedge^\bullet(V^*)$ with the differential $\omega\wedge(-)$. As shown in \cite[Section 3.7]{Manin} and \cite{Severa}, the cohomology of $\wedge^\bullet(V^*)$ is one-dimensional and concentrated in a single degree.

\begin{defn}
Let $(V, \omega)$ be an odd symplectic super vector space. The \defterm{line of semidensities of $V$} is $\Ber^{\frac12}(V) = \rH^\bullet(\wedge^\bullet(V^*), \omega\wedge(-))$.
\end{defn}

The following is a reformulation of \cite[Proposition 3.7]{Manin}.

\begin{defn}
Let $W$ be a super vector space. The \defterm{Berezinian line of $W$} is $\Ber(W) = \Ber^{\frac12}(\Pi\T^* W)$.
\end{defn}

The following is shown in \cite{Severa}.

\begin{lm}
Let $(V, \omega)$ be an odd symplectic super vector space. Then there is a canonical isomorphism of lines $(\Ber^{\frac12}(V))^{\otimes 2}\cong \Ber(V)$.
\end{lm}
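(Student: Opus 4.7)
The plan is to construct the isomorphism in two steps: first, a K\"unneth decomposition of the Koszul complex that reduces the square of $\Ber^{\frac12}(V)$ to $\Ber^{\frac12}$ of a direct sum of two copies of $V$; second, a canonical symplectic identification of $V \oplus V$ (with an appropriate direct-sum form) with $\Pi \T^* V$.

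For the first step, I would observe that for any two odd symplectic super vector spaces $(V_1, \omega_1)$ and $(V_2, \omega_2)$, the super-tensor product decomposition
$$\wedge^\bu(V_1 \oplus V_2)^* \cong \wedge^\bu V_1^* \otimes \wedge^\bu V_2^*$$
carries the Koszul differential $(\omega_1 \oplus \omega_2)\wedge(-)$ to the tensor product of the two individual differentials $\omega_i \wedge (-)$. Since $k$ is a field, K\"unneth yields a canonical isomorphism of lines
$$\Ber^{\frac12}(V_1 \oplus V_2) \cong \Ber^{\frac12}(V_1) \otimes \Ber^{\frac12}(V_2).$$
I would apply this with $V_1 = (V, \omega)$ and $V_2 = (V, -\omega)$, noting that the complexes $(\wedge^\bu V^*,\, \pm\omega \wedge (-))$ share the same kernel and image, so their cohomology lines are literally identified. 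This gives
$$\Ber^{\frac12}(V)^{\otimes 2} \cong \Ber^{\frac12}\bigl(V \oplus V,\, \omega \oplus (-\omega)\bigr).$$

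For the second step, I would construct a canonical isomorphism of odd symplectic super vector spaces $(V \oplus V,\, \omega \oplus (-\omega)) \xrightarrow{\sim} \Pi \T^* V$. The key point is that the diagonal and antidiagonal in $V \oplus V$ are transverse Lagrangian subspaces for the form $\omega \oplus (-\omega)$, and the odd symplectic duality $\omega^\flat \colon V \xrightarrow{\sim} \Pi V^*$ then identifies this Lagrangian decomposition with the canonical one $\Pi \T^* V = V \oplus \Pi V^*$. Explicitly, the map is
$$(v, w) \longmapsto \bigl(\tfrac{v+w}{2},\, \omega^\flat(\tfrac{v-w}{2})\bigr),$$
and a direct computation using the super-symmetry of the odd form $\omega$ verifies it intertwines the two symplectic structures. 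Combining the two steps with the definition $\Ber(V) \define \Ber^{\frac12}(\Pi \T^* V)$ produces the desired canonical isomorphism.

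The main obstacle I expect is careful super-sign bookkeeping throughout---particularly the signs arising from the parity shift $\Pi$ in the canonical odd symplectic form on $\Pi \T^* V$ and those appearing in the super-K\"unneth map. Each individual verification is routine, but the cumulative sign tracking requires care to confirm that the resulting isomorphism of lines is truly canonical and independent of auxiliary choices (such as the sign convention for $\omega^\flat$ or the direction $\omega \leftrightarrow -\omega$ of the Kunneth split).
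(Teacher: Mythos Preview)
The paper does not give a proof of this lemma; it only cites \v{S}evera. So there is no in-paper argument to compare against.

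Your argument is correct. The K\"unneth step is straightforward, and the identification of $(V \oplus V,\, \omega \oplus (-\omega))$ with $\Pi \T^* V$ via the diagonal/antidiagonal Lagrangian splitting is a clean device: it avoids any auxiliary choice inside $V$ itself. The sign bookkeeping you flag is real but routine; note in particular that the identification $\Ber^{\frac12}(V,\omega)\cong\Ber^{\frac12}(V,-\omega)$ is literally the identity on the underlying graded vector space (the two differentials differ by a sign and hence have identical kernels and images), so no ambiguity enters at that step.

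A more direct route---and closer to what one finds in the cited reference---is to choose a Lagrangian $L \subset V$, write $V \cong L \oplus \Pi L^*$, and compute both sides explicitly in terms of $\Ber(L)$. That approach is shorter but then requires verifying independence of the choice of $L$ to claim canonicity. Your doubling trick manufactures the Lagrangian canonically inside $V\oplus V$, so that check is built in; this is what your method buys.
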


The above definitions translate to the global context. Recall the notion of a real or complex supermanifold as in \cite[Chapter 4]{Manin}. For any supermanifold $\sfX$ the supermanifold $\Pi \T^* \sfX$ admits a canonical odd symplectic structure. 
We may thus define the Berezinian line bundle $\Ber_\sfX\rightarrow \sfX$.
If $(\sfX, \omega)$ itself is an odd symplectic supermanifold, the Berezinian admits a canonical square root $\Ber_\sfX^{\frac12}\rightarrow \sfX$ given by the line bundle of semidensities.

\subsection{BV operators}

\begin{defn}
Let $A$ be a commutative $k$-algebra and $M$ an $A$-module. The subspace $\D^{\leq k}(M)\subset \End_k(M)$ of \defterm{differential operators of order $k$} is defined inductively by declaring $\D^{\leq 0}(M) = \End_A(M)$ and $D\in\D^{\leq k}(M)$ if, and only if, $[D, f]\in\D^{\leq(k-1)}(M)$ for every $f\in A$. The algebra $\D(M)$ of \defterm{differential operators} is the union $\D(M) = \bigcup_k \D^{\leq k}(M)$.
\end{defn}

Given a differential operator $D\in\D^{\leq k}(M)$ of order $k$, we may define its symbol $\sigma(D)\colon \Sym^k_A(\Omega^1_A)\otimes_A M\rightarrow M$. We will not need a general definition and will only use the case of second order differential operators.

\begin{defn}
Let $A$ be a $\Z$-graded commutative algebra with a Poisson bracket $\{-, -\}$ of degree $1$ and $M$ is a graded $A$-module. A \defterm{Batalin--Vilkovisky (BV) operator on $M$} is a degree $1$ square-zero $k$-linear endomorphism $\Delta\colon M\rightarrow M$ satisfying
\[\Delta(fgm) = \Delta(f)gm + (-1)^{|f|} f\Delta(g) m - (-1)^{|f|+|g|} fg \Delta(m) + \{f, g\}m.\]
\end{defn}

In other words, a BV operator on $M$ is a degree 1 square-zero second-order differential operator on $M$ whose symbol is given by the Poisson bracket. The same definition works for $\Z/2$-graded algebras and modules.

For an odd symplectic supermanifold $(\sfX, \omega)$, the algebra of smooth functions $C^\infty(X)$ carries a canonical odd Poisson bracket. The following is shown in \cite{Khudaverdian,Severa}.

\begin{prop}
Let $(\sfX,\omega)$ be an odd symplectic supermanifold. There is a canonical BV operator $\Delta$ on $C^\infty(\sfX; \Ber^{\frac12}_{\sfX})$.
\label{prop:KhBVquantization}
\end{prop}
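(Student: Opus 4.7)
The plan is to prove the statement by first giving a construction in Darboux coordinates, verifying the required algebraic properties pointwise, and then arguing invariance under symplectic coordinate changes.

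First I would invoke the Darboux theorem for odd symplectic supermanifolds: around any point one can find local coordinates $(x^i, \xi_i)$ with $|x^i| + |\xi_i| = 1 \pmod 2$ in which $\omega = \sum_i \d x^i \wedge \d \xi_i$. In such a chart, the lemma identifying $(\Ber^{\frac12}_\sfX)^{\otimes 2}$ with $\Ber_\sfX$ canonically trivializes the semidensity line by the square root of the coordinate Berezinian volume, so a semidensity takes the form $\rho \cdot \sqrt{[\d x \, \d \xi]}$. I would define the candidate BV operator by the familiar formula
\[
\Delta\bigl(\rho \cdot \sqrt{[\d x \, \d\xi]}\bigr) = \Bigl(\sum_i (-1)^{|x^i|} \partial_{x^i} \partial_{\xi_i} \rho\Bigr) \cdot \sqrt{[\d x \, \d\xi]}.
\]
A direct computation then shows $\Delta^2 = 0$ and that its symbol is the odd Poisson bracket associated with $\omega$, so locally this is a BV operator in the sense of the preceding definition.

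The main content of the proposition is to show that this local expression is independent of the chosen Darboux chart, so that it glues to a globally defined operator on $C^\infty(\sfX; \Ber^{\frac12}_\sfX)$. For this I would use the fact that any two Darboux charts are related by a (germ of a) symplectomorphism, and any symplectomorphism is generated by the time-one flows of Hamiltonian vector fields $X_H = \{H, -\}$. Thus it suffices to check that the Lie-derivative of $\Delta$ along any Hamiltonian vector field (acting on semidensities via the natural action of vector fields) vanishes. This is the step that fails for an arbitrary line bundle but succeeds for the Berezinian-square-root, because the anomalous second-order terms produced by integration by parts in a change of coordinates are precisely absorbed by the factor $\sqrt{[\d x\, \d\xi]}$; equivalently, Hamiltonian vector fields preserve the coordinate Berezinian up to a total divergence that is eliminated by the square root.

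Alternatively, and more conceptually, I would appeal to Ševera's construction of $\Ber^{\frac12}$ via the cohomology of $(\wedge^\bullet V^*, \omega\wedge -)$ as recalled in the paper: the differential $\omega \wedge (-)$ on the odd cotangent bundle, combined with the canonical divergence operator on polyvector fields, produces $\Delta$ directly on semidensities without reference to coordinates, and the check that it squares to zero and has the correct symbol then follows from the algebraic properties of these two operations. The main obstacle in either approach is the same: controlling the second-order correction terms under symplectomorphisms, which is the essential reason why the canonical BV operator lives on semidensities rather than on functions or on any other power of the Berezinian.
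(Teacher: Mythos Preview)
The paper does not give its own proof of this proposition; it simply records the statement and attributes it to Khudaverdian and \v{S}evera. Your two alternatives are precisely the approaches in those references: the local Darboux definition followed by a check of invariance under odd symplectomorphisms is Khudaverdian's argument, and the coordinate-free construction via the acyclic complex $(\wedge^\bullet V^*,\ \omega\wedge -)$ is \v{S}evera's.

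One small point worth tightening: the assertion that ``any symplectomorphism is generated by time-one flows of Hamiltonian vector fields'' is not literally true globally, and Khudaverdian does not argue this way---he computes directly how the second-order operator transforms under an arbitrary canonical transformation and shows the anomalous terms are cancelled by the transformation law of $\sqrt{[\d x\,\d\xi]}$. Your reduction to Hamiltonian flows is still a valid strategy provided you restrict to germs of symplectomorphisms at a point (which suffices for gluing charts), since such germs lie in the identity component and can be reached by a finite composition of Hamiltonian flows; you should make that restriction explicit if you pursue the first route.
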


\subsection{BV quantization}
\label{sect:BVquantization}

The following definition is a version of \cite[Definition 2.4.1.1]{CostelloGwilliam2}.

\begin{defn}
Let $(A, \d_A)$ be a differential graded commutative algebra equipped equipped with a $\d$-closed Poisson bracket of degree $1$ and $(M, \d_M)$ a differential graded $A$-module. A \defterm{Batalin--Vilkovisky (BV) quantization of $M$} is a degree $1$ square-zero $k[\![\hbar]\!]$-linear differential operator $\Delta_\hbar = \d_M + \sum_{n=1}^\infty \Delta^{(n)} \hbar^n$ on $M[\![\hbar]\!]$ satisfying the following conditions:
\begin{itemize}
\item $\Delta^{(1)}$ has order $2$ and symbol the Poisson bracket on $A$.

\item $\Delta^{(n)}$ for $n\geq 2$ has order $n$.
\end{itemize}
\label{def:BVquantization}
\end{defn}

The following example explains a relationship between BV quantizations and BV operators.

\begin{example}
Let $A$ be a graded commutative algebra equipped with a degree $1$ Poisson bracket and $M$ a graded $A$-module equipped with a BV operator $\Delta$. Then $\Delta_\hbar = \hbar\Delta$ is a BV quantization of $M$.
\end{example}

\Cref{def:BVquantization} can be generalized to the homotopical context where we consider commutative dg algebras equipped with a homotopy $\bP_0$-structure (i.e. a degree $1$ Poisson bracket satisfying the Jacobi identity up to coherent homotopy), see \cite[Definition 1.12]{PridhamBV}. In this definition we require $\Delta^{(n)}$ to have order $n+1$ such that the total symbol of $\Delta_\hbar$ recovers the homotopy $\bP_0$-structure on $A$.

Let us describe two examples relevant for the future sections.

\begin{example}
Let $X$ be a smooth manifold and consider the graded commutative algebra
\[A = \Gamma(X, \Sym(\T_X[1]))\]
of polyvector fields. The Schouten bracket endows it with a degree $1$ Poisson bracket. There is a natural $A$-module structure on the space of differential forms
\[M = \Gamma(X, \Sym(\T^*_X[-1]))\]
given by contraction. Then the de Rham differential on $M$ is a second-order differential operator with symbol the Schouten bracket \cite{WittenBV}. In other words, it provides a BV quantization of the $A$-module $M$.
\label{ex:realdifferentialformsBV}
\end{example}

\begin{example}
Let $X$ be a complex manifold and consider the graded commutative algebra
\[A = \Gamma(X, \Sym(\Omega^{0, 1}_X[-1])\otimes \Sym(\T^{1, 0}_X[1]))\]
equipped with the differential $\Bar{\partial}$. As before, the Schouten bracket gives a degree $1$ Poisson bracket on $A$. Consider the $A$-module
\[M = \Gamma(X, \Sym(\Omega^{0, 1}_X[-1])\otimes \Sym(\Omega^{1, 0}_X[-1]))\]
of differential forms. Then $\overline{\partial} + \hbar\partial$ provides a BV quantization of $M$.
\label{ex:complexdifferentialformsBV}
\end{example}

\subsection{BV quantization of shifted symplectic stacks}

Let us now explain the notion of BV quantization of shifted symplectic stacks. Recall that \cite{PTVV} have introduced the notion of a $(-1)$-shifted symplectic stack, i.e. a derived Artin stack $\sfX$ equipped with a (homotopy) symplectic structure $\omega$ of degree $(-1)$. The following is \cite[Theorem 3.2.4]{CPTVV} and \cite[Theorem 3.33]{PridhamPoisson}.

\begin{prop}
Let $\sfX$ be a $(-1)$-shifted symplectic stack. Then there is a canonical homotopy $\bP_0$ structure on the commutative dg algebra of global functions $\R\Gamma(\sfX, \cO)$.
\end{prop}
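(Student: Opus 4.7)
The plan is to deduce this as the $n=-1$ case of the general comparison theorem between non-degenerate $n$-shifted Poisson structures and $n$-shifted symplectic structures on derived Artin stacks. Recall that a (homotopy) $n$-shifted Poisson structure on $\sfX$ is equivalent to a homotopy $\bP_{n+1}$-algebra structure on $\R\Gamma(\sfX, \cO)$ (in the sense that the bracket has degree $-n$ and satisfies the Leibniz and Jacobi identities up to coherent homotopy). For $n=-1$ this is exactly a homotopy $\bP_0$ structure, so it suffices to produce a canonical non-degenerate $(-1)$-shifted Poisson structure from the $(-1)$-shifted symplectic structure $\omega$.

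I would first treat the affine case $\sfX = \Spec A$ for a cdga $A$. The symplectic form gives a closed element $\omega \in |\wedge^2 \bL_A|^{-1}$ whose underlying $2$-form is non-degenerate in the sense that the contraction $\omega^\flat\colon \T_A\rightarrow \bL_A[-1]$ is a quasi-isomorphism. Inverting this quasi-isomorphism and wedging produces a bivector $\pi \in |\wedge^2 \T_A|^{1}$, and the closedness of $\omega$ translates, via the Tate--Hochschild type computation of [Melani--Safronov] or the direct argument in [CPTVV, \S 1.4], into the homotopy Jacobi identity for the bracket $\{f, g\} := \langle \pi, \d f \wedge \d g\rangle$ of degree $+1$. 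Concretely, one uses the contraction $\iota$ and Lie derivative $\cL$ on polyvectors to transfer the closedness equation $\d_{\DR}\omega = 0$ into the Maurer--Cartan equation $[\pi, \pi] = 0$ for the shifted Schouten bracket, up to a chain of homotopies recorded in the definition of a homotopy $\bP_0$-structure.

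To globalize from the affine case to an arbitrary derived Artin stack $\sfX$, I would follow the formal localization strategy of [CPTVV]: cover $\sfX$ by derived affines, use that shifted symplectic and shifted Poisson structures each form an $\infty$-functor with descent along smooth maps, and check that the assignment $(\sfX, \omega) \rightsquigarrow (\R\Gamma(\sfX,\cO), \pi)$ is functorial so that it glues. Pridham's approach in [PridhamPoisson, Theorem 3.33] instead models both sides as certain $\Z$-graded mixed cdgas and proves an equivalence of moduli at the level of filtered dg Lie algebras, which avoids the full machinery of formal groupoids but requires the same non-degeneracy lemma at each stage.

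The genuine obstacle is the precise comparison between the closedness data on $\omega$ (a sequence of higher forms trivializing $\d_{\DR}\omega$) and the homotopies encoding coherence of the Jacobi identity: in the affine $1$-truncated picture one only sees $[\pi, \pi] = 0$, but at the level of cdgas one must produce an entire tower of operations $\Delta^{(n)}$ of order $n+1$ whose symbols recover $\pi$, and show that this tower is canonically determined by $\omega$ up to contractible choice. This is where the heavy lifting of [CPTVV, Theorem 3.2.4] (or alternatively [PridhamPoisson]) is used; once in hand, restricting to the underlying $\bP_0$-structure (i.e., just the $\Delta^{(1)}$ piece) gives exactly the canonical degree $+1$ Poisson bracket on $\R\Gamma(\sfX,\cO)$ claimed in the proposition.
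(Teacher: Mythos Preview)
Your approach is essentially the same as the paper's: the paper does not give an independent proof but simply cites \cite[Theorem 3.2.4]{CPTVV} and \cite[Theorem 3.33]{PridhamPoisson}, exactly the two results you invoke and whose strategies you sketch. One small terminological slip: in your last paragraph the tower of operations $\Delta^{(n)}$ of order $n+1$ belongs to the definition of a BV \emph{quantization} (Definition~\ref{def:BVquantization} in the paper), not to the homotopy $\bP_0$ structure itself; the latter is encoded by the bivector $\pi$ together with the higher homotopies witnessing Jacobi, which is what the closedness data of $\omega$ matches under the CPTVV/Pridham comparison.
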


In particular, for any line bundle $\cL$ on a $(-1)$-shifted symplectic stack $\sfX$ we may define a BV quantization of $\cL$ using \cref{def:BVquantization} applied to the $\R\Gamma(\sfX, \cO)$-module $\R\Gamma(\sfX, \cL)$.

Let us recall that any $(-1)$-shifted symplectic scheme $\sfX$ is quasi-smooth; in particular, Gorenstein. In other words, the dualizing sheaf $\omega_X$ is a line bundle. The following result follows from \cite[Proposition 4.6]{PridhamBV}.

\begin{prop}
Let $\sfX$ be a $(-1)$-shifted symplectic scheme equipped with a square root $\omega_X^{\frac12}$ of the dualizing sheaf. Then there is a canonical BV quantization of $\omega_X^{\frac12}$.
\label{prop:canonicalBVquantization}
\end{prop}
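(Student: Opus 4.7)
The plan is to construct the BV quantization locally on a Darboux chart and to glue using the orientation datum $\omega_X^{\frac12}$. By the derived Darboux theorem, any $(-1)$-shifted symplectic derived scheme $\sfX$ is \'etale-locally modelled on a derived critical locus $\dCrit(f)$ for a regular function $f$ on a smooth affine scheme $U$, with its canonical $(-1)$-shifted symplectic structure. On such a chart the cdga $\R\Gamma(\sfX,\cO)$ is quasi-isomorphic to the Koszul polyvector field complex $A_U = \bigl(\Gamma(U,\Sym(\T_U[1])),\,\iota_{\d f}\bigr)$ with its Schouten bracket, and a distinguished square root of the dualizing sheaf is modelled by the de Rham module $M_U = \bigl(\Gamma(U,\Sym(\Omega^1_U[-1])),\,(\d f)\wedge(-)\bigr)$, viewed as an $A_U$-module via interior contraction.

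On this local model the BV quantization is explicit. The de Rham differential $\d$ on forms is a second-order differential operator whose symbol is the Schouten bracket, which is the algebraic avatar of \cref{ex:realdifferentialformsBV}. Consequently,
\begin{equation*}
\Delta_\hbar \;=\; (\d f)\wedge(-) \,+\, \hbar\,\d
\end{equation*}
defines a square-zero $k[\![\hbar]\!]$-linear operator on $M_U[\![\hbar]\!]$ whose coefficient of $\hbar$ has order $2$ with symbol the Schouten bracket on $A_U$ and whose higher $\hbar$-terms vanish; thus it is a BV quantization of $M_U$ in the sense of \cref{def:BVquantization}. This is the Sabbah--Saito twisted de Rham complex computing vanishing cycles and is the prototype around which the general construction is built.

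To globalise, the local modules $M_U$ on overlaps of two Darboux charts are identified only up to multiplication by a square root of the Jacobian of the coordinate change, and the local operators $\Delta_\hbar$ are intertwined only after choosing such a square root coherently. Following \cite{PridhamBV}, one organises the space of BV quantizations of $\sfX$ as a simplicial Maurer--Cartan moduli and identifies the leading obstruction to descent with the class of $\omega_X$ in the Picard group modulo squares; the datum $\omega_X^{\frac12}$ is exactly a coherent trivialisation of this class, and a residual obstruction computation shows that all higher ambiguities are trivial. Hence the local $\Delta_\hbar$ assemble into a global BV quantization, canonical up to contractible choice.

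The main obstacle is precisely this gluing step. On a fixed Darboux chart the BV operator $\hbar\d$ is tautologically canonical, so the entire content of the proposition lies in controlling how it transforms under change of chart. The relevant ambiguity is a half-Jacobian $\Z/2$-gerbe on $\sfX$, and the orientation datum is what kills it; the technical heart of the argument is the Maurer--Cartan / obstruction calculation in the dgla of differential operators on $M_U$ showing that once the half-Jacobian cocycle is trivialised no further obstructions arise, so that the canonical local quantizations descend.
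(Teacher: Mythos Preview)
The paper does not actually prove this proposition: it is stated as a direct consequence of \cite[Proposition 4.6]{PridhamBV}, with no argument given. Your proposal is therefore already more detailed than what the paper provides.

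Your outline is a correct high-level strategy and is broadly consistent with how the result is obtained: local Darboux models as derived critical loci, the twisted de Rham complex $\bigl((\d f)\wedge(-)+\hbar\,\d\bigr)$ as the local BV quantization, and gluing controlled by the orientation datum. You correctly identify that the entire content lies in the gluing step and that the obstruction is a $\Z/2$-gerbe trivialised by $\omega_X^{1/2}$. One caveat: Pridham's actual argument in \cite{PridhamBV} is not organised as ``Darboux charts plus descent'' in quite this way; he works more intrinsically with filtered dg operads and shows directly that the space of BV quantizations is a torsor for the relevant Picard groupoid, so that the choice of $\omega_X^{1/2}$ pins down a point. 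Your Darboux-based description is a valid reformulation of the same underlying obstruction calculus, but if you want to present this as a self-contained proof rather than a sketch you would need to supply the homotopy-coherent descent argument for the local $\Delta_\hbar$'s, which is genuinely delicate (the local models and their comparison maps are only defined up to coherent homotopy, not strictly). As written, your plan defers exactly this point back to \cite{PridhamBV}, which is what the paper itself does.
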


\begin{remark}
As opposed to the case of odd symplectic supermanifolds, $(-1)$-shifted symplectic schemes do not have a canonical choice of the square root $\omega_X^{\frac12}$. Apart from this difference, \cref{prop:canonicalBVquantization} is an analog of \cref{prop:KhBVquantization} in the setting of shifted symplectic schemes.
\end{remark}

\subsection{Perverse sheaf on shifted symplectic stacks}
\label{sect:perversesheaf}

Recall that for a $(-1)$-shifted symplectic scheme $\sfX$ the dualizing sheaf $\omega_\sfX$ is a line bundle. In fact, by \cite[Lemma 3.7]{HalpernLeistner} one has $\omega_\sfX\cong \det(\bL_\sfX)$, the determinant of the cotangent complex of $\sfX$. A $(-1)$-shifted symplectic Artin stack $\sfX$ is no longer quasi-smooth. Nevertheless, it turns out the correct replacement for the dualizing complex $\omega_\sfX$ is the canonical bundle, i.e. the determinant of the cotangent complex $\det(\bL_\sfX)$. The following notion was introduced in \cite{BBBBJ}.

\begin{defn}
Let $(\sfX, \omega)$ be a $(-1)$-shifted symplectic Artin stack. An \defterm{orientation data} on $\sfX$ is the choice of a square root $\det(\bL_\sfX)^{\frac12}$.
\end{defn}

The orientation data was used in \cite[Theorem 1.3]{BBBBJ} to construct a canonical perverse sheaf associated to a $(-1)$-shifted symplectic stack.

\begin{prop}
Let $(\sfX, \omega)$ be a $(-1)$-shifted symplectic Artin stack equipped with an orientation data. Then there is a canonical perverse sheaf $P_\sfX$ of $k$-vector spaces on the underlying classical stack $t_0(\sfX)$.
\end{prop}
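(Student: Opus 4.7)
The plan follows the strategy of \cite{BBDJS,BBBBJ}, proceeding first for $(-1)$-shifted symplectic derived schemes and then extending to Artin stacks by smooth descent.

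First I would invoke the ``shifted Darboux theorem'' of Brav--Bussi--Joyce: any $(-1)$-shifted symplectic derived scheme is, étale-locally around any classical point, equivalent as a $(-1)$-shifted symplectic derived scheme to a derived critical locus $\dCrit(f)$ of a regular function $f\colon U\to \bA^1$ on a smooth affine scheme $U$, with its standard symplectic form. On such a chart, the classical truncation is the ordinary critical locus $\mathrm{Crit}(f)\subset U$, and there is a canonical candidate for the perverse sheaf, namely the sheaf of vanishing cycles $\phi_f(\mathbb{Q}_U[\dim U])$, which is indeed perverse and supported on $\mathrm{Crit}(f)$.

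The bulk of the proof is then a gluing argument. Given two Darboux charts $\dCrit(f_1)$ and $\dCrit(f_2)$ presenting the same $(-1)$-shifted symplectic scheme near a classical point, I would use the stabilization property of Darboux charts (any two are related, after adding a non-degenerate quadratic form, by an étale change of coordinates) together with the Thom--Sebastiani isomorphism for vanishing cycles to show that the two local perverse sheaves $\phi_{f_1}(\mathbb{Q}[\dim])$ and $\phi_{f_2}(\mathbb{Q}[\dim])$ are canonically isomorphic \emph{up to a $\Z/2$-torsor}. A crucial computation identifies that torsor with the set of square roots of a naturally defined line built from $\det(\bL_\sfX)$ on the overlap. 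An orientation data on $\sfX$, by definition, provides a consistent global trivialization of this torsor, giving genuine isomorphisms of local vanishing cycle sheaves on pairwise overlaps. Verifying the cocycle condition on triple overlaps (again via iterated Thom--Sebastiani) yields a descent datum that glues the local perverse sheaves into a global perverse sheaf $P_\sfX$ on $t_0(\sfX)$.

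To pass from schemes to Artin stacks, I would follow \cite{BBBBJ}: choose a smooth atlas $u\colon S\to \sfX$ by a $(-1)$-shifted symplectic scheme $S$, pull back the orientation data, apply the scheme-level construction to obtain a perverse sheaf $P_S$ on $t_0(S)$, and check that $P_S$ is equivariant for the smooth groupoid presenting $\sfX$, so that smooth descent for perverse sheaves (with shifts adapted to the Artin-stack perversity convention) yields the required $P_\sfX$.

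The main obstacle is the gluing step: vanishing cycles genuinely depend on the function $f$, not merely on its critical locus, so it is not a priori obvious that a choice of orientation data suffices to force coherence of the transition data on triple and higher overlaps. The orientation $\Z/2$-torsor is engineered exactly to kill the obstruction at the level of 1-cocycles, and the fact that no further higher coherences obstruct the descent rests on the precise form of Thom--Sebastiani for stabilization by quadratic forms; this is the technical heart of \cite{BBDJS}. In the stack case an additional subtlety is to verify that the resulting sheaf is genuinely perverse with respect to the middle perversity on the Artin stack $t_0(\sfX)$, which requires tracking the stacky shifts coming from the non-quasi-smooth directions of $\bL_\sfX$.
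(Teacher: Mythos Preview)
Your proposal correctly sketches the strategy of \cite{BBDJS,BBBBJ}, which is exactly what the paper relies on: the proposition is stated in the paper without proof and attributed directly to \cite[Theorem 1.3]{BBBBJ}. So there is nothing to compare against beyond noting that your outline matches the approach of the cited references.
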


\begin{example}
Let $M$ be a smooth complex algebraic symplectic variety and $L_0, L_1\subset M$ two smooth algebraic Lagrangains. Then the derived intersection $\sfX = L_0\times_M L_1$ carries a natural $(-1)$-shifted symplectic structure \cite{PTVV}. Square roots of canonical bundles on $L_i$ give rise to an orientation data on $\sfX$. In this case the perverse sheaf on $t_0(\sfX) = L_0\cap L_1$ has been constructed in \cite{Bussi} without an appeal to derived algebraic geometry. Moreover, it was constructed for complex analytic varieties, i.e. the choice of the algebraic structure in this case is irrelevant.
\label{ex:Lagrangianintersection}
\end{example}

\begin{example}
Let $\sfX$ be a quasi-smooth derived Artin stack. Then $\T^*[-1] \sfX$ is a $(-1)$-shifted symplectic stack equipped with a canonical orientation data. The underlying classical stack $t_0(\T^*[-1] \sfX)$ is known as the stack of \emph{singularities} of $\sfX$ (see \cite[Section 8]{ArinkinGaitsgory}). By \cite[Theorem 4.14]{Kinjo} we have
\[\rH^\bullet(t_0(\T^*[-1] \sfX), P_{\T^*[-1] \sfX})\cong \rH^{\BM}_{\dim(\sfX)-\bullet}(t_0(\sfX)),\]
where on the right we consider the (shifted) Borel--Moore homology of the underlying classical stack $t_0(\sfX)$.
\label{ex:cotangent}
\end{example}

We may now formulate a conjecture relating the perverse sheaf $P_\sfX$ and BV quantization.

\begin{conjecture}
Let $(\sfX, \omega)$ be a $(-1)$-shifted symplectic Artin stack equipped with an orientation data $\det(\bL_\sfX)^{\frac12}$. There is a canonical BV quantization $\Delta_\hbar$ of $\det(\bL_\sfX)^{\frac12}$ and a quasi-isomorphism
\[\R\Gamma(t_0(\sfX), P_\sfX)(\!(\hbar)\!)\cong (\R\Gamma(\sfX, \det(\bL_\sfX)^{\frac12})(\!(\hbar)\!), \Delta_\hbar)\]
of chain complexes of $k(\!(\hbar)\!)$-vector spaces.
\label{conj:perverseBV}
\end{conjecture}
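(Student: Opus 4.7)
The plan is to combine three ingredients: (i) Pridham's canonical BV quantization for $(-1)$-shifted symplectic derived schemes (\cref{prop:canonicalBVquantization}), (ii) the ``Darboux'' structure theorem of Brav--Bussi--Joyce and Ben-Bassat--Brav--Bussi--Joyce, presenting $\sfX$ \'etale-locally as a derived critical locus, and (iii) the theorem of Sabbah--Saito identifying the twisted de Rham complex with the hypercohomology of vanishing cycles (as recalled in the introduction). The strategy is to verify the proposed quasi-isomorphism on a Darboux atlas and then check that the local identifications are compatible with the gluing data defining $P_\sfX$ and the global BV quantization.

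First, I would produce the BV quantization of $\det(\bL_\sfX)^{\frac{1}{2}}$ in the stacky setting. For $(-1)$-shifted symplectic schemes this is \cref{prop:canonicalBVquantization}. For an Artin stack $\sfX$ one chooses a smooth atlas $U \to \sfX$ on which the pullback of $\omega$ induces a quasi-smooth $(-1)$-shifted symplectic derived scheme; since orientation data is \'etale-local and pulls back along smooth morphisms up to a trivial shift by the relative cotangent complex, Pridham's construction at the scheme level should descend along such an atlas, yielding a canonical BV quantization of $\det(\bL_\sfX)^{\frac{1}{2}}$.

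Second, I would verify the conjecture on a Darboux chart. The structure theorem gives, \'etale-locally, a presentation $\sfX|_V \simeq \dCrit(f)$ for a regular function $f \colon U \to \bA^1$ on a smooth scheme $U$, together with a canonical trivialization of $\det(\bL_\sfX)|_V$; a choice of orientation data restricts to a choice of square root of this trivial line bundle. In this local model the canonical BV quantization of $\det(\bL_\sfX)^{\frac{1}{2}}|_V$ coincides with the twisted de Rham complex $(\DR(U)(\!(\hbar)\!), \hbar \d + (\d f) \wedge (-))$, via the identification of polyvector fields with differential forms provided by \cref{ex:realdifferentialformsBV}. The Sabbah--Saito theorem then supplies a canonical quasi-isomorphism between this twisted de Rham complex and $\R\Gamma(t_0\dCrit(f), \phi_f)(\!(\hbar)\!)$, and the latter agrees with $\R\Gamma(t_0 \sfX, P_\sfX)|_V(\!(\hbar)\!)$ by the local description of the perverse sheaf in \cite{BBDJS}.

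Third, I would globalize. Both $P_\sfX$ and the BV-quantized complex are assembled from these Darboux-chart models. On the perverse sheaf side, the gluing ambiguity when passing between two Darboux presentations related by a ``stabilization'' (adding a non-degenerate quadratic form in auxiliary variables) is controlled by the Thom--Sebastiani isomorphism and trivialized precisely by the orientation data, as in \cite{BBBBJ}. On the BV side, the analogous stabilization-invariance should follow from a formal Poincar\'e lemma for $\hbar \d + \d q$ with $q$ a non-degenerate quadratic form. The main obstacle I expect is exactly this compatibility at the homotopy-coherent level: Sabbah--Saito is typically formulated as a quasi-isomorphism of complexes, but descending to a global equivalence on the nerve of a Darboux cover requires a functorial refinement of the comparison, perhaps via mixed (or irregular) Hodge modules, where both sides carry filtered lifts. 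A secondary difficulty is the passage from schemes to Artin stacks, which requires verifying that Pridham's quantization descends along the same smooth atlas used to construct $P_\sfX$ in \cite{BBBBJ}, and that the local Sabbah--Saito quasi-isomorphisms are natural under smooth base change. Matching these two descent procedures is the core technical content of the proof.
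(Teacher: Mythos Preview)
The statement you are addressing is a \emph{conjecture} in the paper, not a theorem: the paper provides no proof of \cref{conj:perverseBV}. The only evidence the paper offers is the example immediately following the conjecture, where $\sfX=\dCrit(f)$ for $f\colon U\to\bA^1$ on a smooth affine $U$, and in that local case the conjecture is attributed to \cite[Proposition 4.9]{PridhamBV}. So there is nothing in the paper for your proposal to be compared against.

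Your outline is a sensible research strategy rather than a proof, and you essentially acknowledge this yourself. Step (ii) on a single Darboux chart is exactly the known case (the example after the conjecture), and step (i) is at least plausible in light of \cref{prop:canonicalBVquantization}. The genuine gaps are precisely the two you flag at the end: making the Sabbah--Saito comparison functorial enough to descend coherently along the Darboux atlas (matching the Thom--Sebastiani gluing used in \cite{BBBBJ}), and proving that Pridham's BV quantization descends from schemes to Artin stacks in a way compatible with that same descent. Neither of these is established in the literature you cite, and carrying them out would constitute new mathematics, not a routine verification. In short, your proposal correctly identifies the ingredients and the obstructions, but it does not close them; this is consistent with the paper's decision to state the result as a conjecture.
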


\begin{example}
Let $U$ be a smooth affine variety and $f\colon U\rightarrow \bA^1$ a function. Consider the \emph{derived critical locus} $\sfX = \dCrit(f)$ of the function $f$ (see \cite{Vezzosi} for what this means). It carries a natural $(-1)$-shifted symplectic structure. Let $\pi\colon \sfX\rightarrow U$ be the natural projection. We have a fiber sequence
\[\pi^* \T^*_U\longrightarrow \bL_{\sfX}\longrightarrow \bL_{\sfX / U}\cong \pi^* \T_U[1].\]
Taking the determinant, we obtain an isomorphism $\det(\bL_\sfX)\cong \pi^*\det(\T^*_U)^{\otimes 2}$. In particular, $\pi^*\det(\T^*_U)$ provides an orientation data on $\sfX$. We may identify
\[\R\Gamma(\sfX, \pi^*\det(\T^*_U))\cong (\Gamma(U, \Sym(\T^*_U[-1]))[\dim U], \d f\wedge(-)).\]
In this case
\[\Delta_\hbar = \d f\wedge(-) + \hbar \d\]
provides a BV quantization known as the twisted de Rham complex of $(U, f)$. The above conjecture in this case has been proven in \cite[Proposition 4.9]{PridhamBV}.
\end{example}

\begin{remark}
In the future sections we will encounter situations where the space is obtained as a critical locus of a function on an infinite-dimensional space. In this case there is no canonical orientation data. The cohomology $\R\Gamma(t_0(\sfX), P_\sfX)$ will then be a replacement for the ill-defined BV quantization.
\end{remark}

\section{Supersymmetric mechanics}
\label{sect:SUSYmechanics}

\subsection{Supersymmetry algebras}

In this section we recall some facts about supersymmetry algebras that we will use. We refer to \cite{DeligneFreed} for more details.

\begin{defn}
Let $W$ be a real vector space equipped with a nondegenerate symmetric bilinear pairing $(-, -)$. The \defterm{1d supertranslation algebra} $\g_W$ is the real super Lie algebra $\g = \Pi W\oplus \R\cdot H$ with the relations
\begin{align*}
[H, Q] &= 0 \\
[Q_1, Q_2] &= (Q_1, Q_2) H
\end{align*}
for every $Q_1, Q_2, Q\in W$.
\end{defn}

We will be interested in the following two examples.

\begin{example}
Consider $W$ the two-dimensional real vector space with a metric of signature $(1, 1)$. The corresponding super Lie algebra is known as the \defterm{1d $\cN=2$ supertranslation algebra}. It has odd generators $Q_1, Q_2$ and an even central generator $H$ with the relations
\begin{align*}
[Q_1, Q_1] &= H \\
[Q_2, Q_2] &= -H \\
[Q_1, Q_2] &= 0.
\end{align*}
\end{example}

\begin{example}
Consider $W$ the four-dimensional real vector space with a metric of signature $(2, 2)$. The corresponding super Lie algebra is known as the \defterm{1d $\cN=4$ supertranslation algebra}. Its complexification has odd generators $Q_1^\pm, Q_2^\pm$ and a central even generator $H$ with the nontrivial brackets
\begin{align*}
[Q_1^+, Q_1^-] &= \frac{1}{2}H \\
[Q_2^+, Q_2^-] &= -\frac{1}{2}H
\end{align*}
The real structure is given on the generators by $(Q_\alpha^\pm)^* = Q_\alpha^\mp$ ($\alpha=1, 2$) and $H^* = H$.
\end{example}

\begin{remark}
In both cases $\cN$ refers to the dimension of the odd part. Since we are talking about the real supertranslation algebras, one needs in addition to fix the signature of the metric on $W$, which is implicit in the notation.
\end{remark}

\begin{remark}
There is an embedding of the 1d $\cN=2$ supertranslation aglebra into the 1d $\cN=4$ supertranslation algebra given by the formulas $Q_1 = Q_1^+ + Q_1^-$ and $Q_2 = Q_2^+ + Q_2^-$.
\label{rmk:N=2intoN=4}
\end{remark}

Observe that $\O(W)$ acts on the 1d supertranslation algebra $\g_W$ by outer automorphisms.

\begin{defn}
Let $V$ be a (real or complex) $\Z/2$-graded representation of the 1d supertranslation algebra $\g_W$. In addition, suppose $V$ carries an action of a subgroup $G_R\subset \O(W)$ compatible with the action of $\g_W$. We say $G_R$ is the \defterm{$R$-symmetry group} of the representation. When $V$ is a complex $\g_W$-representation, we assume that the $G_R$-action extends to an action of the complexification $G_{R,\C}$.
\end{defn}

\begin{defn}
A real (complex) \defterm{twisting supercharge} $Q\in \g_W$ ($Q\in\g_W\otimes_\R\C$) is a nonzero odd square-zero element. Given a real (complex) $\g_W$-representation $V$ with $R$-symmetry group $G_R$, the \defterm{grading} of $Q$ is a choice of a subgroup $\R^\times\subset G_R$ ($\C^\times\subset G_{R,\C}$) satisfying the following properties:
\begin{itemize}
\item The induced $\Z/2$-grading on $\g_W$ coincides with the original $\Z/2$-grading.
\item With respect to the induced $\Z$-grading $Q$ has weight $1$.
\end{itemize}
\end{defn}

We are now ready to define the notion of twisting of $\g_W$-representations.

\begin{defn}
Let $V$ be a real (complex) $\g_W$-representation with $R$-symmetry group $G_R$. Suppose $Q$ is a real (complex) twisting supercharge equipped with a grading. The \defterm{twist} of $V$ is the $\Z$-graded complex whose underlying graded vector space is $V$ and differential is $Q$.
\end{defn}

As we will only discuss the $\cN=2$ and $\cN=4$ supertranslation algebras, let us explicitly describe the collection of square-zero supercharges in these algebras.

\begin{prop}
The set of equivalence classes of square-zero supercharges up to scale in the 1d supertranslation algebra $\g_W$ is the set of null lines in $W$.
\begin{itemize}
\item The set of real square-zero supercharges up to scale in the 1d $\cN=2$ supertranslation algebra consists of two points $Q_1 + Q_2$ and $Q_1 - Q_2$.

\item The set of complex square-zero supercharges up to scale in the 1d $\cN=4$ supertranslation algebra is $\CP^1\times \CP^1$. Using the homogeneous coordinates $[a:b],[c:d]$ of $\CP^1\times \CP^1$ they are given by
\[bd Q_1^+ + ac Q_1^- + ad Q_2^+ + cb Q_2^-.\]
\end{itemize}
\label{prop:squarezerosupercharges}
\end{prop}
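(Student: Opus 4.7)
The plan is to reduce ``square-zero'' to a single scalar quadratic equation coming from the bracket on $\Pi W$, and then identify the resulting projective quadric with the sets listed in the statement. For an odd element $Q\in\Pi W$, the defining relation $[Q,Q]=(Q,Q)H$ together with $H\neq 0$ shows that $Q$ is square-zero if and only if $Q$ is null with respect to the bilinear form on $W$ (or on its complexification). Since rescaling $Q$ rescales $(Q,Q)$ by a square, equivalence classes up to scale correspond precisely to null lines, which gives the general assertion.

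For the $\cN=2$ case, the bracket relations translate to the form $\mathrm{diag}(1,-1)$ on $W=\R^{1,1}$ in the basis $\{Q_1,Q_2\}$. Writing $Q = aQ_1+bQ_2$, the nullity condition is $a^2-b^2=0$, so $a=\pm b$, which up to rescaling produces exactly the two classes $Q_1+Q_2$ and $Q_1-Q_2$.

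For the $\cN=4$ case, I would read off the complexified bilinear form on the four-dimensional odd part from the bracket relations: the only nonzero pairings among basis elements are $(Q_1^+,Q_1^-)=1/2$ and $(Q_2^+,Q_2^-)=-1/2$. Writing $Q = \alpha Q_1^+ + \beta Q_1^- + \gamma Q_2^+ + \delta Q_2^-$, the nullity condition becomes the quadric equation $\alpha\beta=\gamma\delta$ in $\CP^3$. Any smooth quadric in $\CP^3$ over $\C$ is isomorphic to $\CP^1\times\CP^1$ via the Segre embedding, and the proposed substitution $\alpha=bd$, $\beta=ac$, $\gamma=ad$, $\delta=bc$ in homogeneous coordinates $([a:b],[c:d])$ is a rearrangement of the classical Segre map: it is bijective onto the quadric and automatically satisfies $\alpha\beta=abcd=\gamma\delta$, recovering the explicit formula in the proposition.

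Since the underlying computations are elementary, the only conceptual step is the identification of the quadric $\alpha\beta=\gamma\delta$ with $\CP^1\times\CP^1$ and the verification that the parametrization given in the statement is exactly the Segre embedding. I do not expect any genuine obstacle beyond careful bookkeeping of the factors of $1/2$ in the bracket relations and the coordinate choices for the two $\CP^1$ factors.
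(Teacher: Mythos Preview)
Your proposal is correct and follows essentially the same approach as the paper: the paper's proof simply declares the first two claims obvious and for the $\cN=4$ case notes that the square-zero locus is a smooth quadric in $\CP^3$, hence the Segre embedding $\CP^1\times\CP^1\subset\CP^3$. Your write-up just makes explicit the quadric equation $\alpha\beta=\gamma\delta$ and verifies the parametrization, which is exactly the content behind the paper's one-line appeal to the Segre embedding.
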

\begin{proof}
The first two claims are obvious. For the last claim, the square-zero supercharges up to scale are given by a smooth quadric in $\CP^3$. But it is well-known that it is given by the Segre embedding $\CP^1\times \CP^1\subset \CP^3$ described above.
\end{proof}

\begin{defn}
The element $Q_A = Q_1 + Q_2$ of the 1d $\cN=2$ supertranslation algebra is the \defterm{A twisting supercharge}. The embedding $\R^\times\subset G_R=\O(1, 1)$ endows it with a grading. Given a representation of the 1d $\cN=2$ supertranslation algebra, its \defterm{A twist} is the twist with respect to $Q_A$.
\end{defn}

\begin{defn}
The element $Q_B = Q_1^- + Q_2^-$ of the 1d $\cN=4$ supertranslation algebra is the \defterm{B twisting supercharge}. The embedding $\R^\times\subset \O(1, 1)\subset G_R=\O(2, 2)$ endows it with a grading. Given a representation of the 1d $\cN=4$ supertranslation algebra, its \defterm{B twist} is the twist with respect to $Q_B$.
\end{defn}

Using the embedding of the 1d $\cN=2$ supertranslation algebra into the 1d $\cN=4$ supertranslation algebra provided by \cref{rmk:N=2intoN=4} we will also talk about the \emph{A twist} of a representation of a 1d $\cN=4$ supertranslation algebra.

\subsection{Symplectic supermanifolds} \label{sec:supersymplectic}

The phase space of a supersymmetric mechanical system is described by a symplectic supermanifold (here we are considering \emph{even} symplectic structures on supermanifolds).
It turns out that there is a down-to-earth description of symplectic supermanifolds in terms of ordinary (non-super) geometry that we briefly recollect.
For more details we refer to \cite{Rothstein}.

Recall that every supermanifold is (non-canonically) equivalent to the total space of a $\ZZ/2$-graded vector bundle over an ordinary manifold.
Similarly, any symplectic supermanifold is equivalent to one which is in a certain standardized form.

To describe this standardized form, fix a tuple of data $(X, \omega_0, V, g, \nabla)$ where
\begin{itemize}
  \item $X$ is an ordinary symplectic manifold with symplectic form $\omega_0 \in \Omega^2(X)$, and
  \item $V$ is a vector bundle on $X$ equipped with a metric $g$ and a connection $\nabla$ which preserves $g$.
\end{itemize}

This data determines a symplectic structure $\omega$ on the supermanifold given by the total space of the bundle $\Pi V$ over $X$
\[
  \sfX = {\rm Tot}(\Pi V)
\]
which we can describe as follows.
Let
\[
  R \in \Omega^2(X; \End(\cV))
\]
the curvature of the connection $\nabla$ and define
\[
  {\rm R} \in \Omega^2(X; \wedge^2 \cV^*)
\]
to be the contraction of $R$ with the metric $g$.
In local coordinates one has ${\rm R}_{ijab} = g_{bc} R_{ij a}^c$ where $\{R_{ija}^c\}$ are the components of the curvature $R$.

There is a short exact sequence of $C^\infty_X$-modules
\[
  0\longrightarrow (\wedge^\bullet \cV^*) \otimes \cV \longrightarrow \Vect(\sfX) \longrightarrow (\wedge^\bullet \cV^*) \otimes \Vect(X)\longrightarrow 0
\]
The connection on $V$ defines a splitting of this short exact sequence and hence determines an isomorphism of $C^\infty_X$-modules $\Vect(\sfX) \cong_\nabla (\wedge \cV^*) \otimes (\cV \oplus \Vect(X))$.

Using this splitting, one defines the following two-form $\omega$ on the supermanifold $\sfX$ by the formulas:
\begin{equation}\label{eqn:rothstein}
  \begin{array}{cclcccc}
    \omega (\mu, \nu) & = & \omega_0(\mu,\nu) + \frac12 {\rm R}(\mu,\nu) & \mbox{for} & \mu,\nu \in \Vect(X) \\
    \omega (\phi, \psi) & = & g(\phi , \psi) & \mbox{for} & \phi,\psi \in \cV \\
    \omega (\phi, \mu) & = & 0 .
  \end{array}
\end{equation}

The main result of \cite{Rothstein} can be summarized in the following way.

\begin{thm}\label{thm:rothstein}
  Let $(X,\omega_0, V, g, \nabla)$ be the tuple of data as defined above.
  The two-form $\omega$ defined in \eqref{eqn:rothstein} is an even symplectic form on the supermanifold $\sfX = {\rm Tot}(\Pi V)$.
  Moreover, any symplectic supermanifold is equivalent to one of this form.
\end{thm}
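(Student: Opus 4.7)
The plan is to prove the two assertions separately: first that the formula \eqref{eqn:rothstein} does produce an even symplectic form, and then the harder converse.

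For the first part, non-degeneracy is largely formal. After choosing a local trivialisation of $V$ over a coordinate patch of $X$, the matrix of $\omega$ in the basis adapted to the splitting $\Vect(\sfX) \cong (\wedge^\bullet \cV^*) \otimes (\cV \oplus \Vect(X))$ is block diagonal, with the even-even block $\omega_0 + \tfrac12 {\rm R}$ and the odd-odd block $g$. Since ${\rm R}$ takes values in $\wedge^{\geq 2} \cV^*$ (i.e. in nilpotents with respect to the $\Z_{\geq 0}$-filtration by form-degree in the odd variables), the even-even block is invertible over $C^\infty(\sfX)$ by a Neumann series, and the odd-odd block is invertible because $g$ is a metric. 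The closedness $\d\omega = 0$ is the real content. I would compute $\d\omega$ term by term in a frame adapted to $\nabla$, where the horizontal lifts of $\partial_i$ are $\partial_i - A_{i\,b}^a \theta^b \partial/\partial\theta^a$ and the odd generators are $\partial/\partial\theta^a$. Three identities then conspire: $\d\omega_0 = 0$, the compatibility $\nabla g = 0$ cancels the derivative of the odd-odd block against the connection 1-form appearing in the horizontal frame, and the Bianchi identity $\d^\nabla R = 0$ cancels the remaining purely-even piece coming from $\d({\rm R})$ against the cross-term produced by differentiating $g(\theta, \theta)$ twice. I expect this to be the most book-keeping-intensive step, but it is a direct local computation with no global obstruction.

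For the converse, fix a symplectic supermanifold $(\sfX,\omega)$. Its body $X = \sfX_{\rm red}$ inherits a closed $2$-form $\omega_0$ by pullback along the embedding $X \hookrightarrow \sfX$; non-degeneracy of $\omega_0$ follows because the normal directions to $X$ in $\sfX$ are purely odd, so they cannot cancel against even directions in the degeneracy test. Let $V$ be the vector bundle over $X$ such that $\Pi V$ is the normal bundle to $X$ in $\sfX$ (existence of such a $V$ is the standard equivalence of supermanifolds with total spaces of $\Z/2$-graded bundles, modulo an isomorphism). Restriction of $\omega$ to the odd-odd block on $X$ gives a non-degenerate symmetric pairing $g$ on $V$. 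The remaining task is to produce $\nabla$ and an isomorphism $\sfX \cong \mathrm{Tot}(\Pi V)$ in which $\omega$ takes the Rothstein form. This is a Darboux-type statement for even symplectic structures on supermanifolds.

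The strategy for producing $(\nabla, \text{splitting})$ is a Moser argument in the odd direction. Starting from any splitting $\sfX \cong \mathrm{Tot}(\Pi V)$ and any metric-preserving connection $\nabla_0$ on $V$, form the Rothstein symplectic form $\omega_{\mathrm{Roth}}$ from $(X, \omega_0, V, g, \nabla_0)$ and interpolate $\omega_t = (1-t)\omega_{\mathrm{Roth}} + t\omega$. Both forms agree modulo the ideal of odd nilpotents to second order (their body-level data coincides by construction), so $\omega - \omega_{\mathrm{Roth}}$ is closed and lies in the square of the nilpotent ideal, hence exact on the patch; solving the Moser equation $\iota_{X_t}\omega_t = -\alpha$ produces a time-dependent vector field whose flow is well-defined because the nilpotent order drops at each Picard iterate. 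The resulting flow identifies $\omega$ with $\omega_{\mathrm{Roth}}$. The main obstacle is precisely this last global step: extracting a canonical $(g,\nabla)$ from $\omega$, as opposed to merely showing existence up to equivalence; I expect this ambiguity to match the $(V, g, \nabla)$-equivalence relation in the statement, so that the Moser identification is enough.
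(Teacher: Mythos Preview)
The paper does not actually prove this theorem: it is stated as a summary of the main result of \cite{Rothstein} and no argument is given. So there is nothing in the paper to compare your proposal against directly.

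That said, your outline is essentially the strategy Rothstein himself uses. The forward direction is as you describe: non-degeneracy is the block argument, and closedness is a direct local computation in which $\d\omega_0=0$, $\nabla g=0$, and the Bianchi identity for $R$ account for all terms. For the converse, Rothstein's argument is a filtration/Moser-type deformation exactly along the lines you sketch: extract $(X,\omega_0,V,g)$ from the reduced data and the restriction of $\omega$ to the odd normal directions, pick any $g$-compatible connection $\nabla_0$ and splitting, form the model $\omega_{\mathrm{Roth}}$, and then deform. One refinement worth making explicit: your sentence ``hence exact on the patch'' undersells the situation. Because you have already chosen a splitting $\sfX\cong\mathrm{Tot}(\Pi V)$, the fiberwise scaling $\theta\mapsto s\theta$ gives a \emph{global} contracting homotopy, so the relative Poincar\'e lemma holds globally and the Moser primitive $\alpha$ can be chosen globally, not just patch by patch; this is what lets you avoid a cocycle argument. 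Your remark that the Moser flow terminates because the vector field raises nilpotent degree is the correct reason the time-1 flow exists. The only genuine subtlety you gloss over is that the final connection $\nabla$ need not be the $\nabla_0$ you started with: the Moser isomorphism can shift the splitting and the connection, and part of Rothstein's work is to show that after the deformation the resulting $\omega$ is of the form \eqref{eqn:rothstein} for \emph{some} metric connection, which requires tracking how the horizontal distribution transforms. This is not hard, but it is the step where ``existence up to equivalence'' becomes ``equals the Rothstein form for some $(g,\nabla)$,'' and your last paragraph correctly flags it as the point needing care.
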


\subsection{Supersymmetric mechanics}

In this section we introduce the phase space of supersymmetric classical mechanics as in \cite{WittenQM}.

We fix the following data:
\begin{itemize}
  \item A Riemannian manifold $(M,g)$.
  \item A pair of closed one-forms $\alpha, a \in \Omega^1(M)$.
\end{itemize}

\begin{notation}
  In this section $\nabla$ denotes the Levi--Civita connection with respect to the metric $g$ and $R$ denotes the curvature tensor.
\end{notation}

\begin{defn}\label{defn:1dN=2}
  The \defterm{phase space of supersymmetric mechanics} is the supermanifold
  \begin{equation}\label{eqn:N=2phase}
  \sfX \define {\rm Tot}\left(\Pi (\pi^*\T_M\oplus \pi^*\T_M) \right)
\end{equation}
where $\pi\colon \T^* M \to M$ is the projection.
\end{defn}



The graded ring of functions on $\sfX$ is
\begin{align*}
  C^\infty(\sfX) & = \Gamma(M, \Sym(\T_M) \otimes \wedge^\bu (\T_M^* \oplus \T_M^*)).
\end{align*}



\begin{cor} \label{cor:N=1symp}
  By \cref{thm:rothstein} the tuple of data
  \[
    (X, \omega_0, V, g, \nabla) = (\T^*M , \omega_{\rm std}, \pi^*(\T_M \oplus \T_M), (g \oplus -g), (\nabla \oplus \nabla)) .
    \]
  defines a symplectic structure $\omega \in \Omega^2(\sfX)$ on $\sfX$.
\end{cor}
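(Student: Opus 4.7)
The plan is to invoke \cref{thm:rothstein} directly, so the task reduces to verifying that the proposed tuple
\[
(X,\omega_0,V,g,\nabla)=(\T^*M,\,\omega_{\rm std},\,\pi^*(\T_M\oplus \T_M),\,g\oplus(-g),\,\nabla\oplus\nabla)
\]
really does satisfy the hypotheses listed before \cref{thm:rothstein}: namely that $(X,\omega_0)$ is a symplectic manifold, $V\to X$ is a vector bundle equipped with a non-degenerate symmetric pairing, and $\nabla$ is a linear connection on $V$ preserving this pairing.

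First I would note that $(\T^*M,\omega_{\rm std})$ is the standard symplectic manifold associated to $M$, so the first hypothesis is automatic. Next, $\pi^*(\T_M\oplus \T_M)$ is a vector bundle on $\T^*M$ by pullback. The fiberwise pairing $g\oplus(-g)$ is manifestly symmetric; it is non-degenerate because $g$ is, and one should simply remark that the signature is indefinite (of split type), which is permitted by the formulation of \cref{thm:rothstein} since only non-degeneracy is used in the construction of the even symplectic form.

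The only substantive point is compatibility of the connection with the pairing. The Levi--Civita connection $\nabla$ of $(M,g)$ is by definition metric for $g$, hence also metric for $-g$, and so the direct-sum connection $\nabla\oplus\nabla$ preserves $g\oplus(-g)$. Pulling back to $\T^*M$ via $\pi$ preserves metric-compatibility, since the pullback connection on $\pi^*(\T_M\oplus \T_M)$ acts fiberwise as $\nabla\oplus\nabla$ on horizontal vectors with respect to any local trivialization.

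Having checked all the input hypotheses, I would conclude by invoking \cref{thm:rothstein}: the formula \eqref{eqn:rothstein} then produces an even symplectic form $\omega\in\Omega^2(\sfX)$ on $\sfX={\rm Tot}(\Pi(\pi^*\T_M\oplus\pi^*\T_M))$. The main conceptual point to flag — rather than a genuine obstacle — is that one is using Rothstein's theorem in its indefinite-signature form; if the reader worries about this, one can observe that Rothstein's construction is algebraic in $g$ and nowhere uses positivity, so the extension is immediate.
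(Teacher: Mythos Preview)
Your proposal is correct and matches the paper's approach: the paper states the corollary without proof, treating it as an immediate application of \cref{thm:rothstein} once the tuple is displayed, and your argument simply makes explicit the routine verification of the hypotheses (symplectic base, vector bundle with non-degenerate pairing, metric-compatible connection). Your remark that Rothstein's construction is purely algebraic in $g$ and needs only non-degeneracy, not positivity, is a helpful clarification but not something the paper pauses to justify.
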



On the supermanifold $\sfX$ there exists a tautological one-form $\sfp\in\Gamma(\sfX, \pi^* \T^*_M)$ (the Liouville one-form on $\T^* M$) and two tautological odd vector fields $\vartheta_1, \vartheta_2\in\Pi\Gamma(\sfX, \pi^* \T_M)$ corresponding to the two copies of the tangent bundle in the definition of $\sfX$.

By \cref{cor:N=1symp} the phase space of the supersymmetric mechanics is equipped with a symplectic form $\omega$. Using these tautological sections, one can write this symplectic form as
\[
  \omega = \d \sfp + \left(R \vartheta_1, \vartheta_1 \right) - \left(R \vartheta_2, \vartheta_2\right) + \left(\nabla \vartheta_1, \nabla \vartheta_1\right) - \left(\nabla \vartheta_2, \nabla \vartheta_2\right).
\]

\begin{prop} \label{prop:1dN=2}
  Consider the pair of odd functions
  \begin{align*}
    \sfQ_1 & = \<\sfp, \vartheta_1\> + \<\alpha, \vartheta_2\> \\
    \sfQ_2 & = \<\sfp, \vartheta_2\> + \<\alpha, \vartheta_1\>.
  \end{align*}
  and the even function
\[
\sfH = \frac12 (\sfp, \sfp) + \frac13 \epsilon^{ab} \epsilon^{cd} (\vartheta_a, R(\vartheta_b, \vartheta_c) \vartheta_d ) + \epsilon^{ab} \<\nabla \alpha , \vartheta_a \otimes \vartheta_b\> - \frac12 (\alpha, \alpha)
\]
  These satisfy the following Poisson brackets
\[
    \{\sfQ_1, \sfQ_1\} = \sfH,\quad
    \{\sfQ_2, \sfQ_2\} = -\sfH , \quad
    \{\sfQ_1, \sfQ_2\} = 0 .
\]
  In other words, the functions $\sfQ_\alpha, \sfH$ determine a Hamiltonian action on $\sfX$ by the 1d $\cN=2$ supertranslation algebra.

  Additionally, the even function $\sfR = \frac12 (\vartheta_1 + \vartheta_2, \vartheta_1 + \vartheta_2)$ is a Hamiltonian for the R-symmetry group $G_R  = \SO(1,1)$ acting by rotations on $\T^*_M\oplus \T^*_M$.
\end{prop}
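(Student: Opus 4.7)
The plan is to exploit the Rothstein form of the symplectic structure recorded after \cref{cor:N=1symp} to extract the Poisson brackets among the tautological sections $\sfp,\vartheta_1,\vartheta_2$, and then reduce each identity in the proposition to Leibniz bookkeeping. First I would record the basic brackets dictated by Rothstein: the fermion self-brackets are $\{\vartheta_a^i,\vartheta_b^j\} = \eta_{ab}g^{ij}$ with $\eta=\mathrm{diag}(+1,-1)$ (coming from the $\pm g$ summands in the data $(g\oplus -g)$ defining $V$ in \cref{cor:N=1symp}); the momentum $\sfp$ acts on base-functions and on the fermions as the Levi--Civita covariant derivative, $\{\langle\sfp,\xi\rangle, f\}=\xi f$ and $\{\langle\sfp,\xi\rangle,\vartheta_a\}=\nabla_\xi\vartheta_a$, by the connection-splitting built into Rothstein's construction; and the momentum self-bracket $\{\sfp_i,\sfp_j\}$ picks up a fermion-bilinear curvature correction coming from the $(\mathrm{R}\vartheta_a,\vartheta_a)$ summands in $\omega$.

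With these formulas in hand I would attack $\{\sfQ_1,\sfQ_1\}$ via $\{F+G,F+G\}=\{F,F\}+2\{F,G\}+\{G,G\}$ with $F=\langle\sfp,\vartheta_1\rangle$ and $G=\langle\alpha,\vartheta_2\rangle$. The self-bracket $\{F,F\}$ produces the kinetic term $\tfrac12(\sfp,\sfp)$ from the fermion bracket of $\vartheta_1$ with itself, together with a cubic-in-fermion contribution coming from the Rothstein curvature correction to $\{\sfp,\sfp\}$. The mixed bracket $2\{F,G\}$ collapses by $\{\vartheta_1,\vartheta_2\}=0$ to $2(\nabla\alpha)_{ij}\vartheta_1^i\vartheta_2^j = \epsilon^{ab}\langle\nabla\alpha,\vartheta_a\otimes\vartheta_b\rangle$, and $\{G,G\}=-\tfrac12(\alpha,\alpha)$ by the negative-signature self-bracket of $\vartheta_2$. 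The computation of $\{\sfQ_2,\sfQ_2\}$ is identical with every fermion bilinear flipped in sign by the opposite signature, producing $-\sfH$. For the mixed bracket $\{\sfQ_1,\sfQ_2\}$ the pure-momentum term vanishes because the Rothstein correction is diagonal in the fermion species ($\mathrm{R}$ couples each $\vartheta_a$ only to itself); the two cross terms $\{\langle\sfp,\vartheta_a\rangle,\langle\alpha,\vartheta_a\rangle\}$ each produce a scalar piece $\pm(\sfp,\alpha)$, which cancels between $a=1$ and $a=2$ by opposite signatures, plus a term $\pm(\nabla\alpha)_{ij}\vartheta_a^i\vartheta_a^j$ which vanishes since $\nabla\alpha$ is symmetric by $\d\alpha=0$ while $\vartheta_a^i\vartheta_a^j$ is antisymmetric; and the remaining bracket $\{\langle\alpha,\vartheta_2\rangle,\langle\alpha,\vartheta_1\rangle\}$ is zero from $\{\vartheta_1,\vartheta_2\}=0$.

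For the R-symmetry statement, note that with the fermion brackets above one has $(\vartheta_a,\vartheta_a)=0$ because $g$ is symmetric while $\vartheta_a^i\vartheta_a^j$ is antisymmetric, so in fact $\sfR=(\vartheta_1,\vartheta_2)$ is a pure fermion bilinear. A one-line calculation then yields $\{\sfR,\vartheta_1\}=-\vartheta_2$ and $\{\sfR,\vartheta_2\}=-\vartheta_1$, which is the infinitesimal generator of the $\SO(1,1)$ rotation of $\T^*_M\oplus\T^*_M$ with signature $(+,-)$; consequently $\{\sfR,\sfQ_1\}=-\sfQ_2$ and $\{\sfR,\sfQ_2\}=-\sfQ_1$, confirming that $\sfR$ is a Hamiltonian for the R-symmetry.

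The hard part, I expect, is matching the cubic-fermion term produced by $\{F,F\}$ against the specific expression $\tfrac13\,\epsilon^{ab}\epsilon^{cd}(\vartheta_a,R(\vartheta_b,\vartheta_c)\vartheta_d)$ appearing in $\sfH$. The naive expansion from the $(\mathrm{R}\vartheta_a,\vartheta_a)$ part of $\omega$ gives a Riemann contraction whose symmetry type needs to be massaged using the first Bianchi identity $R_{i[jkl]}=0$, and the appearance of the coefficient $\tfrac13$ (as opposed to the $\tfrac12$ in Rothstein's two-form) is what signals that this Bianchi rearrangement has been carried out correctly. I would confirm this step in local coordinates by inverting the Rothstein form to first order in the curvature and performing the rearrangement symbolically.
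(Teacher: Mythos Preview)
Your approach is essentially the paper's: both fix local coordinates, read off the Poisson brackets from the Rothstein symplectic form, and split each $\{\sfQ_a,\sfQ_b\}$ into the four obvious pieces. The paper records exactly the brackets you describe (with the normalization $\{\theta_a^i,\theta_b^j\}=\tfrac12(-1)^{a+1}\delta_{ab}g^{ij}$, so your $\eta_{ab}g^{ij}$ is off by a harmless factor of $2$), and it invokes the first Bianchi identity $R_{i[jk\ell]}=0$ to massage the four-fermion curvature contraction into the $\tfrac13\,\epsilon^{ab}\epsilon^{cd}$ form, just as you anticipate in your last paragraph. Your R-symmetry argument is a nice addition; the paper does not write it out.

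There is one genuine slip in your treatment of $\{\sfQ_1,\sfQ_2\}$. The pure-momentum bracket $\{\langle\sfp,\vartheta_1\rangle,\langle\sfp,\vartheta_2\rangle\}$ does \emph{not} vanish merely because the Rothstein curvature correction is diagonal in the fermion species. Diagonality (i.e.\ $\{\vartheta_1,\vartheta_2\}=0$) kills the kinetic $(p,p)$ piece, but the correction to $\{p_i,p_j\}$ still produces a four-fermion term proportional to $R_{ijk\ell}(\theta_1^k\theta_1^\ell-\theta_2^k\theta_2^\ell)\theta_1^i\theta_2^j$. Each summand here has three odd variables of the same species contracted against three indices of the Riemann tensor, and it is the first Bianchi identity again --- not diagonality --- that kills it. The paper makes this explicit. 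There are also Christoffel-type contributions from the $\{\theta_a,p\}$ brackets in this same piece, which must be tracked and seen to cancel against each other and against the scalar $(\sfp,\alpha)$ pieces from the cross terms; your sketch skips over these, but the paper's coordinate expansion handles them. None of this changes your strategy, only the bookkeeping in that one step.
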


\begin{proof}

Choose local coordinates $\{q^i\}$ for $M$ and denote by $\{p_i\}$ the corresponding frame of the cotangent bundle. Then we have $\sfp = p_i \d q^i$. Let $\{\theta_1^i\}$ and $\{\theta_2^j\}$ be the corresponding frames of the tangent bundles in \eqref{eqn:N=2phase}. Then $\vartheta_\alpha = \theta^i_\alpha \frac{\partial}{\partial q^i}$ for $\alpha=1,2$.
Note that the coordinates $q^i, p_j$ are even and the coordinates $\theta_1^i, \theta_2^j$ are odd.
In these local coordinates the symplectic form $\omega$ on $\sfX$ reads
\[
  \omega = \d p_i \d q^i + \frac12 {\rm R}_{ijk\ell} (\theta_1^k \theta_1^\ell - \theta_2^k \theta_2^\ell) \d q^i \d q^j + g_{ij} (\nabla \theta_1^i \nabla \theta_2^i - \nabla \theta^i_2 \nabla \theta^j_2) .
\]
Here $\nabla \theta^i_\alpha = \d \theta^i_\alpha - \Gamma^i_{jk} \theta^j_\alpha \d q^k$ denotes the covariant derivative for $a =1,2$.

We record the Poisson brackets read off from the above formula of the symplectic form:
\[
  \begin{array}{cclcccl}
    \{q^i, q^j\} & = & 0 & , & \{p_i, p_j\} & = & - {\rm R}_{ijk\ell} \left(\theta_1^k \theta_1^\ell - \theta_2^k \theta^\ell_2\right) \\
    \{p_i, q^j\} & = & \delta_i^j & , & \{\theta_a^i, p_j\} & = & \Gamma_{jk}^i \theta_a^k \\
    \{\theta_a^i, q^j\} & = & 0 & , & \{\theta_a^i, \theta_b^j\} & = & \frac12 (-1)^{a +1} \delta_{a b} g^{ij} \\
  \end{array}
\]
In coordinates the supercharges $\sfQ_{a}$ read
  \begin{align*}
    \sfQ_1 & = p_i \theta_1^i + \alpha_i \theta_2^i\\
    \sfQ_2 & = p_i \theta_2^i + \alpha_i \theta_1^i
  \end{align*}
  where we have written the one-form $\alpha$ as $\alpha_i \d q^i$.

  The proof proceeds with a direct calculation of $\{\sfQ_a, \sfQ_b\}$ using these local coordinate descriptions.
  First, consider the bracket $\{\sfQ_1, \sfQ_1\}$.
  This commutator splits up into a sum of four commutators:
  \begin{itemize}
    \item $\{p_i \theta_1^i, p_j \theta_1^j\} = 
    {\rm R}_{ijk\ell} \left(- \theta_1^k \theta_1^\ell + \theta_2^k \theta_2^\ell \right)\theta_1^i \theta_1^j + \left(\Gamma_{ik}^{j} \theta_1^k \theta_1^i p_j + \Gamma^{i}_{jk} \theta_1^k \theta_1^j p_i \right) + \frac12 g^{ij} p_i p_j .$
    \item $\{p_i \theta_1^i , \alpha_j \theta_2^j\} = 
        (\partial_i \alpha_j) \theta^i_1 \theta^j_2 - \Gamma_{ik}^j \alpha_j \theta_1^i\theta_2^k .$
    \item $\{\alpha_i \theta_2^i, p_j \theta_1^j\} =
        (\partial_j \alpha_i) \theta^j_1 \theta^i_2 - \Gamma_{jk}^i \alpha_i \theta_1^j\theta_2^k.$
    \item $\{\alpha_i \theta_2^i, \alpha_j \theta_2^j \} =
        - \frac12 g^{ij} \alpha_i \alpha_j .$
\end{itemize}

The term ${\rm R}_{ijk\ell} \theta_1^i \theta_1^j \theta_1^k \theta_1^\ell$ is identically zero by symmetries of the Riemann tensor ${\rm R}_{i[jk\ell]} = 0$.
The third and fourth terms in the first item cancel with one another due to the antisymmetry of the connection one-form.


  Combining the remaining terms we obtain
  \[
    \{\sfQ_1, \sfQ_1\} = \frac12 g^{ij} p_i p_j + {\rm R}_{ijk\ell} \theta_1^i \theta_1^j \theta_2^k \theta_2^\ell + 2 \left(\partial_i \alpha_j - \Gamma_{ij}^k \alpha_k \right) \theta_1^i \theta_2^j - \frac12 g^{ij} \alpha_i \alpha_j .
  \]
Using symmetries of the Riemann tensor again again one can show that the second term is equal to
  \[
    \frac13 \epsilon^{ab}\epsilon^{cd}{\rm R}_{ijk\ell} \theta_a^i \theta_b^j \theta_c^k \theta_d^\ell .
  \]
  We conclude that $\{\sfQ_1,\sfQ_2\} = \sfH$ as desired.
  The relation $\{\sfQ_2, \sfQ_2\} = - \sfH$ is proved similarly.

  Finally, we show $\{\sfQ_1, \sfQ_2\}=0$.
  Again, there are four terms in the expansion of this bracket.
   \begin{itemize}
    \item $\{p_i \theta_1^i, p_j \theta_2^j\}=
    {\rm R}_{ijk\ell}\left(-\theta_1^k \theta_1^\ell + \theta_2^k \theta_2^\ell \right)\theta_1^i \theta_2^j + \left(\Gamma_{ik}^{j} \theta_1^i \theta_2^k p_j + \Gamma^{i}_{jk} \theta_1^k \theta_2^j p_i \right) .$
    \item $\{p_i \theta_1^i , \alpha_j \theta_1^j\}=
        - \partial_i \alpha_j \theta^i_1 \theta^j_1 + \Gamma_{ik}^j \alpha_j \theta_1^i\theta_1^k + \frac12 g^{ij} p_i \alpha_j .$
    \item $\{\alpha_i \theta_2^i, p_j \theta_2^j\}=
        -\partial_j \alpha_i \theta^j_2 \theta^i_2 + \Gamma_{jk}^i \alpha_i \theta_2^j\theta_2^k - \frac12 g^{ij} p_j \alpha_i.$
    \item $\{\alpha_i \theta_1^i, -\alpha_j \theta_2^j \}=0$.
\end{itemize}

The first two terms in the first item are individually zero by using ${\rm R}_{i[jk\ell]} = 0$.
The first terms in the second and third items are individually zero since $\alpha$ is closed $\d \alpha = 0$.
The second terms in the second and third items are individually zero since the connection is torsion-free $\Gamma^i_{[jk]} = \Gamma^i_{jk} - \Gamma^i_{kj} = 0$.
The remaining terms clearly cancel.
Thus, $\{\sfQ_1, \sfQ_2\} = 0$ as desired.
\end{proof}

\subsection{K\"ahler case}
\label{sect:Kahlermechanics}

In this section we specialize to the case where $M$ is equipped with a K\"ahler structure.
We show that in this case the supersymmetric mechanics from the previous section has an enhanced supersymmetry.

Let us fix the following data:
\begin{itemize}
  \item A K\"{a}hler manifold $(M, g, J)$.
  \item A pair of closed one-forms $\alpha,a \in \Omega^1(M)$ whose $(1,0)$ parts $\beta = \alpha^{1,0}$ and $a^{1, 0}$ are closed.
\end{itemize}

Let $\sfX$ be the phase space of supersymmetric mechanics from \cref{defn:1dN=2} defined using the Riemannian structure on $M$. Let $\T_M \otimes_\RR \CC = \T_M^{1,0} \oplus \T_M^{0,1}$ be the decomposition of the tangent bundle using the complex structure $J$, and similarly for the complexified cotangent bundle.
The graded ring of complex-valued functions on $\sfX$ is
\[C^\infty(\sfX; \C) = \Gamma\left(M, \Sym(\T_M^{1,0}) \otimes \Sym(\T_M^{0,1}) \otimes \left(\wedge^\bu (\T^{*1,0}_M) \otimes \wedge^\bu(\T_M^{*0,1}) \right)^{\otimes 2}\right).\]
It carries a natural real structure whose real subspace is $C^\infty(\sfX; \R)$.

\begin{notation}
Let $\sfp = \sfp^{1, 0} + \sfp^{0, 1}$ be the decomposition of the tautological one-form on $\sfX$ according to type and, similarly, $\vartheta_\alpha = \vartheta^{1,0}_\alpha + \vartheta^{0,1}_\alpha$.
\end{notation}

\begin{prop}\label{prop:1dN=4}
  Consider the tuple of odd functions
  \[
    \begin{array}{cccccccc}
    \sfQ_1^+ & = & \<\sfp^{1,0} , \vartheta^{1,0}_1\> + \<\Bar{\beta}, \vartheta^{0,1}_2\> & , & \sfQ_1^- & = & \<\sfp^{0,1}, \vartheta^{0,1}_1\> + \<\beta, \vartheta^{1,0}_2\> \\
    \sfQ_2^+ & = & \<\sfp^{1,0}, \vartheta^{1,0}_2\> + \<\Bar{\beta}, \vartheta_1^{0,1}\> & , & \sfQ^-_2 & = & \<\sfp^{0,1}, \vartheta^{0,1}_2\> + \<\beta, \vartheta_1^{1,0}\> .
    \end{array}
  \]
  and the even function $\sfH$ as in \cref{prop:1dN=2}.
  These satisfy the Poisson brackets
  \begin{align*}
    \{\sfQ^+_\alpha, \sfQ^-_\beta\} & = \frac{(-1)^{\alpha+1}}{2} \delta_{\alpha\beta} \sfH , \qquad \mbox{for} \qquad {\alpha, \beta=1,2}\\
    \{\sfQ^\pm_\alpha, \sfQ^\pm_\beta\} & = 0.
  \end{align*}
  In other words, the functions $\sfQ^\pm_\alpha,\sfH$ determine a Hamiltonian action on $\sfX$ by the 1d $\cN=4$ supertranslation algebra.
\end{prop}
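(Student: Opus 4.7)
The plan is to reduce the verification to the $\cN=2$ computation of \cref{prop:1dN=2} supplemented by direct bracket computations in local holomorphic coordinates, exploiting the K\"ahler structure. Observe that the decomposition $\alpha = \beta + \Bar\beta$ gives $\sfQ_a = \sfQ_a^+ + \sfQ_a^-$ for $a = 1, 2$, so the $\cN=2$ identities of \cref{prop:1dN=2} are available, and our task is to show that the $\pm$-graded components separately match the claimed expressions. The K\"ahler simplifications I shall use in local holomorphic coordinates $(z^i, \Bar z^{\Bar i})$ are: the only non-vanishing Christoffel symbols are $\Gamma^k_{ij}$ and $\Gamma^{\Bar k}_{\Bar i \Bar j}$; the metric satisfies $g^{ij} = g^{\Bar i \Bar j} = 0$; the Riemann tensor has only non-vanishing components of the form $R_{i \Bar j k \Bar l}$ and those obtained by its usual symmetries; and $\beta$, being a closed $(1,0)$-form, is holomorphic, so $\partial_{\Bar i}\beta_j = 0$ and $\partial_i \beta_j = \partial_j \beta_i$.

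For the same-sign brackets $\{\sfQ_a^\pm, \sfQ_b^\pm\} = 0$, a direct graded Leibniz expansion reduces each elementary contribution to an object that either vanishes by one of the above K\"ahler identities or cancels against a partner term (for instance, in $\{\sfQ_1^+, \sfQ_2^+\}$ the cross contributions from $\{p_i\theta_1^i, \Bar\beta_{\Bar l}\theta_1^{\Bar l}\}$ and $\{\Bar\beta_{\Bar i}\theta_2^{\Bar i}, p_j\theta_2^j\}$ cancel via the shared factor $p_j \Bar\beta_{\Bar i} g^{j\Bar i}$). Once these same-sign vanishings are in place, the $\cN=2$ identities $\{\sfQ_a, \sfQ_a\} = \pm \sfH$ immediately yield $\{\sfQ_1^+, \sfQ_1^-\} = \frac{1}{2}\sfH$ and $\{\sfQ_2^+, \sfQ_2^-\} = -\frac{1}{2}\sfH$, while $\{\sfQ_1, \sfQ_2\} = 0$ reduces to $\{\sfQ_1^+, \sfQ_2^-\} + \{\sfQ_1^-, \sfQ_2^+\} = 0$. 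The latter is not enough by itself to conclude that each of these summands vanishes separately.

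The remaining step, which I expect to be the main technical obstacle, is to verify $\{\sfQ_1^+, \sfQ_2^-\} = 0$ directly. After applying the K\"ahler vanishings and the holomorphicity/closedness of $\beta$, the expansion collapses to a single Riemann-tensor expression proportional to $R_{i \Bar k m \Bar n}(\theta_1^m \theta_1^{\Bar n} - \theta_2^m \theta_2^{\Bar n}) \theta_1^i \theta_2^{\Bar k}$. This vanishes by the extra K\"ahler symmetry $R_{i \Bar j k \Bar l} = R_{k \Bar j i \Bar l}$ (a consequence of the algebraic Bianchi identity combined with the pure-type vanishings of $R$), which forces the antisymmetrizations of the coefficient over $(i, m)$ and over $(\Bar k, \Bar n)$ demanded by the anticommuting pairs $\theta^i \theta^m$ and $\theta^{\Bar k} \theta^{\Bar n}$ to vanish. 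The vanishing of $\{\sfQ_1^-, \sfQ_2^+\}$ then follows either by complex conjugation (using $(\sfQ_\alpha^\pm)^* = \sfQ_\alpha^\mp$ together with the reality of the Poisson bracket) or by the parallel computation in the opposite type.
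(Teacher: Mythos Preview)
Your proposal is correct and follows essentially the approach the paper indicates: the paper's proof of \cref{prop:1dN=4} is simply ``The proof is similar to that of \cref{prop:1dN=2}, so we omit it,'' and your argument is precisely the natural elaboration of that remark---a direct local-coordinate computation parallel to the $\cN=2$ case, streamlined by the K\"ahler vanishings of mixed-type Christoffel symbols, metric components, and curvature components, together with the holomorphicity of $\beta$. Your use of the extra K\"ahler symmetry $R_{i\bar j k\bar l}=R_{k\bar j i\bar l}$ to kill the residual curvature term in $\{\sfQ_1^+,\sfQ_2^-\}$ is the expected mechanism, and invoking the $\cN=2$ relations to recover $\{\sfQ_\alpha^+,\sfQ_\alpha^-\}=\pm\tfrac12\sfH$ from the same-sign vanishings is an efficient shortcut.
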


The proof is similar to that of \cref{prop:1dN=2}, so we omit it.

\subsection{Quantization}

Let us return to the phase space of supersymmetric mechanics $(\sfX, \omega)$ which is defined for any Riemannian manifold $(M, g)$.

\begin{lm}
  The symplectic supermanifold $(\sfX, \omega)$ is exact with primitive one-form
  \begin{equation}\label{eqn:primitive}
    \lambda = \sfp + \im \, a + g(\vartheta_1, \nabla \vartheta_1) - g(\vartheta_2, \nabla \vartheta_2) .
  \end{equation}
  In fact, there is a symplectomorphism $\sfX\cong \T^*(\Pi\T M)$, where the odd coordinates on the base are given by the components of $\vartheta_1 + \vartheta_2$.
\end{lm}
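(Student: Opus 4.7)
The plan is to prove the two claims separately but in parallel: first I will verify the formula $\d\lambda=\omega$ by a direct coordinate computation on the local model coming from \Cref{thm:rothstein}, and then I will exhibit a change of (graded) coordinates on $\sfX$ in which $\omega$ becomes the canonical symplectic form on $\T^*(\Pi\T M)$, with $\lambda$ (up to the closed correction term $\im\, a$) becoming the canonical Liouville one-form.

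For the first claim, note that the term $\im\, a$ is closed by hypothesis, so it does not contribute to $\d\lambda$. The bosonic part $\d \sfp$ evidently yields $\d p_i\wedge \d q^i$, which is the first summand of $\omega$ in the local expression derived after \Cref{prop:1dN=2}. It therefore remains to check the key identity
\[
  \d\bigl(g(\vartheta,\nabla\vartheta)\bigr) \;=\; \tfrac12 {\rm R}(\vartheta,\vartheta) + g(\nabla\vartheta,\nabla\vartheta)
\]
for $\vartheta$ one of the tautological odd sections of $\pi^*\T_M$ (the factor of $\frac12$ matches the normalisations in the coordinate form of $\omega$). This is a standard curvature identity: writing $\nabla\vartheta = \d\vartheta^i - \Gamma^i_{jk}\vartheta^j\d q^k$ in local coordinates and applying $\d$, the $\d\vartheta\wedge\d\vartheta$ piece gives $g(\nabla\vartheta,\nabla\vartheta)$, the $\d q\wedge\d q$ piece reassembles into the curvature two-form $R$ using $\partial_{[i}\Gamma^m_{j]k}-\Gamma^m_{\ell[i}\Gamma^\ell_{j]k}$, and the mixed terms cancel by metric compatibility $\nabla g=0$. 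Applying this with sign $+$ for $\vartheta_1$ and sign $-$ for $\vartheta_2$ and adding reproduces the entire fermionic part of $\omega$.

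For the symplectomorphism, introduce the change of odd coordinates
\[
  \theta^i \;=\; \vartheta_1^i + \vartheta_2^i, \qquad \pi_i \;=\; g_{ij}\bigl(\vartheta_1^j - \vartheta_2^j\bigr),
\]
so that $\theta^i$ are the fiber coordinates on $\Pi\T M$ and $\pi_i$ will play the role of their conjugate momenta. Using the Poisson brackets $\{\vartheta_a^i,\vartheta_b^j\} = \tfrac12 (-1)^{a+1}\delta_{ab}g^{ij}$ of \Cref{prop:1dN=2} one verifies $\{\theta^i,\pi_j\}=\delta^i_j$ together with $\{\theta^i,\theta^j\}=\{\pi_i,\pi_j\}=0$ (modulo the expected curvature contributions that are absorbed into the Rothstein description), and that the even brackets are unchanged. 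Hence the map $(q^i,p_i,\vartheta_1,\vartheta_2)\mapsto(q^i,\theta^i,p_i,\pi_i)$ identifies $(\sfX,\omega)$ with the symplectic supermanifold attached by \Cref{thm:rothstein} to the data $(X,\omega_0,V,g',\nabla') = (\T^*(\Pi\T M),\,\omega_{\mathrm{std}},\,0,\,0,\,0)$ pulled back along $\pi\colon \T^*M\to M$; equivalently, it identifies it with $\T^*(\Pi\T M)$ equipped with its canonical even symplectic form. Under this identification $\lambda - \im\, a$ becomes the canonical Liouville one-form $p_i\,\d q^i + \pi_i\,\d\theta^i$ (again up to the connection-induced terms built into the Rothstein presentation), which completes the proof.

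The main obstacle is the bookkeeping: keeping the signs straight between the two copies of $\Pi\T M$ (the metric on the supermanifold comes with opposite signs on the two factors), and matching the curvature contributions in the Rothstein model for $\T^*(\Pi\T M)$ with those already present in $\omega$. Once the sign conventions are fixed these are routine verifications, and the geometric content is simply that the fermionic half of $\omega$ is the obvious symplectic pairing between $\vartheta_1+\vartheta_2$ and $g(\vartheta_1-\vartheta_2,-)$, twisted by the Levi--Civita connection.
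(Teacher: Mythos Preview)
The paper states this lemma without proof, so there is no argument to compare yours against. Your approach---a direct verification of $\d\lambda=\omega$ in Rothstein's local model, followed by the linear change of odd coordinates $\theta=\vartheta_1+\vartheta_2$, $\pi=g(\vartheta_1-\vartheta_2,-)$---is the natural one and is correct in outline.

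Two small points of care. First, your parenthetical ``modulo the expected curvature contributions'' after $\{\theta^i,\theta^j\}=\{\pi_i,\pi_j\}=0$ is misplaced: from the table of brackets in \Cref{prop:1dN=2} these vanish exactly, with no correction. The connection and curvature terms live entirely in the brackets involving $p_j$, namely $\{\theta^i,p_j\}=\Gamma^i_{jk}\theta^k$, $\{\pi_i,p_j\}$, and $\{p_i,p_j\}$. To obtain genuinely canonical brackets you must also shift the even momentum, e.g.\ replace $p_i$ by $p_i+\Gamma^k_{ij}\theta^j\pi_k$ (or its appropriate symmetrisation), which simultaneously kills the connection terms and, via the Bianchi identity, the curvature in $\{p_i,p_j\}$. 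This is the step that actually produces the Liouville form $p_i\,\d q^i+\pi_i\,\d\theta^i$ on the nose rather than ``up to connection-induced terms''.

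Second, the sentence invoking \Cref{thm:rothstein} with base ``$\T^*(\Pi\T M)$'' does not type-check: in Rothstein's description the base $X$ is an ordinary even manifold, not a supermanifold. What you mean to say is simply that in the shifted coordinates $(q,\theta,p,\pi)$ the form $\omega$ becomes the canonical symplectic form on the cotangent bundle of the supermanifold $\Pi\T M$, which is the content of the second assertion.
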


We will now apply the procedure of geometric quantization to the symplectic supermanifold $\sfX$. Choose a flat real line bundle $(L, \nabla_L, g)$ on $M$ equipped with a metric parallel with respect to $\nabla_L$. Equivalently, $(L, \nabla_L, g)$ can be encoded in the principal $\Z/2$-bundle of its unit frames. Define a connection on $\pi^* L$, a line bundle on $\sfX$, by the formula
\[
  \Tilde{\nabla} = \nabla_L + \lambda
\]
where $\lambda$ is the primitive \eqref{eqn:primitive} for the symplectic form $\omega$ on $\sfX$.
Notice that since $\nabla_L$ is flat, the curvature of $\Tilde{\nabla}_L$ is automatically $\omega$. Then $(\pi^* L, \Tilde{\nabla}, g)$ defines a prequantization of $\sfX$. We also have a polarization on $\sfX$ given by the fibers of $\sfX\rightarrow \Pi\T M$.

\begin{lm}
The geometric quantization of $\sfX$ is the $\Z/2$-graded vector space
\[\cH = \Gamma(\Pi\T M, \pi^* L) \cong \Gamma(M, \wedge^\bullet \T^*_M\otimes L).\]
\end{lm}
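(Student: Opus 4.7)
The strategy is to run the standard geometric quantization recipe explicitly: identify the polarized sections of the prequantum line bundle $(\pi^* L, \Tilde{\nabla}, g)$ with respect to the polarization by the fibres of $\sfX \to \Pi \T M$, and then reinterpret the resulting sections via the standard dictionary between functions on an odd tangent bundle and differential forms on $M$.

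The first step is to use the symplectomorphism $\sfX \cong \T^*(\Pi \T M)$ asserted in the preceding lemma. On the even part this is the identity on $\T^* M$, while on the odd part one takes the components of $\vartheta_1 + \vartheta_2$ as base coordinates on $\Pi \T M$ and (a suitable multiple of) $\vartheta_1 - \vartheta_2$ as their conjugate odd momenta; the opposite signs in the metric on the two summands of $\pi^*(\T_M \oplus \T_M)$ make this pairing nondegenerate. In these variables the primitive one-form $\lambda$ specialises to the canonical Liouville form of $\T^*(\Pi \T M)$ twisted by the globally defined closed one-form $\im a$ pulled back from $M$. In particular, $\lambda$ vanishes on vectors tangent to the fibres of $\sfX \to \Pi \T M$, and consequently $\Tilde{\nabla} = \nabla_L + \lambda$ restricts to the flat connection $\nabla_L$ along the polarization.

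The polarized sections are therefore those that are covariantly constant along the vertical fibres with respect to $\nabla_L$. Since both $\pi^* L$ and the restricted connection are pulled back from the base $\Pi \T M$, such sections are in bijection with arbitrary sections of $L$ on $\Pi \T M$, giving $\cH = \Gamma(\Pi \T M, \pi^* L)$. The second identification in the lemma then follows from the standard equivalence $C^\infty(\Pi V) \cong \Gamma(M, \wedge^\bullet V^*)$ applied to $V = \T_M$, after tensoring with the pullback of $L$.

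The main technical obstacle is the first step, specifically verifying the symplectomorphism $\sfX \cong \T^*(\Pi \T M)$ at the level of the full Rothstein two-form \eqref{eqn:rothstein}. After the odd change of variables $\vartheta_\pm = \vartheta_1 \pm \vartheta_2$, the covariant-derivative terms $g(\nabla \vartheta_1, \nabla \vartheta_1) - g(\nabla \vartheta_2, \nabla \vartheta_2)$ collapse into the cross pairing $\frac{1}{2} g(\nabla \vartheta_+, \nabla \vartheta_-)$ characteristic of a canonical cotangent symplectic form, and the curvature contributions $(R \vartheta_1, \vartheta_1) - (R \vartheta_2, \vartheta_2)$ reassemble into the Rothstein curvature correction appropriate to the cotangent bundle of the supermanifold $\Pi \T M$. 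This is a routine but book-keeping-heavy local coordinate calculation; once it is in place, the argument above delivers both isomorphisms claimed in the lemma.
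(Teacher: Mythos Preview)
The paper states this lemma without proof, so there is nothing to compare your argument against directly; your outline is the natural way to justify the statement and is correct in substance.

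One small imprecision is worth flagging. You assert that in the $\vartheta_\pm$ coordinates the primitive $\lambda$ becomes the canonical Liouville form of $\T^*(\Pi\T M)$ (up to the $\im a$ term) and therefore vanishes on vectors tangent to the fibres. A direct computation shows that the odd part
\[
g(\vartheta_1,\nabla\vartheta_1)-g(\vartheta_2,\nabla\vartheta_2)
=\tfrac12\bigl(g(\vartheta_+,\nabla\vartheta_-)+g(\vartheta_-,\nabla\vartheta_+)\bigr)
\]
is symmetric in $\vartheta_+$ and $\vartheta_-$, so contracting with $\partial/\partial\vartheta_-$ leaves a nonzero term proportional to $g(\vartheta_+,-)$. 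Thus $\lambda$ is not literally the Liouville form and does not vanish along the polarization. What is true is that the difference is exact (in flat local coordinates it is $-\tfrac12\,\d\bigl(g(\vartheta_+,\vartheta_-)\bigr)$), so after a gauge transformation of the prequantum line bundle---equivalently, after multiplying sections by $\exp\!\bigl(-\tfrac12 g(\vartheta_+,\vartheta_-)\bigr)$---the covariant-constancy condition along the fibres reduces to independence of $\vartheta_-$, and your identification $\cH\cong\Gamma(\Pi\T M,\pi^*L)$ goes through. This is a cosmetic fix rather than a gap in the strategy.
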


The supersymmetry action on the phase space quantizes to an action of the 1d $\cN=2$ supertranslation algebra on $\cH$ with
\begin{equation}
\sfQ_1 + \sfQ_2 = \nabla + (\alpha+\im a)\wedge(-).
\label{eq:supersymmetrytwisteddeRham}
\end{equation}
In addition, it admits an $R$-symmetry group $G_R=\SO(1, 1) = \R^\times$ acting as the grading operator on $\wedge^\bullet \T^*_M$. From \eqref{eq:supersymmetrytwisteddeRham} we obtain the following.

\begin{prop}
The A twist of $\cH$ is given by the twisted de Rham complex
\[(\Gamma(M, \Sym(\T^*_M[-1])\otimes L), \nabla + (\alpha+\im a)\wedge(-)).\]
\label{prop:Atwist}
\end{prop}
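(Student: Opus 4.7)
The plan is to directly unwind the definition of the A twist applied to the representation $\cH$ constructed in the preceding lemma. By the definition of the A twist of a representation of the 1d $\cN=2$ supertranslation algebra, its underlying graded vector space is the representation regraded by the chosen $\R^\times \subset G_R = \SO(1,1)$, and its differential is $Q_A = Q_1 + Q_2$. Equation (\ref{eq:supersymmetrytwisteddeRham}) identifies the quantized action of $Q_A$ on $\cH$ as the operator $\nabla + (\alpha + \im a) \wedge (-)$, where $\nabla$ denotes the $L$-twisted de Rham differential on $\Gamma(M, \wedge^\bullet \T^*_M \otimes L)$. So the content of the proposition reduces to matching the R-symmetry grading with the cohomological form degree.

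For this grading identification, I would invoke the remark just before the proposition that $G_R = \SO(1,1) = \R^\times$ acts on the quantized Hilbert space as the grading operator on $\wedge^\bullet \T^*_M$. Under the chosen embedding $\R^\times \subset \SO(1,1)$ each wedge power $\wedge^k \T^*_M$ then acquires weight $k$, which is precisely the internal degree in $\Sym(\T^*_M[-1])$. This gives the isomorphism of graded vector spaces $\cH \cong \Gamma(M, \Sym(\T^*_M[-1]) \otimes L)$ compatibly with the R-symmetry grading, and under this isomorphism the A-twist differential is exactly $\nabla + (\alpha + \im a) \wedge (-)$.

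Finally I would observe that the resulting differential squares to zero. This is automatic from the fact that $Q_A$ is a square-zero element of $\g_W$, but it can also be checked directly: $\nabla^2 = 0$ because $(L, \nabla_L)$ is flat; the cross terms combine via the graded Leibniz rule to $\d(\alpha + \im a) \wedge (-)$, which vanishes since $\alpha$ and $a$ are closed; and $((\alpha + \im a) \wedge -)^2 = 0$ as $\alpha + \im a$ is a one-form.

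Insofar as there is an obstacle, it lies in the second paragraph: one must check that the $\R^\times \subset \SO(1,1)$ used to grade $Q_A$ acts on $\wedge^k \T^*_M$ with weight $+k$ rather than $-k$, so that the cohomological convention genuinely matches $\Sym(\T^*_M[-1])$. This amounts to tracing through the quantization of the R-symmetry Hamiltonian $\sfR = \frac{1}{2}(\vartheta_1 + \vartheta_2, \vartheta_1 + \vartheta_2)$ of \cref{prop:1dN=2} in the chosen polarization, and confirming that the combination $\vartheta_1 + \vartheta_2$ — the one surviving the polarization — has positive weight under the selected grading.
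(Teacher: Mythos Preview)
Your proposal is correct and follows exactly the paper's reasoning: the paper simply states that the proposition follows from \eqref{eq:supersymmetrytwisteddeRham} together with the observation that the $R$-symmetry $\SO(1,1)=\R^\times$ acts as the form-degree grading on $\wedge^\bullet\T^*_M$. Your additional check that the differential squares to zero and your remark on the sign of the weight are reasonable elaborations, but the paper treats the result as immediate.
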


Let us now assume that $M$ has a K\"ahler structure and the $(1, 0)$ parts of $\alpha$ and $a$ are closed. 

By \cref{prop:1dN=4} the phase space $\sfX$ carries an action of the 1d $\cN=4$ supertranslation algebra which gives rise to an action on $\cH$ given by
\begin{align*}
\sfQ_1^+ + \sfQ_2^+ &= \partial + (\Bar{\beta}+\im \, a^{0, 1})\wedge(-) \\
\sfQ_1^- + \sfQ_2^- &= \Bar{\partial} + (\beta + \im \, a^{1, 0})\wedge(-) \\
\end{align*}

We are now ready to state the main observation of this paper.

\begin{thm}
Consider the family of twisting supercharges $\sfQ_\hbar = \sfQ_1^- + \sfQ_2^- + \hbar(\sfQ_1^+ + \sfQ_2^+)$ parametrized by $\hbar\in\CP^1$. Then:
\begin{itemize}
\item The B twist ($\hbar = 0$) of $\cH$ is \[\left(\Omega^\bullet(M; L)\; , \; \Bar{\partial} + (\beta + \im \, a^{1, 0})\wedge(-)\right).\]
\item The A twist ($\hbar = 1$) of $\cH$ is \[\left(\Omega^\bullet(M; L) \; , \; \nabla + (\alpha + \im \, a)\wedge(-)\right).\]

\item Let \[A=\Gamma(M, \Sym(\Omega^{0, 1}_M[-1])\otimes \Sym(\T^{1, 0}_M[1]))\] be the dg algebra equipped with the differential $\dbar + \beta \wedge (-)$.
The family $\sfQ_\hbar$ provides a BV quantization of the $A$-module $\cH$, where the Poisson bracket on $A$ is the Schouten bracket.
\end{itemize}
\label{thm:maintheorem}
\end{thm}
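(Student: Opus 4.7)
The first two bullets follow by direct substitution into
\[
\sfQ_\hbar = (\sfQ_1^- + \sfQ_2^-) + \hbar(\sfQ_1^+ + \sfQ_2^+).
\]
At $\hbar = 0$ only the first summand survives, giving $\dbar + (\beta + \im a^{1,0})\wedge(-)$. At $\hbar = 1$ the two summands combine via $\partial + \dbar = \nabla$ (the extension of $\d$ by the flat connection on $L$), together with $\beta + \Bar{\beta} = \alpha$ and $a^{1,0} + a^{0,1} = a$, producing $\nabla + (\alpha + \im a)\wedge(-)$.

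For the third bullet, the plan is to write $\sfQ_\hbar = \d_M + \hbar \Delta^{(1)}$ with $\d_M = \sfQ_1^- + \sfQ_2^-$ and $\Delta^{(1)} = \sfQ_1^+ + \sfQ_2^+$, and all higher $\Delta^{(n)}$ vanishing. Three things must then be established in line with \cref{def:BVquantization}: (i) that $(\cH, \d_M)$ is a dg-module over $(A, \dbar + \beta\wedge(-))$; (ii) that $\sfQ_\hbar^2 = 0$ for every $\hbar$; (iii) that $\Delta^{(1)}$ is a second-order differential operator on the $A$-module $\cH$ whose symbol is the Schouten bracket on $A$. For (i), compatibility of the Dolbeault pieces is the standard Leibniz rule, and compatibility of the $(1,0)$-multiplicative twist with the contraction action of $\T^{1,0}\subset A$ follows from the Cartan-type identity $\iota_X(\eta\wedge h) = \eta(X) h - \eta\wedge \iota_X h$ for $\eta\in\Omega^{1,0}$ and $X\in\T^{1,0}$, combined with $\d_A X = \dbar X + \beta(X)$. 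For (ii), the quantization of \cref{prop:1dN=4} gives $(\sfQ_1^-+\sfQ_2^-)^2 = (\sfQ_1^++\sfQ_2^+)^2 = 0$ together with
\[
\{\sfQ_1^-+\sfQ_2^-,\; \sfQ_1^++\sfQ_2^+\} \;=\; \tfrac{1}{2}\sfH - \tfrac{1}{2}\sfH \;=\; 0.
\]
For (iii), the piece $(\Bar{\beta} + \im a^{0,1})\wedge(-)$ is wedge by an element of $\Omega^{0,1}\subset A$, hence $A$-linear and of order zero, while the $\partial$-piece has order two with Schouten symbol by \cref{ex:complexdifferentialformsBV} (the flat $\nabla_L$ contributing no extra terms).

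The main obstacle I anticipate is reconciling the extra $\im a^{1,0}$ term appearing in $\d_M$ with the dg-$A$-module structure, since $a^{1,0}\in\Omega^{1,0}$ does not naturally appear as a derivation of $A$. The cleanest route is either to absorb $\beta \mapsto \beta + \im a^{1,0}$ in the definition of $A$ (both are closed $(1,0)$-forms and play interchangeable roles), or to verify directly that the Leibniz failure $\im a^{1,0}(X) h$ for $X\in\T^{1,0}$ is compensated by an appropriate twist of the $A$-action on $\cH$. Once this point is handled, all remaining verifications reduce to pointwise computations in local holomorphic coordinates and an appeal to \cref{ex:complexdifferentialformsBV}.
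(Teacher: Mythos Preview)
Your proof is correct and follows essentially the same route as the paper, which is terser: it simply writes out $\sfQ_\hbar$ explicitly and cites \cref{ex:complexdifferentialformsBV} to conclude that $\partial$ is second order with Schouten symbol, then observes that wedging by $\Bar\beta + \im a^{0,1}$ is of lower order and hence does not affect the symbol. The paper does not address your concern about the extra $\im a^{1,0}$ term inside its proof; the resolution appears in the paragraph immediately after the theorem, where the local system $\cL_a = (\cO_\sfX, \d + \im a^{1,0})$ is introduced---this is precisely your second suggested fix (a twist of the module rather than of $A$).
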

\begin{proof}
The first two claims are clear. Let us now prove the last claim. The family of twisting supercharges is
\[\sfQ_\hbar = \Bar{\partial} + (\beta+\im \, a^{1, 0})\wedge(-) + \hbar(\partial + (\Bar{\beta} + \im \, a^{0, 1})\wedge(-)).\]
As explained in \cref{ex:realdifferentialformsBV,ex:complexdifferentialformsBV}, $\partial$ viewed as a differential operator on the $A$-module $\cH$ has order 2 with symbol the Schouten bracket. At the same time $(\Bar{\beta}+\im a^{0, 1})\wedge(-)$ has order 1, so it does not contribute to the symbol.
\end{proof}

Consider the derived zero locus $\sfX = \R\beta^{-1}(0)$ of the one-form $\beta$. Its derived algebra of functions is
\[A=\R\Gamma(\sfX, \cO)\cong\Gamma(M, \Sym(\Omega^{0, 1}_M[-1])\otimes \Sym(\T^{1, 0}_M[1]))\]
equipped with the differential $\Bar{\partial} + \beta\wedge(-)$. As explained in \cref{sect:BVquantization}, the algebra $A$ carries a degree $1$ Poisson structure given by the Schouten bracket. Define the local system $\cL_a = (\cO_{\sfX}, \d + \im a^{1, 0})$ on $\sfX$. The space $\sfX$ admits a canonical square root $\det(\bL_{\sfX})^{\frac12}$ of the canonical bundle given by the canonical bundle of $M$. Consider the dg $A$-module
\[M = \R\Gamma(\sfX, \det(\bL_{\sfX})^{\frac12}\otimes L\otimes \cL_a)[-\dim M]\cong \Omega^\bullet(M; L)\]
equipped with the differential $\Bar{\partial} + (\beta + \im a^{1, 0})\wedge(-))$. The main claim of \cref{thm:maintheorem} is that the deformation from the B twist to the A twist corresponds to a BV quantization of the former.

The above observation allows us to give a model of the A twist even when the base manifold $M$ is infinite-dimensional (for instance, an infinite-dimensional Fr\'echet manifold). Indeed, according to \cref{conj:perverseBV} we may model a BV quantization of $\sfX$ via the cohomology of the perverse sheaf $P_\sfX$ on the zero locus $\beta^{-1}(0) = t_0(\sfX)$. The choice of the orientation data is manifested in the finite-dimensional situation in the freedom of choosing $L$.

\begin{proposal}
Let $M$ be an (infinite-dimensional) K\"ahler manifold and $\alpha,a$ one-forms whose $(1, 0)$ parts $\beta=\alpha^{1, 0}, a^{1, 0}$ are closed. Suppose the zero locus $\beta^{-1}(0)=X$ admits the structure of a $(-1)$-shifted symplectic algebraic scheme $\sfX$ (so that $t_0(\sfX) = X$) equipped with an orientation data $\det(\bL_\sfX)^{\frac12}$. Consider the supersymmetric quantum mechanics into $M$ twisted by $\alpha+\im a$. Then the space of states in the A twist, for generic parameters, is
\[\cH = \R\Gamma(X, P_{\sfX}\otimes \cL_a).\]
\label{mainproposal}
\end{proposal}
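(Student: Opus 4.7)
The plan is to bootstrap from the finite-dimensional statement of Theorem \ref{thm:maintheorem} and then invoke Conjecture \ref{conj:perverseBV} to replace the (ill-defined) infinite-dimensional BV quantization by a finite-dimensional object. Provisionally assume $M$ is finite-dimensional. Then Theorem \ref{thm:maintheorem}, applied to the twisted situation, presents the family of A-model complexes parametrized by $\hbar$ as a BV quantization of the dg $A$-module $\cH \simeq \R\Gamma(\sfX, \det(\bL_\sfX)^{1/2}\otimes L\otimes \cL_a)[-\dim M]$, where $A=\R\Gamma(\sfX,\cO)$ with its Schouten $\mathbf{P}_0$-structure and the orientation data is the canonical bundle of $M$ restricted to $X$. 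The $(0,1)$-form contribution $\im a^{0,1}$ to $\sfQ_1^+ + \sfQ_2^+$ only modifies $\Delta^{(1)}$ by a first-order operator, which is absorbed into the twist by the flat line bundle $\cL_a$ and does not affect the $\mathbf{P}_0$-structure on $A$.

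Second, I would identify this quantization with the canonical one produced by (the stacky refinement of) Proposition \ref{prop:canonicalBVquantization}. Since, in the homotopical setting, the space of BV quantizations lifting a given $\mathbf{P}_0$-structure is essentially a torsor with contractible components, the two quantizations agree up to equivalence. Applying Conjecture \ref{conj:perverseBV} to $\sfX$ with its orientation data and with coefficients twisted by the local system $\cL_a$ then yields a quasi-isomorphism
\[
  (\R\Gamma(\sfX, \det(\bL_\sfX)^{1/2}\otimes L\otimes \cL_a)(\!(\hbar)\!),\Delta_\hbar)
  \;\simeq\; \R\Gamma(X, P_\sfX\otimes \cL_a)(\!(\hbar)\!).
\]
Restricting to the generic $\hbar$ fiber recovers the A-twist Hilbert space in the form asserted by the proposal.

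The hard part is that this chain of reasoning is not literally available when $M$ is genuinely infinite-dimensional, which is the case of interest. The operators $\partial,\Bar{\partial},\nabla$ on $\Omega^\bullet(M;L)$ are not rigorously defined, and even the $\mathbf{P}_0$-structure on $A$ and the canonical BV quantization of Proposition \ref{prop:canonicalBVquantization} require analytic input one does not have. The proposal sidesteps this by asserting that the right-hand side, which depends only on the \emph{finite-dimensional} $(-1)$-shifted symplectic scheme $\sfX$, is the correct definition. Consequently, the real content lies in the case-by-case verification that $\beta^{-1}(0)$ admits a $(-1)$-shifted symplectic scheme structure with a natural orientation data --- precisely what is carried out for each example in Sections \ref{sect:2dA}--\ref{sect:G2monopoles}. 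The qualifier ``generic parameters'' allows for the expected wall-crossing: the perverse sheaf $P_\sfX$ and its twist by $\cL_a$ are locally constant in $a$ away from a real-codimension-one locus, and the proposal asserts the stated formula away from such walls.
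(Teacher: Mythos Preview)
Your overall strategy---use \cref{thm:maintheorem} to exhibit the family $\sfQ_\hbar$ as a BV quantization, then invoke \cref{conj:perverseBV} to pass to the perverse sheaf, and finally observe that only the right-hand side survives in the infinite-dimensional regime---matches the paper's motivation. Note that the statement is a \emph{Proposal}, not a theorem: the paper does not prove it but only justifies it via the two remarks that follow. Your write-up is in the same spirit.

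Two points of divergence are worth flagging. First, your step identifying the BV quantization coming from \cref{thm:maintheorem} with the canonical one of \cref{prop:canonicalBVquantization} is not in the paper, and your justification (``the space of BV quantizations lifting a given $\bP_0$-structure is essentially a torsor with contractible components'') is not established anywhere and is not obviously true in this generality. The paper simply sidesteps this by treating \cref{conj:perverseBV} as applying to the BV quantization at hand.

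Second, and more substantively, your reading of ``generic parameters'' is different from the paper's. You interpret it as genericity in the one-form $a$, with wall-crossing along a real-codimension-one locus. The paper's remark immediately after the Proposal makes clear that ``generic'' refers to $\hbar$: \cref{thm:maintheorem} produces a BV quantization over $k[\hbar]$, while \cref{conj:perverseBV} only identifies it with the perverse-sheaf cohomology after inverting $\hbar$, i.e.\ over $k(\!(\hbar)\!)$. The point is that if the complex is defined over $k[\hbar]$, the fiber at a generic value of $\hbar$ (equivalently, over $k(\hbar)$) has the same dimension as the $k(\!(\hbar)\!)$-vector space. So ``generic parameters'' is about avoiding special values of $\hbar$ (in particular $\hbar=0$, the B twist), not about the one-form $a$.
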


\begin{remark}
Let us explain the caveat regarding generic parameters in the above proposal. By \cref{thm:maintheorem} the twist with respect to $\sfQ_\hbar$ considered over $k[\hbar]$ is a BV quantization. By \cref{conj:perverseBV} if we work over $k(\!(\hbar)\!)$ the BV quantization is isomorphic to the cohomology of the perverse sheaf $P_\sfX$. But if the BV quantization is defined over $k[\hbar]$, the vector space for generic $\hbar$ (i.e. over $k(\hbar)$) will have the same dimension as the $k(\!(\hbar)\!)$-vector space.
\end{remark}

\begin{remark}
We expect that the definition of the perverse sheaf $P_\sfX$ may be extended to the setting of derived complex-analytic geometry, so the choice of the algebraic structure on $\sfX$ will not matter.
\end{remark}

\section{Gauged supersymmetric mechanics}
\label{sect:gaugedmechanics}

In this section we recall the coupling of supersymmetric mechanics to gauge theory. We also write down the action functionals of the corresponding theories.

\subsection{$\cN=2$ case}

As before, let $S$ be a one-dimensional oriented Riemannian manifold. Let $\d t$ be a positive normalized frame of the cotangent bundle. Consider the following additional data:
\begin{itemize}
\item $G$ is a Lie group equipped with a nondegenerate symmetric bilinear pairing on its Lie algebra $\fg$.

\item $M$ is a Riemannian manifold equipped with a $G$-action by isometries.

\item $\alpha$ and $a$ are one-forms on $M$, which are $G$-invariant and equivariantly closed: $\d\alpha = 0$ and $\iota_{\xi_x} \alpha = 0$ for every $x\in\g$, where $\xi_x$ is the vector field given by the infinitesimal $G$-action.

\item $h$ is a locally constant $G$-invariant function on $M$.
\end{itemize}

The $G$-gauged $\cN=2$ supersymmetric mechanics has the following fields:

\begin{itemize}
\item A principal $G$-bundle $P\rightarrow S$.

\item A connection $A$ on $P$.

\item An odd section $\eta\in \Pi\Gamma(S, \ad P)$.

\item An odd one-form $\lambda \in \Pi\Omega^1(S, \ad P)$. 

\item Sections $\varphi, \xi \in\Gamma(S, \ad P)$.

\item A section $\phi$ of $P\times^G M\rightarrow S$.

\item An odd section $\chi$ of $\phi^*(P\times^G \T M) \rightarrow S$.

\item An odd section $\psi$ of $\phi^*(P\times^G \T M)\otimes \T^* S\rightarrow S$.
\end{itemize}

\begin{remark}
Here the fields $(A, \eta, \lambda, \xi, \varphi)$ belong to the gauge multiplet and $(\phi, \chi, \psi)$ to the matter multiplet. 
\end{remark}

If $\sigma$ is a section of $\ad P \otimes \cA$, where $\cA$ is some bundle on $S$, then we denote by $\Hat{\sigma}$ the induced $\cA$-valued vector field on $P \times^G M$. For instance, when $M$ is a vector space equipped with a linear orthogonal $G$-action, then $\Hat{\sigma} = \sigma \cdot \phi$.

The variation of the fields with respect to the A supercharge $Q_A$ is given by
\begin{equation}
\begin{cases}
\delta \phi^i = \im \, \chi^i \\
\delta \chi = \Hat{\varphi} \\
\delta \xi = \im \, \eta \\
\delta \eta = [\varphi,\xi] \\
\delta \psi^i = -\d_A\phi^i - \im \, \Gamma^i_{jk} \chi^j \psi^k - \alpha^i\d t \\
\delta A = \im \, \lambda \\
\delta \lambda = - \d_A \varphi \\
\delta \varphi = 0 
\end{cases}
\label{eq:1dAgaugedN=2supercharge}
\end{equation}

The bosonic part of the action is given by (see e.g. \cite[Theorems 3.46 and 6.33]{DeligneFreed})
\begin{equation}
S_{bosonic} = \int_S \dvol_S\left(\frac{1}{2}|\d_A\phi|^2 + \frac{1}{2}|\alpha|^2 + (\d_A\varphi, \d_A \xi) - \frac{1}{2}|[\varphi, \xi]|^2 + (\hat{\varphi},\hat{\xi}) + \phi^* h\right) + \im \int_S \phi^* a
\label{eq:1dN=2gaugedsigmamodel}
\end{equation}

\begin{remark}
Let us explain the meaning of the term $\int_S \dvol_S\, \phi^* h$ in the action. The function $h$ gives a grading operator $P_h$ on the Hilbert space of the theory. Let $\widehat{H}$ be the Hamiltonian operator. Then the addition of the term $\int_S \dvol_S\, \phi^* h$ to the action corresponds to deforming the partition function $\tr(\exp(-\widehat{H}))$ to $\tr(\exp(-\widehat{H})\exp(-P_h))$.
\end{remark}

\subsection{$\cN=4$ case}

As in \cref{sect:Kahlermechanics}, in the case the target manifold is K\"ahler, the supersymmetry is enhanced. Consider the following data:
\begin{itemize}
\item $G$ is a Lie group equipped with a nondegenerate symmetric bilinear pairing on its Lie algebra.

\item $M$ is a K\"ahler manifold equipped with a $G$-action by K\"ahler isometries with the moment map $\mu\colon M\rightarrow \g^*$, so that $\d \mu(x) = -\iota_{\xi_x} \omega$ for every $x\in\g$.

\item $\alpha,a$ are real one-forms on $M$ whose $(1, 0)$ parts are $G$-invariant and equivariantly closed.

\item $h$ is a locally constant $G$-invariant function on $M$.
\end{itemize}

The $G$-gauged $\cN=4$ supersymmetric mechanics $S\rightarrow M$ has the following fields:
\begin{itemize}
\item A principal $G$-bundle $P\rightarrow S$.

\item A connection $A$ on $P$.

\item Odd sections $\eta, c, \nu\in\Pi\Gamma(S, \ad P)$.

\item An odd one-form $\lambda\in\Pi\Omega^1(S, \ad P)$.

\item Sections $\varphi,\xi, \sigma\in\Gamma(S, \ad P)$.

\item A section $\phi$ of $P\times^G M\rightarrow S$.

\item An odd section $\chi$ of $\phi^*(P\times^G \T M) \rightarrow S$.

\item An odd section $\psi$ of $\phi^*(P\times^G \T M)\otimes \T^* S\rightarrow S$.
\end{itemize}

The variation of the fields with respect to the A supercharge $Q_A$ is given by
\begin{equation}
\begin{cases}
\delta \phi^i = \im \, \chi^i \\
\delta \chi = \Hat{\varphi} \\
\delta \xi = \im \, \eta \\
\delta \eta = [\varphi, \xi] \\
\delta \psi^i = -\d_A \phi^i - \im\Gamma^i_{jk} \chi^j \psi^k - (\alpha^i + \sigma^j \partial_i \mu(e_j))\d t \\
\delta A = \im \, \lambda \\
\delta \sigma = \im \, \nu \\
\delta \lambda = -\d_A\varphi \\
\delta \nu = [\varphi, \sigma] \\
\delta c = -\star\d_A\sigma - \mu \\
\delta \varphi = 0.
\end{cases}
\label{eq:1dAgaugedN=4supercharge}
\end{equation}

The bosonic part of the action is (see \cite[Section 3]{Baptista} for the case $\alpha = 0$)
\begin{multline}
S_{bosonic} = \int_S \dvol_S\bigg(\frac{1}{2}|\d_A\sigma|^2 + \frac12 |\Hat{\sigma}|^2 + \frac{1}{2}|\d_A\phi|^2 + \frac{1}{2}|\mu|^2 + \frac{1}{2}|\alpha|^2 + (\d_A\varphi, \d_A \xi) \\
 - \frac{1}{2}|[\varphi, \xi]|^2 + (\hat{\varphi},\hat{\xi}) + ([\varphi,\sigma], [\xi,\sigma]) + \phi^* h\bigg) + \im \int_S \phi^* a 
\label{eq:1dN=4gaugedsigmamodel}
\end{multline}

\begin{remark}
Given the data as above, consider the Riemannian manifold $\tilde{M} = M\times \g$. Given a basis $\{e_i\}$ of $\g$ we denote the corresponding coordinates on the second factor by $\sigma^i$. Then the action \eqref{eq:1dN=4gaugedsigmamodel} of $\cN=4$ gauged supersymmetric mechanics coincides with the action \eqref{eq:1dN=2gaugedsigmamodel} of the $\cN=2$ gauged supersymmetric mechanics into $\tilde{M}$ with the one-form
\[\tilde{\alpha} = \alpha + \sum_i (\mu(e_i) \d\sigma^i + \sigma^i \d\mu(e_i))\]
if we match the fields as in \cref{tbl:N=2toN=4}.

\begin{table}[h]
\begin{tabular}{cc}
\hline
$\cN=2$ target $M \times \fg$ & $\cN=4$ target $M$ \\
\hline
$\phi$ & $\phi,\sigma$ \\
$\chi$ & $\chi,\nu$ \\
$\psi$ & $\psi, c$ \\
$A, \eta, \lambda, \varphi , \xi$ & $A, \eta, \lambda, \varphi , \xi$\\
\hline
\end{tabular}
\caption{1d $\cN=4$ fields in a $\cN=2$ description}
\label{tbl:N=2toN=4}
\end{table}
\label{rmk:N=2toN=4}
\end{remark}

Suppose $G$ acts freely and properly on $\mu^{-1}(0)\subset M$, so that $M/\!/G = \mu^{-1}(0) / G$ is a K\"ahler manifold equipped with a closed $(1, 0)$ form $[\beta]$. It is shown in \cite[Section 6B]{HKLR} that the low-energy approximation to the $G$-gauged $\cN=4$ supersymmetric mechanics $S\rightarrow M$ described by the action \eqref{eq:1dN=4gaugedsigmamodel} is described by the supersymmetric mechanics $S\rightarrow M/\!/G$.

Let $G_\C\supset G$ be a complex Lie group whose Lie algebra is the complexification of the Lie algebra of $G$. Moreover, suppose the $G$-action on $M$ extends to a holomorphic $G_\C$-action on $M$. We may then identify the K\"ahler quotient $M/\!/G$ with the GIT quotient $(M/G_\C)_s$, where the stable locus $(M/G_\C)_s\subset M/G_\C$ consists of $G_\C$-orbits intersecting $\mu^{-1}(0)$. Using the previous two observations we introduce the following version of \cref{mainproposal} in the presence of gauge symmetries.

\begin{proposal}
Let $G$ be a Lie group with complexification $G_\C$, $M$ be a K\"ahler manifold equipped with a $G$-structure preserving the K\"ahler structure with a moment map $\mu\colon M\rightarrow \g^*$, $\alpha$ and $a$ one-forms whose $(1, 0)$ parts $\beta=\alpha^{1, 0}$ and $a^{1, 0}$ are equivariantly closed and. Consider the induced one-form $[\beta]$ on the quotient stack $M/G_\C$. Consider a locally constant function $h$ on $M/G_\C$. Suppose the zero locus $[\beta]^{-1}(0)=X$ admits the structure of a $(-1)$-shifted symplectic algebraic stack $\sfX$ equipped with an orientation data $\det(\bL_\sfX)^{\frac12}$. Consider the $G$-gauged $\cN=4$ supersymmetric quantum mechanics into $M$ twisted by $\alpha+\im a$. Then the space of states in the A twist, for generic parameters, is
\[\cH = \R\Gamma(X, P_{\sfX}\otimes \cL_a).\]
The locally constant function $h$ on $X$ defines a grading on $\cH$.
\label{mainproposalstack}
\end{proposal}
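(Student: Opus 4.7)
The plan is to reduce the gauged $\cN=4$ case to the ungauged situation of \cref{thm:maintheorem} and \cref{mainproposal}. The first step is to use \cref{rmk:N=2toN=4} to realise the $G$-gauged $\cN=4$ mechanics into $M$ as a $G$-gauged $\cN=2$ mechanics into $\tilde M = M\times\g$ with closed one-form $\tilde\alpha = \alpha + \sum_i\bigl(\mu(e_i)\,\d\sigma^i + \sigma^i\,\d\mu(e_i)\bigr)$. The $(1,0)$ part $\tilde\beta$ of $\tilde\alpha$ is $G$-invariant and equivariantly closed, and its derived zero locus on $\tilde M$ simultaneously imposes $\beta=0$ and $\mu=0$. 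After taking the quotient by $G_\C$, this zero locus is exactly the derived zero locus of $[\beta]$ on $M/G_\C$, which by hypothesis is the $(-1)$-shifted symplectic stack $\sfX$.

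Next I would pass from the gauged theory to the low-energy effective description. By the analysis of \cite{HKLR} recalled after \eqref{eq:1dN=4gaugedsigmamodel}, integrating out the gauge-multiplet scalars $\varphi,\xi$ and the auxiliary component $\sigma$ produces an effective supersymmetric mechanics whose smooth target is the K\"ahler quotient $M/\!/G \cong (M/G_\C)_s$. The stacky description $M/G_\C$ is the natural home for the constraint $[\beta]=0$ beyond the stable locus, and the orientation data $\det(\bL_\sfX)^{1/2}$ provides the square-root of the canonical bundle needed to quantise the B-multiplet (generalising the choice of $L$ and the canonical bundle of $M$ in the ungauged argument).

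In the third step I would apply the stack version of \cref{thm:maintheorem}. The family of twisting supercharges $\sfQ_\hbar$ still acts on the formal Hilbert space built from the geometric quantisation of the gauged phase space, and the verification that $\sfQ_\hbar$ is a BV quantisation of the $\bP_0$-algebra of functions on $\sfX$ is local on $M$ and $G$-equivariant, so it descends to the quotient stack. Invoking \cref{conj:perverseBV} for the Artin stack $\sfX$, the generic fibre of this BV quantisation is identified with $\R\Gamma(X, P_\sfX\otimes \cL_a)$; the local system $\cL_a$ absorbs the twist by the closed real one-form $a^{1,0}$ exactly as in the ungauged case. The locally constant $G$-invariant function $h$ on $M$ descends to a locally constant function on $X$ and, by the grading remark following \eqref{eq:1dN=2gaugedsigmamodel}, induces the advertised grading operator on $\cH$.

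The chief obstacle, and the reason this is phrased as a proposal rather than a theorem, is twofold. First, it rests on \cref{conj:perverseBV} in the genuinely stacky setting, where the BV quantisation of \cite{PridhamBV} and the perverse sheaf of \cite{BBBBJ} are on considerably less firm footing than for schemes. Second, when $G$ does not act freely on $\mu^{-1}(0)$, strictly semistable orbits and nontrivial stabilisers contribute loci outside $(M/G_\C)_s$; reconciling the physical path integral with the algebro-geometric answer on $M/G_\C$ requires a careful treatment of these zero-mode directions in the gauge multiplet, precisely the step where one replaces the K\"ahler quotient $M/\!/G$ by the full stacky derived zero locus $\sfX$.
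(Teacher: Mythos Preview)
This statement is a \emph{Proposal}, not a theorem: the paper does not prove it, but motivates it by two observations immediately preceding the statement. Your justification is essentially the same as the paper's motivation, and in fact goes further than the paper does.

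The paper's argument is simply: (i) cite \cite[Section 6B]{HKLR} to say that the low-energy approximation of the $G$-gauged $\cN=4$ mechanics into $M$ is the ungauged mechanics into the K\"ahler quotient $M/\!/G$; (ii) identify $M/\!/G$ with the stable locus $(M/G_\C)_s$ via Kempf--Ness; (iii) invoke \cref{mainproposal} on the quotient. That is the entirety of the paper's justification. Your route through \cref{rmk:N=2toN=4} and the auxiliary target $\tilde M = M\times\g$ is not used by the paper at this point (that remark is used later, in \cref{sect:HaydysWitten}, to pass from an $\cN=2$ to an $\cN=4$ description in a specific example). Your discussion of the obstacles---the need for \cref{conj:perverseBV} in the Artin-stack setting and the non-free action outside the stable locus---is an honest account of why this remains a proposal rather than a theorem, and the paper does not spell this out.
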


\section{Compactification of principal bundles}
\label{sect:bundlecompactification}

In this short section we collect some results that will be useful for describing compactifications of gauge theories.

\subsection{Principal bundles}

Let $G$ be a finite-dimensional Lie group, $S, N$ closed manifolds and consider a principal $G$-bundle $P\rightarrow S\times N$.

\begin{defn}
A principal $G$-bundle $P\rightarrow S\times N$ is \defterm{trivializable along the fibers of $S\times N\rightarrow S$} if it admits a trivializing cover $\{U_i\times N\}$, where $\{U_i\}$ is an open cover of $S$.
\label{def:trivializablebundle}
\end{defn}

The space of smooth maps $\Map(N, G)$ forms a Fr\'echet Lie group under pointwise multiplication. Its Lie algebra is $C^\infty(N; \g)$. One has the following description of principal $\Map(N, G)$-bundles.

\begin{prop}
There is a 1:1 correspondence between isomorphism classes of principal $G$-bundles $P\rightarrow S\times N$ trivializable along the fibers of $\pi\colon S \times N\rightarrow S$ and isomorphism classes of principal $\Map(N, G)$-bundles $P_N\rightarrow S$.
\label{prop:bundlecorrespondence}
\end{prop}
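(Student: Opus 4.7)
The plan is to construct the two maps between sets of isomorphism classes, using transition cocycles and the smooth exponential law for maps out of the compact manifold $N$, and then verify that they are mutually inverse.

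Starting from a principal $G$-bundle $P\to S\times N$ which is trivializable along the fibers of $\pi$, by \cref{def:trivializablebundle} we may choose a cover $\{U_i\}$ of $S$ together with trivializations $\phi_i\colon P|_{U_i\times N}\xrightarrow{\sim}(U_i\times N)\times G$. The transition functions are smooth maps
\[
g_{ij}\colon (U_i\cap U_j)\times N\longrightarrow G
\]
satisfying the cocycle condition $g_{ij}g_{jk}=g_{ik}$ on $(U_i\cap U_j\cap U_k)\times N$. Since $N$ is closed (hence compact), the smooth exponential law gives a bijection $C^\infty(U\times N, G)\cong C^\infty(U, \Map(N, G))$ for any open $U\subset S$, which is moreover compatible with pointwise multiplication in the target. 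Applying it to each $g_{ij}$ yields smooth maps
\[
\tilde g_{ij}\colon U_i\cap U_j\longrightarrow \Map(N, G),
\]
and the cocycle condition is preserved because the exponential law is compatible with products and evaluation. The $\tilde g_{ij}$ then glue to a principal $\Map(N, G)$-bundle $P_N\to S$.

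In the opposite direction, a principal $\Map(N, G)$-bundle $P_N\to S$ may be described by a cocycle $\tilde g_{ij}\colon U_i\cap U_j\to \Map(N, G)$ after passing to a sufficiently fine cover trivializing $P_N$. Applying the inverse of the exponential law produces $g_{ij}\colon (U_i\cap U_j)\times N\to G$, still satisfying the cocycle condition, and these glue to a principal $G$-bundle $P\to S\times N$ which by construction is trivializable on each $U_i\times N$, hence trivializable along the fibers of $\pi$.

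It remains to check well-definedness on isomorphism classes and that the two constructions are mutually inverse. For the first, two different choices of trivializing cover and local sections for the same $P$ produce cohomologous cocycles $\{g_{ij}\}$ and hence, via the exponential law applied to the coboundary, cohomologous cocycles $\{\tilde g_{ij}\}$, yielding isomorphic $\Map(N,G)$-bundles; the converse direction is symmetric. Composing the two constructions returns the same cocycle up to refinement, so the maps are mutually inverse on isomorphism classes.

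The only substantive point is the smooth exponential law $C^\infty(U\times N, G)\cong C^\infty(U, \Map(N, G))$, where the right-hand side uses the Fr\'echet manifold structure on $\Map(N, G)$; this is the main technical input and is standard for $N$ compact, with the usual proof that evaluation and adjunction are smooth in the convenient calculus. Everything else is formal bookkeeping with \v{C}ech cocycles.
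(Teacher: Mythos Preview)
Your proof is correct and takes essentially the same approach as the paper. The paper phrases the two constructions more globally---forming $P$ from $P_N$ via pullback along $\pi$ and the associated-bundle construction through the evaluation map $\Map(N,G)\times N\to G$, and forming $P_N$ from $P$ via the pushforward $\pi_*(P|_{U_i\times N})$ as a $\Map(N,G)$-torsor---but unwinding either of these over a trivializing cover is exactly your cocycle manipulation through the exponential law. Your version is a bit more careful about well-definedness on isomorphism classes, which the paper leaves implicit.
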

\begin{proof}
Given a principal $\Map(N, G)$-bundle $P_N\rightarrow S$ consider its pullback $\pi^* P_N\rightarrow S\times N$. There is a natural evaluation map $\Map(N, G)\times N\rightarrow G$, so we can induce the principal $\Map(N, G)$-bundle $\pi^* P_N$ to a principal $G$-bundle. If $\{U_i\}$ is a trivializing cover for $P_N\rightarrow S$, then $\{U_i\times N\}$ is a trivializing cover for $P\rightarrow S\times N$.

Conversely, suppose $P \rightarrow S \times N$ is a principal $G$-bundle that is trivializable along the fibers of $\pi$. Let $\{U_i\times N\}$ be the trivializing cover. Then $\pi_*(P|_{U_i\times N})$ defines a $\Map(N, G)$-torsor on $U_i$. Gluing these defines a principal $\Map(N, G)$-bundle $P_N\rightarrow S$.
\end{proof}

Under the above correspondence we may identify the adjoint bundle $\ad P_N$ with the pushforward $\pi_* \ad P$ of the adjoint bundle on $S\times N$.

\subsection{Connections}

Let us now discuss connections on principal $\Map(N, G)$-bundles.

\begin{prop}
Under the correspondence given by \cref{prop:bundlecorrespondence} a connection on a principal $\Map(N, G)$-bundle $P_N\rightarrow S$ corresponds to a connection on $P\rightarrow S\times N$ in the $S$ direction.
\label{prop:connectionScorrespondence}
\end{prop}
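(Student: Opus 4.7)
The plan is to reduce the statement to a local calculation on a trivializing cover and identify the two kinds of connection data via the $\pi_*\dashv\pi^*$ adjunction.

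First, I would pick a cover $\{U_i\times N\}$ trivializing $P$ along the fibers of $\pi$; by the proof of \cref{prop:bundlecorrespondence} this is the $\pi$-pullback of a trivializing cover $\{U_i\}$ of $P_N$, and the $G$-valued transition functions $g_{ij}\colon U_{ij}\times N\to G$ of $P$ correspond, via the exponential law $\Map(U_{ij}\times N, G)\cong \Map(U_{ij}, \Map(N,G))$, to the $\Map(N,G)$-valued transition functions $\tilde g_{ij}\colon U_{ij}\to \Map(N,G)$ of $P_N$.

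Next, I would unwind both notions of connection in this local frame. A connection on $P\to S\times N$ ``in the $S$ direction'' is locally presented by a $\fg$-valued one-form $A_i$ on $U_i\times N$ that vanishes on vectors tangent to $N$, i.e.\ a section of $\pi^*T^*U_i\otimes \fg$. By the $\pi_*\dashv\pi^*$ adjunction this is the same datum as a section of $T^*U_i\otimes C^\infty(N;\fg)$, and since $C^\infty(N;\fg)$ is the Lie algebra of $\Map(N,G)$, this is precisely a local connection one-form on the trivialization $U_i$ of $P_N$.

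Finally, I would check compatibility of the transition laws. The standard gauge-transformation rule $A^{(j)}=\Ad(g_{ij}^{-1})A^{(i)}+g_{ij}^{-1}\d_S g_{ij}$ for a connection in the $S$-direction uses only derivatives of $g_{ij}$ in the $S$ direction; under the exponential-law correspondence these become $\Ad(\tilde g_{ij}^{-1})\tilde A^{(i)}+\tilde g_{ij}^{-1}\d\tilde g_{ij}$, which is exactly the transition rule for a local connection one-form on $P_N$. The cocycle identity $g_{ij}g_{jk}=g_{ik}$ similarly corresponds to $\tilde g_{ij}\tilde g_{jk}=\tilde g_{ik}$, so gluing the local data yields a genuine connection on each side. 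The main (mild) obstacle is the infinite-dimensional geometry of the Fr\'echet Lie group $\Map(N,G)$: one has to verify that the identifications above are smooth in the Fr\'echet sense, which reduces to smoothness of the evaluation map $\Map(N,G)\times N\to G$ and of pointwise multiplication in $\Map(N,G)$, both of which are classical facts for mapping groups with compact source $N$.
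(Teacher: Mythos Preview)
Your proposal is correct and follows essentially the same approach as the paper: choose a trivializing cover, identify local connection one-forms $A_i\in\Omega^1(U_i;C^\infty(N;\fg))$ on $P_N$ with $\fg$-valued one-forms on $U_i\times N$ vanishing in the $N$ direction, and observe that the gauge transition laws agree. The paper's proof is simply a terser version of yours, dispensing with the explicit adjunction language and the Fr\'echet smoothness remarks.
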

\begin{proof}
Let $\{U_i\}$ be a trivializing cover for $P_N\rightarrow S$ and $g_{ij}\colon U_i\cap U_j\rightarrow \Map(N, G)$ the transition functions. Then a connection on $P_N$ is specified by a collection of one-forms $A_i\in\Omega^1(U_i; \Map(N, \g))$ satisfying $\Ad_{g_{ij}}(\d + A_j) = \d + A_i$. This data is obviously the same as a connection on $P$ in the $S$ direction.
\end{proof}

Let $\Conn^{\triv}_G(N)$ be the affine space of connections on the trivial $G$-bundle on $N$. It carries an action of $\Map(N, G)$ given by gauge transformations.

\begin{prop}
Under the correspondence given by \cref{prop:bundlecorrespondence} a $\Map(N, G)$-equivariant map $\phi\colon P_N\rightarrow \Conn^{\triv}_G(N)$, i.e. a section of $P_N\times^{\Map(N, G)} \Conn^{\triv}_G(N)\rightarrow S$, is the same as a connection on $P\rightarrow S\times N$ in the $N$ direction.
\label{prop:connectionNcorrespondence}
\end{prop}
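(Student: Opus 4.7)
The plan is to mirror the strategy of \cref{prop:connectionScorrespondence}, exchanging the roles of $S$ and $N$. Choose a trivializing cover $\{U_i\}$ for $P_N\to S$ with transition functions $g_{ij}\colon U_i\cap U_j\to \Map(N,G)$. By \cref{prop:bundlecorrespondence} this corresponds to the cover $\{U_i\times N\}$ trivializing $P\to S\times N$, whose transition functions are obtained by evaluation from the $g_{ij}$, viewed as maps $(U_i\cap U_j)\times N\to G$.

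Next I would translate a $\Map(N,G)$-equivariant map $\phi\colon P_N\to \Conn^{\triv}_G(N)$ into local data. In each trivialization, $\phi$ is determined by its representative $\phi_i\colon U_i\to \Conn^{\triv}_G(N)$, and the equivariance condition is equivalent to $\phi_j = g_{ij}^{-1}\cdot \phi_i$ on overlaps, where the dot denotes the gauge action of $\Map(N,G)$ on $\Conn^{\triv}_G(N)$. Identifying $\Conn^{\triv}_G(N)$ as the affine space modeled on $\Omega^1(N;\g)$, each $\phi_i$ is the same data as a one-form $B_i$ on $U_i\times N$ with values in $\g$ whose contraction with vectors tangent to $S$ is zero, i.e. a one-form in the $N$ direction.

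The remaining step is to observe that the compatibility $\phi_j = g_{ij}^{-1}\cdot \phi_i$ unwinds pointwise in $S$ to the standard gauge transformation $B_j = \Ad_{g_{ij}^{-1}} B_i + g_{ij}^{-1}\, \d_N g_{ij}$, which is precisely the patching rule for a connection on $P\to S\times N$ in the $N$ direction. Conversely, any such connection restricts over each slice $\{s\}\times N$ to an element of $\Conn^{\triv}_G(N)$ in a chosen trivialization, and the gauge-transformation rule for its transition data is exactly the equivariance property required of a section of the associated bundle $P_N\times^{\Map(N,G)} \Conn^{\triv}_G(N)$. The only point that needs genuine care is matching conventions between the (infinite-dimensional) gauge action of $\Map(N,G)$ on $\Conn^{\triv}_G(N)$ and the pointwise (in $S$) gauge transformation of connections on the trivial $G$-bundle over $N$; this is a routine unpacking of definitions and poses no serious obstacle.
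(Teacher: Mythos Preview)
Your proposal is correct and follows essentially the same approach as the paper: pick a trivializing cover $\{U_i\}$ for $P_N\to S$, represent the equivariant map by local maps $\phi_i\colon U_i\to \Conn^{\triv}_G(N)$ satisfying the cocycle condition $g_{ij}\cdot\phi_j=\phi_i$, and identify each $\phi_i$ with a $\g$-valued one-form along the fibers of $U_i\times N\to U_i$. The paper's version is terser (it does not spell out the gauge-transformation formula $B_j=\Ad_{g_{ij}^{-1}}B_i+g_{ij}^{-1}\d_N g_{ij}$ or the converse direction explicitly), but the argument is the same.
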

\begin{proof}
Let $\{U_i\}$ be a trivializing cover for $P_N\rightarrow S$ and $g_{ij}\colon U_i\cap U_j\rightarrow \Map(N, G)$ the transition functions. A $\Map(N, G)$-equivariant map $\phi\colon P_N\rightarrow \Conn^{\triv}_G(N)$ is the same as a collection of maps $\phi_i\colon U_i\rightarrow \Conn^{\triv}_G(N)$ such that $g_{ij} \phi_j = \phi_i$. A map $\phi_i\colon U_i\rightarrow \Conn^{\triv}_G(N)$ is the same as a $\g$-valued one-form along the fibers of $U_i\times N\rightarrow U_i$. This data is the same as a connection on $P\rightarrow S\times N$ in the $N$ direction.
\end{proof}

The following statement is proven analogously.

\begin{prop}
Suppose $A_S$ is a connection on a principal $\Map(N, G)$-bundle $P_N\rightarrow S$ and $\phi$ a section of $P_N\times^{\Map(N, G)} \Conn^{\triv}_G(N)$ corresponding to a connection $A_N$ on $P\rightarrow S\times N$ in the $N$ direction by \cref{prop:connectionNcorrespondence}. Then $\d_{A_S} \phi$ coincides with the $\Gamma(N\times S, \Omega^1_N\otimes \Omega^1_S\otimes \ad P)$ component of the curvature of the connection $A_N + A_S$ on $P\rightarrow S\times N$.
\end{prop}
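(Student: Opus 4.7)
The plan is to verify the identity in a local trivialization and then observe that both sides are defined intrinsically, so the local match implies the global one.

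First, choose a trivializing cover $\{U_i\}$ of $S$ for $P_N\rightarrow S$ with transition functions $g_{ij}\colon U_i\cap U_j\rightarrow \Map(N,G)$. Over $U_i$ the connection $A_S$ is given by $A_{S,i}\in\Omega^1(U_i; C^\infty(N,\fg))$ (via the proof of \cref{prop:connectionScorrespondence}) and the equivariant map $\phi$ by $\phi_i\colon U_i\rightarrow \Conn^{\triv}_G(N)$. Passing to the induced trivialization of $P|_{U_i\times N}$ from \cref{prop:bundlecorrespondence}, the proof of \cref{prop:connectionNcorrespondence} identifies $\phi_i$ with the $N$-directional component of $A_N$, namely a section of $\Omega^1_N\otimes\fg$ on $U_i\times N$, while $A_{S,i}$ becomes the $S$-directional component. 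In this trivialization the total connection on $P$ is thus $A = A_{S,i} + \phi_i$.

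Next, expand the curvature $F = \d A + \tfrac12[A,A]$ by bidegree along $S\times N$, writing $\d = \d_S + \d_N$. A direct computation shows that the $(1,1)$-component on $U_i\times N$ is
\[
F^{(1,1)} = \d_N A_{S,i} + \d_S \phi_i + [A_{S,i},\phi_i],
\]
where $[-,-]$ is the graded bracket combining the Lie bracket on $\fg$ with the wedge product. For the other side, $\d_{A_S}\phi$ is the covariant derivative of a section of the associated bundle with fiber $\Conn^{\triv}_G(N)$, so in the chosen trivialization
\[
(\d_{A_S}\phi)_i = \d_S\phi_i + A_{S,i}\cdot\phi_i,
\]
where $\cdot$ denotes the infinitesimal action of $\Map(N,\fg) = C^\infty(N;\fg)$ on $\Conn^{\triv}_G(N)$. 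Differentiating the gauge transformation law $g\cdot A = gAg^{-1} + (\d_N g)g^{-1}$ at $g = e^{t\xi}$ yields
\[
\xi\cdot A = [\xi,A] + \d_N\xi,
\]
and substituting $A = \phi_i$, $\xi = A_{S,i}$ produces exactly the expression for $F^{(1,1)}$ above.

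Finally, both sides of the claimed identity are intrinsic: the left is the covariant derivative of a section of an associated bundle, the right is the invariantly-defined $(1,1)$-component of the curvature of $A_N + A_S$. Hence the local matching implies the global one, with no further compatibility to check. The only point to handle with care is bookkeeping of sign conventions---specifically, the sign in the infinitesimal gauge action on $\Conn^{\triv}_G(N)$ and the sign in the graded commutator of $\fg$-valued forms---which must be taken compatibly with the conventions fixed in the proof of \cref{prop:connectionNcorrespondence}. Once this is done the statement reduces to the conceptual observation that, viewing connections on $N$ as a moduli problem, the covariant derivative of a family parametrized by $S$ is nothing but the mixed component of the total curvature on $S\times N$.
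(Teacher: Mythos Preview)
Your proposal is correct and follows the same approach the paper intends: the paper simply states that the proposition ``is proven analogously'' to \cref{prop:connectionScorrespondence} and \cref{prop:connectionNcorrespondence}, i.e.\ by passing to a local trivialization $\{U_i\}$ and comparing both sides there, which is exactly what you do. Your write-up is in fact more detailed than what the paper provides, and the caveat you flag about sign conventions in the infinitesimal gauge action is the only thing to be careful about.
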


\section{2d A-model}
\label{sect:2dA}

In this section consider the 2d A-model into a hyperK\"ahler manifold on the interval with supersymmetric boundary conditions and its compactification to supersymmetric mechanics.

\subsection{2d $\sigma$-model}

Let $(M, g, \omega, I)$ be a K\"ahler manifold and $(\Sigma, h, j)$ a Riemann surface. We say a one-form $\psi\in\Omega^1(\Sigma, \phi^* \T_M)$ is \defterm{self-dual} if
\[(j\otimes I)\psi = \psi.\]
A self-dual two-form has components $\psi = \psi_+ + \psi_-$, where
\[\psi_+\in\Omega^{1, 0}(\Sigma, \phi^* \T^{(0, 1)}_M),\qquad \psi_-\in\Omega^{0, 1}(\Sigma, \phi^* \T^{(1, 0)}_M).\]

The 2d A-model into $M$ has the following fields \cite{WittenSigma}:
\begin{itemize}
\item A map $\phi\colon \Sigma\rightarrow M$.
\item An odd section $\chi\in\Pi\Gamma(\Sigma, \phi^* \T_M)$.
\item A self-dual odd one-form $\psi\in\Pi\Omega^1(\Sigma, \phi^* \T_M)$.
\end{itemize}

The supersymmetry transformation is
\begin{equation}
\begin{cases}
\delta \phi^i = \im\chi^i \\
\delta \chi^i = 0 \\
\delta \psi^{\ibar} = -\partial \phi^{\ibar} - \im \chi^{\jbar}\Gamma^{\ibar}_{\jbar \mbar}\psi^{\mbar} \\
\delta \psi^i = -\Bar{\partial}\phi^i -\im \chi^j \Gamma^i_{jm}\psi^m.
\end{cases}
\label{eq:2dSUSY}
\end{equation}

The bosonic part of the action is
\[S_{bosonic} = \frac{1}{2}\int_\Sigma \dvol_\Sigma (\d\phi, \d\phi).\]

\subsection{Supersymmetric mechanics on the path space}

Let $L_0, L_1\subset M$ be two Lagrangian submanifolds. Consider the path space
\[\rP(L_0, L_1) = \{\phi\colon [0, 1]\rightarrow M\ |\ \phi(0)\in L_0,\ \phi(1)\in L_1\},\]
where the coordinate along $[0, 1]$ will be denoted by $s$. It has a natural structure of a Fr\'echet manifold (see e.g. \cite{Stacey}) such that the tangent space at $\phi\in\rP(L_0, L_1)$ can be identified with the space
\[\T_\phi \rP(L_0, L_1) = \{v\in\Gamma([0, 1], \phi^* \T_M)\ |\ v(0)\in \T_{\phi(0)} L_0,\ v(1)\in \T_{\phi(1)} L_1\}.\]
It has a natural (weak) Riemannian metric defined by
\[(v, w) = \int_0^1 (v(s), w(s))\d s.\]
We will also be interested in the closed submanifold $\widetilde{\rP}(L_0, L_1)\subset \rP(L_0, L_1)$ defined as
\[\widetilde{\rP}(L_0, L_1) = \{\phi\in \rP(L_0, L_1)\ |\ \phi'(0)\perp L_0,\ \phi'(1)\perp L_1\},\]
where the orthogonality is defined with respect to the metric on $M$. We have $\T_\phi \widetilde{\rP}(L_0, L_1)\subset \T_\phi\rP(L_0, L_1)$ defined by the conditions $v(0)=v(1) = 0$.

The path space $\rP(L_0, L_1)$ carries a natural one-form $\alpha\in\Omega^1(\rP(L_0, L_1))$ defined by
\begin{equation}
\iota_v \alpha = \int_0^1 \omega(v(s), \phi'(s))\d s.
\label{eq:symplecticaction}
\end{equation}
It is easy to see that it is closed. Its primitive, whenever defined, is known as the \defterm{symplectic action functional}.

Let $S$ be an oriented Riemannian 1-manifold. Let $\Sigma = S\times [0, 1]$. Consider the 2d A-model on $\Sigma$ with the boundary conditions
\[
\begin{cases}
\phi(s, 0) \in L_0 \\
\frac{\partial\phi}{\partial s}(s, 0) \perp L_0 \\
\chi(s, 0)\in \phi^* \T_{L_0} \\
\psi|_{S\times \{0\}} \in \Omega^1(S, \phi^* \T_{L_0}).
\end{cases}
\]
Similar boundary conditions are imposed at the other end of the interval $[0, 1]$.

\begin{thm}
The 2d A-model of maps $S\times [0, 1]\rightarrow M$ with the above boundary conditions is equivalent to the supersymmetric mechanics of maps $S\rightarrow \rP(L_0, L_1)$, where $\rP(L_0, L_1)$ is equipped with its natural Riemannian structure and one-form $\alpha$ given by \eqref{eq:symplecticaction}. Under this correspondence the supersymmetry transformation \eqref{eq:2dSUSY} corresponds to the transformation induced by the A supercharge $Q_A$ \eqref{eq:1dAgaugedN=2supercharge}.
\end{thm}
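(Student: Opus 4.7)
The plan is to curry the 2-dimensional field content along the interval $[0,1]$ and verify that the resulting 1-dimensional theory on $S$ is exactly the supersymmetric mechanics of \cref{sect:SUSYmechanics} with target $\widetilde{\rP}(L_0,L_1)$. Let $\tau$ denote a local coordinate on $S$ and $s$ the coordinate on $[0,1]$. A bosonic map $\phi\colon S\times[0,1]\to M$ satisfying the stated Lagrangian boundary conditions curries to $\phi\colon S\to\widetilde{\rP}(L_0,L_1)$: the conditions $\phi(\cdot,0)\in L_0$, $\phi(\cdot,1)\in L_1$ place the image in $\rP(L_0,L_1)$, while the perpendicularity conditions on the normal derivative $\partial_s\phi$ at $s=0,1$ cut out $\widetilde{\rP}$. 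The odd section $\chi\in\Pi\Gamma(\Sigma,\phi^*\T_M)$ with values in $\phi^*\T_{L_i}$ at the boundary curries to an odd section of $\phi^*\T\widetilde{\rP}$ over $S$, and a self-dual one-form $\psi=\psi_\tau\,\d\tau+\psi_s\,\d s$ is determined by its $\d\tau$-component through the self-duality relation (which reads $\psi_s=I\psi_\tau$ on the flat cylinder), so that $\psi_\tau$ curried along $[0,1]$ is the $\T\widetilde{\rP}$-valued $1$-form on $S$ of the 1d theory.

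With this dictionary in hand I would match the bosonic action. Splitting $\d\phi=\partial_\tau\phi\,\d\tau+\partial_s\phi\,\d s$ gives
\[
\tfrac12\int_\Sigma|\d\phi|^2\,\dvol_\Sigma \;=\; \tfrac12\int_S\d\tau\int_0^1\d s\,\bigl(|\partial_\tau\phi|^2+|\partial_s\phi|^2\bigr).
\]
Under currying the first summand becomes $\tfrac12|\partial_\tau\phi|^2_{\widetilde{\rP}}$. For the second, a direct computation using $\omega(v,\phi')=g(v,I\phi')$ identifies the metric-dual vector field of $\alpha$ on $\rP(L_0,L_1)$ with $\phi\mapsto I\phi'$; since $I$ is an isometry, $\tfrac12|\alpha|^2=\tfrac12\int_0^1|\phi'|^2\,\d s$, matching the second summand exactly. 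The 2d bosonic action thus reproduces the bosonic part of \eqref{eq:1dN=2gaugedsigmamodel} with trivial gauge data and $a=0$.

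Finally, to match the supersymmetry I would substitute the curried fields into the 2d transformations \eqref{eq:2dSUSY} and compare with the 1d A-supercharge variations \eqref{eq:1dAgaugedN=2supercharge} in the ungauged case. The relations $\delta\phi=\im\,\chi$ and $\delta\chi=0$ transfer immediately. For $\delta\psi$ one splits $-\bar\partial\phi^i-\im\,\Gamma^i_{jk}\chi^j\psi^k$ into $\d\tau$- and $\d s$-components: the $\d\tau$-component yields the kinetic and connection terms of the 1d A-supercharge, while the $\d s$-component, after using self-duality $\psi_s=I\psi_\tau$ to eliminate the non-surviving component and then invoking $\alpha^\sharp=I\phi'$, produces precisely the $-\alpha^i\,\d\tau$ contribution demanded by \eqref{eq:1dAgaugedN=2supercharge}. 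The main obstacle is the bookkeeping in this last step: tracking signs and factors of $I$ through the self-duality relation so that the $\d s$-component of the 2d $\psi$-variation becomes the $\alpha$-term with the correct sign, and verifying that the perpendicularity condition together with the boundary conditions on $\chi,\psi$ are mutually compatible and cut out exactly the fiber $\T\widetilde{\rP}$. The bosonic identification is the clean structural step; the SUSY matching is notation-heavy but routine once the dictionary between the self-dual $\psi$ and its 1d reduction is fixed.
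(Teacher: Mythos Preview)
Your proposal is correct and follows essentially the same route as the paper: curry the fields along $[0,1]$, resolve the self-dual one-form as $\psi_s=I\psi_t$, identify the metric dual of $\alpha$ with $\pm I\phi'$ so that $|\alpha|^2=\int_0^1|\phi'|^2\,\d s$, and then read off the SUSY match from the decomposition $\bar\partial\phi^i=\partial_t\phi^i-\im\,\partial_s\phi^i$. Two minor remarks: the paper computes $\iota_v\alpha=-(v,I\partial_s\phi)$, so $\alpha^\sharp=-I\phi'$ rather than $+I\phi'$ (immaterial for $|\alpha|^2$ but relevant for the sign bookkeeping you flag); and while you land in $\widetilde{\rP}(L_0,L_1)$ by invoking the perpendicularity boundary condition, the statement and the paper's own proof work with $\rP(L_0,L_1)$ --- this is a cosmetic discrepancy rather than a flaw in your argument.
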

\begin{proof}
A map $\phi\colon S\times [0, 1]\rightarrow M$ satisfying the above boundary conditions is the same as a map $\phi\colon S\rightarrow \rP(L_0, L_1)$. A self-dual one-form $\psi\in\Omega^1(S\times [0, 1], \phi^* \T_M)$ can be written as $\psi = \psi_t \d t + I \psi_t \d s$. Therefore, we can match the fields of the 2d A-model on $S\times [0, 1]$ and the supersymmetric mechanics as shown in \cref{tbl:2dAcompactification}.

We have
\begin{align}
\iota_v \alpha &= \int_0^1 \omega(v(s), \phi'(s))\d s \nonumber \\
&= -\int_0^1 (v(s), I \phi'(s)) \d s \nonumber \\
&= -(v, I\partial_s\phi). \label{eq:alphadual}
\end{align}
Therefore,
\[|\alpha|^2 = \int_0^1 |I\phi'(s)|^2 \d s = \int_0^1 |\phi'(s)|^2 \d s\]
since $I$ is orthogonal. In particular,
\begin{align*}
S_{bosonic} &= \frac{1}{2}\int_\Sigma \dvol_\Sigma |\d\phi|^2 \\
&= \frac{1}{2}\int_S\dvol_S(|\partial_t \phi|^2 + |\alpha|^2)
\end{align*}
which coincides with the bosonic part of the action of the supersymmetric mechanics.

The supersymmetry transformation induced by $Q_A$ is
\[
\begin{cases}
\delta \phi^i = \im\,\chi^i \\
\delta \psi^{\ibar} = -\partial_t \phi^{\ibar} - \im \,\Gamma^{\ibar}_{\jbar \kbar} \chi^{\jbar}\psi^{\kbar} - \alpha^{\ibar} \\
\delta \psi^i = -\partial_t \phi^i - \im \,\Gamma^i_{jk} \chi^j\psi^k - \alpha^i \\
\delta \chi^i = 0.
\end{cases}
\]

Using \eqref{eq:alphadual} we get $\alpha^{\ibar} = \im \, \partial_s \phi^{\ibar}$ and $\alpha^i = -\im \,\partial_s \phi^i$. Observing that
\[\partial \phi^{\ibar} = \partial_t \phi^{\ibar} + \im \, \partial_s \phi^{\ibar},\qquad \Bar{\partial} \phi^i = \partial_t \phi^i - \im \, \partial_s \phi^i\]
we see that the above supersymmetry transformation coincides with \eqref{eq:2dSUSY}.
\end{proof}

\begin{table}[h]
\begin{tabular}{cc}
\hline
supersymmetric mechanics & 2d A-model \\
\hline
$\phi$ & $\phi$ \\
$\chi$ & $\chi$ \\
$\psi$ & $\psi_t$ \\
\hline
\end{tabular}
\caption{Fields in the supersymmetric mechanics and in the 2d A-model.}
\label{tbl:2dAcompactification}
\end{table}

\subsection{HyperK\"ahler case}

Let us now assume that $M$ has a hyperK\"ahler structure. We denote by $I, J, K$ the three complex structures, $\omega_I, \omega_J, \omega_K$ the three K\"ahler structures and $\Omega_I, \Omega_J, \Omega_K$ the three holomorphic symplectic structures. Suppose that $L_0, L_1\subset M$ are holomorphic Lagrangians with respect to the $\Omega_I$ holomorphic symplectic structure (these give examples of $(B, A, A)$ branes).

As before, $\widetilde{P}(L_0, L_1)$ has an induced (weak) Riemannian structure. The complex structure $I$ induces pointwise a K\"ahler structure on $\widetilde{P}(L_0, L_1)$. We may also consider the closed one-form $\beta\in\Omega^{1, 0}(\widetilde{P}(L_0, L_1))$ defined by
\[\beta(v) = \int_0^1 \Omega_I(v(s), \phi'(s)) \d s.\]
By construction its real part is
\[\alpha(v) = \int_0^1 \omega_J(v(s), \phi'(s)) \d s.\]

Therefore, by \cref{prop:1dN=4} the supersymmetry of the supersymmetric mechanics of maps $S\rightarrow \widetilde{P}(L_0, L_1)$ enhances from $\cN=2$ to $\cN=4$. Let us now assume that $(M, I)$ has an algebraic structure and $L_0, L_1\subset M$ are algebraic subvarieties. The zero locus of $\beta$ is the intersection $L_0\cap L_1$. It has a natural enhancement to a $(-1)$-shifted symplectic scheme $L_0\times_M L_1$ \cite[Theorem 2.9]{PTVV}. The \cref{mainproposal} suggests the following.

\begin{proposal}
Suppose $M$ is a hyperK\"ahler manifold, so that $(M, I, \Omega_I)$ is an algebraic symplectic manifold. Suppose $L_0, L_1\subset M$ are algebraic Lagrangian subvarieties. Choose square roots of the canonical bundles on $L_i$ which determine an orientation data on the derived intersection $L_0\times_M L_1$. Then the space of states $\Hom(L_0, L_1)$ in the 2d A-model into $(M, \omega_J)$, for generic parameters, is
\[\R\Gamma(L_0\cap L_1, P_{L_0\times_M L_1}).\]
\end{proposal}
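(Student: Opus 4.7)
The plan is to apply \cref{mainproposal} to the $\cN=4$ supersymmetric mechanics on the path space $\widetilde{\rP}(L_0, L_1)$. The preceding analysis reduces the 2d A-model on $S\times[0,1]$ with $L_0, L_1$ boundary conditions to $\cN=2$ supersymmetric mechanics on $\widetilde{\rP}(L_0,L_1)$ equipped with the closed one-form $\alpha$. When the $L_i$ are $I$-holomorphic Lagrangians in a hyperK\"ahler $M$, this $\alpha$ is the real part of the closed $(1,0)$-form $\beta(v)=\int_0^1 \Omega_I(v(s),\phi'(s))\,\d s$, so by \cref{prop:1dN=4} the mechanics acquires $\cN=4$ supersymmetry and fits into the setting of \cref{mainproposal}.

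The first step is to identify the derived zero locus of $\beta$. A path lies in $\beta^{-1}(0)$ iff $\Omega_I(v,\phi'(s))=0$ for every admissible variation $v$; integration by parts, together with nondegeneracy of $\Omega_I$ and the $\Omega_I$-Lagrangian property of $L_0,L_1$, forces $\phi'\equiv 0$ with $\phi(0)\in L_0$ and $\phi(1)\in L_1$. Hence the classical zero locus is $L_0\cap L_1$. For the derived enhancement, one models $\widetilde{\rP}(L_0,L_1)$ near a constant path $\phi\equiv p$ by a Fr\'echet slice based on maps $[0,1]\to\T_p M$ with $v(0)\in\T_p L_0$ and $v(1)\in\T_p L_1$, and expands $\beta$ as a Morse--Bott one-form with critical set $L_0\cap L_1$. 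A Koszul computation of the normal Hessian recovers the expected local presentation of the cotangent complex of $L_0\times_M L_1$, namely $\T^*_M|_p\to \T^*_{L_0}|_p\oplus\T^*_{L_1}|_p$. Moreover, the $(-1)$-shifted symplectic form arising from the derived critical locus of $\beta$ is built from $\Omega_I$ and the Lagrangian structures on $L_0,L_1$, so it agrees with the one produced by the Lagrangian intersection construction of \cite{PTVV}.

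The second step is to match the orientation data. A choice of square roots of the canonical bundles on $L_0$ and $L_1$, combined with the trivialization of $\det(\bL_M)$ furnished by $\Omega_I$, determines a square root of $\det(\bL_{L_0\times_M L_1})$ along $L_0\cap L_1$; under the identification with the critical locus of $\beta$ this plays the role of the finite-dimensional line bundle $L$ in \cref{mainproposal}. Since there is no B-field type deformation in this setting, the local system $\cL_a$ is trivial. Invoking \cref{mainproposal} with $\sfX=L_0\times_M L_1$ then yields the stated formula $\R\Gamma(L_0\cap L_1,P_{L_0\times_M L_1})$ for the space of states.

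The main obstacle is the rigorous passage from an infinite-dimensional derived critical locus on a Fr\'echet manifold to the finite-dimensional $(-1)$-shifted symplectic derived scheme $L_0\times_M L_1$. In the formulation of \cref{mainproposal} this difficulty is absorbed into the definition of the space of states as the cohomology of the perverse sheaf $P_\sfX$, so the substantive content of the argument is the local Morse--Bott identification of derived structures and the compatibility of the symplectic and orientation data; both are standard within the Lagrangian intersection framework of \cite{PTVV,Bussi,BBDJS}.
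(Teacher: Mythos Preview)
Your argument follows the same route as the paper: reduce the 2d A-model on $S\times[0,1]$ to $\cN=4$ supersymmetric mechanics on the path space, identify the zero locus of $\beta$ with $L_0\cap L_1$, recognize its derived enhancement as the $(-1)$-shifted symplectic scheme $L_0\times_M L_1$, and then invoke \cref{mainproposal}. The paper treats this as a physical proposal rather than a theorem, so its justification is much terser: it simply observes that the zero locus of $\beta$ is $L_0\cap L_1$, cites \cite[Theorem~2.9]{PTVV} for the $(-1)$-shifted symplectic structure on $L_0\times_M L_1$, and appeals to \cref{mainproposal}. Your additional Morse--Bott local analysis and explicit matching of orientation data go beyond what the paper provides, but are consistent with it.
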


\begin{remark}
The above definition was proposed in \cite{BehrendFantechi,BBDJS}. We refer to \cite{SolomonVerbitsky} for the discussion of the relationship to the usual definition of the Floer homology group. Note also that the above definition is independent of the choice of the algebraic structure if we use the perverse sheaf constructed in \cite{Bussi}.
\end{remark}

\begin{remark}
One may interpret ``generic parameters'' in the above statement as follows. Let us consider the 2d A-model into $M$ with the symplectic structure $\omega = (1+\alpha) \omega_J$ and the $B$-field $B = \alpha \omega_K$ for $\alpha$ a non negative number. Then for $\alpha=0$ we recover the usual 2d A-model into $(M, \omega_J)$. In the limit $\alpha\rightarrow \infty$ we obtain the 2d B-model into $(M, I)$ \cite{KapustinNC}. So, one expects the proposal to be true for large $\alpha$.
\end{remark}

\section{3d A-model}
\label{sect:3dA}

In this section we describe a 3-dimensional analog of the A-model with target given by a hyperK\"ahler manifold.

\subsection{3d $\sigma$-model}

Fix the following data:
\begin{itemize}
\item A Lie group $G$ equipped with a homomorphism $\rho\colon G\rightarrow \SO(3)$.

\item $X$ a hyperK\"ahler manifold equipped with a $G$-action by isometries. In addition, we assume it acts on the sphere of complex structures on $X$ via $\rho$.

\item $M$ is an oriented Riemannian 3-manifold equipped with a principal $G$-bundle $Q$ together with a connection $\nabla$ and an isometric identification
\begin{equation}
Q\times^G \RR^3\cong \Omega^1_M.
\label{eq:Qisometry}
\end{equation}
\end{itemize}

Consider the induced bundle
\[\fX = Q \times^G X\xrightarrow{\pi} M.\]
The connection $\nabla$ on $Q$ induces a connection on $\fX\rightarrow M$ that we denote by the same letter. Its vertical tangent bundle is
\[V \fX = Q \times^G \T X.\]
Since $G$ acts on $X$ by isometries, $\fX\rightarrow M$ admits a fiberwise metric. In addition, since $G$ acts by permuting the complex structures on $X$, the bundle $\fX\rightarrow M$ carries a fiberwise hyperK\"ahler structure with the associated sphere bundle of complex structures given by the unit sphere bundle of $M$ using \eqref{eq:Qisometry}. In particular, the action by complex structures gives a map
\[a\colon \pi^* \Omega^1_M\otimes_\R V \fX\longrightarrow V \fX.\]

The 3d $A$-model has the following fields:
\begin{itemize}
\item A section $\phi$ of $\fX\rightarrow M$.
\item A pair of odd sections $\chi,\psi\in\Pi\Gamma(M, \phi^* V \fX)$.
\end{itemize}

The bosonic part of the action functional of the 3d A-model is
\[
S_{bosonic} = \frac12 \int_M \dvol_M \; (\nabla \phi, \nabla \phi)
\]

\begin{remark}
This theory may be obtained by topologically twisting the 3-dimensional supersymmetric $\sigma$-model into $X$. The corresponding 4-dimensional version was considered in \cite{AnselmiFre,FKS}.
\end{remark}

The covariant derivative gives a section $\nabla\phi\in\Gamma(M, \Omega^1_M\otimes \phi^* V\fX)$. Applying the action map $a$ by complex structures, we get $a(\nabla \phi)\in\Gamma(M, \phi^* V\fX)$.

\begin{remark}
In local coordinates $(x, y, z)$ on $M$ we have
\[a(\nabla \phi) = I \nabla_x \phi + J \nabla_y \phi + K \nabla_z \phi.\]
The equation $a\circ \nabla\phi=0$ is known as the 3-dimensional Fueter equation, see e.g. \cite{Walpuski}.
\end{remark}

The supersymmetry transformation is 
\begin{equation}
\begin{cases}
\delta \phi^i = \im\chi^i \\
\delta \chi^i = 0 \\
\delta \psi^i = (I a(\nabla \phi))^i - \im \, \Gamma^i_{jk} \chi^j \psi^k .
\end{cases}
\label{eq:3dSUSY}
\end{equation}

\subsection{Twisted hyperK\"ahler mapping space}

Suppose the homomorphism $\rho\colon G\rightarrow \SO(3)$ factors through $G\rightarrow \SO(2)\hookrightarrow \SO(3)$. Our convention is that $G$ fixes the complex structures $\pm I$ on $X$ and acts on $J, K$ via~$\rho$.

Let $\Sigma$ be a Riemann surface with a complex structure we denote by $j$. In addition, suppose $P\rightarrow \Sigma$ is a principal $G$-bundle with a connection $\nabla$ and an isometric identification
\[P\times^G \R^2\cong \Omega^1_\Sigma.\]

Consider the space $X_\Sigma$ of smooth sections of the bundle
\[\fX_\Sigma = P\times^G X\xrightarrow{\pi_\Sigma} \Sigma.\]
As before, this is a bundle of hyperK\"ahler manifolds and there is an action
\[a_\Sigma\colon \pi_\Sigma^*\Omega^1_\Sigma\otimes_\R V \fX_\Sigma\longrightarrow V \fX_\Sigma\]
by complex structures. The tangent space at $\phi\in X_\Sigma$ may be identified with
\[\T_\phi X_\Sigma \cong \Gamma(\Sigma, \phi^* V \fX_\Sigma).\]

There is a (weak) Riemannian metric on $X_\Sigma$ defined by integrating the pointwise metric along $\Sigma$:
\[
(v, w)_{X_\Sigma} = \int_\Sigma \dvol_\Sigma \, (v , w),\qquad v,w\in\T_\phi X_\Sigma.
\]
The complex structure $I$ on $X$ induces a complex structure on $X_\Sigma$ in a similar way.

Define the one-form $\alpha$ on $X_\Sigma$ by 
\begin{equation}\label{eq:3dAalpha}
\iota_v \alpha = -\int_\Sigma \dvol_\Sigma \, (v, a(j \nabla \phi)),
\end{equation}
where $j\nabla s$ is obtained by acting by the complex structure $j$ on $\Sigma$ on the $\Omega^1_\Sigma$ factor of $\nabla \phi$.

If we choose a trivialization of $P$ over a coordinate neighborhood $(x,y)$ in $\Sigma$ this one-form becomes
\begin{align*}
\iota_v \alpha & = \int_\Sigma \big(-(v, J \nabla_y \phi) + (v, K \nabla_x \phi) \big) \d x \d y \\
&= \int_\Sigma (\omega_J(v, \nabla_y \phi) - \omega_K(v, \nabla_x \phi)\big) \d x \d y.
\end{align*}

The one-form $\alpha$ is the real part of the following holomorphic $(1,0)$ form $\beta$.  Let
\[\dbar \phi\in\Gamma(\Sigma, \Omega^{0,1}_\Sigma \otimes_{\RR} \phi^* V \fX_\Sigma)\]
be the $(0, 1)$ part of the covariant derivative. Define $\beta$ by the formula
\begin{equation}\label{eq:3dAbeta}
\iota_v \beta = 2\, \im \int_\Sigma \dvol_\Sigma \, (v, a(\dbar \phi)).
\end{equation}

Finally, define the function $h$ on $X_\Sigma$ by
\begin{equation}
h = \int_\Sigma \omega_I(\nabla_x \phi, \nabla_y \phi) \d x \d y.
\label{eq:3dAh}
\end{equation}

The function $h$ is locally constant and computes the symplectic volume of the section $\phi$ with respect to $\omega_I$.

\subsection{Supersymmetric mechanics on the mapping space}

Consider the setting as in the previous sections and take $M = S\times \Sigma$ for an oriented Riemannian 1-manifold $S$. Take $Q = S\times P$ with the trivial connection along the $S$ direction.

\begin{thm}
The 3d A-model of sections $\fX\rightarrow \Sigma\times S$ is equivalent to the supersymmetric mechanics of maps $S\rightarrow X_\Sigma$, where $X_\Sigma$ is equipped with its natural Riemannian structure, one-form $\alpha$ given by \eqref{eq:3dAalpha}, and $h$ as in \eqref{eq:3dAh}. Under this correspondence the supersymmetry transformation \eqref{eq:3dSUSY} corresponds to the transformation induced by the A supercharge $Q_A$ \eqref{eq:1dN=2gaugedsigmamodel}.
\end{thm}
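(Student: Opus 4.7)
The plan is to follow the pattern of the 2d A-model theorem proven earlier in this section: match the fields of the two theories, verify that the bosonic actions agree, and check that the supersymmetry variation \eqref{eq:3dSUSY} reproduces the $Q_A$-variation induced from the mechanics on $X_\Sigma$. First I would curry the fields. A section $\phi$ of $\fX \to S \times \Sigma$ is the same datum as a map $\phi \colon S \to X_\Sigma$; the odd section $\chi$ on $S \times \Sigma$ of $\phi^* V\fX$ is the same as an odd section over $S$ of $\phi^* \T X_\Sigma$; and the section $\psi$ (no form index in 3d) is identified with the $\d t$-component of the odd 1-form in the mechanics.

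Next I would establish the matching of the bosonic actions. After trivializing $Q$ locally and choosing coordinates $(t, x, y)$ on $S \times \Sigma$ with $(x,y)$ corresponding under \eqref{eq:Qisometry} to the complex structures $J, K$, the covariant derivative decomposes orthogonally as $|\nabla \phi|^2 = |\nabla_t \phi|^2 + |\nabla_x \phi|^2 + |\nabla_y \phi|^2$. The main algebraic identity to check is the pointwise equality
\[
\tfrac{1}{2}|\nabla_x \phi|^2 + \tfrac{1}{2}|\nabla_y \phi|^2 \;=\; \tfrac{1}{2}\bigl|J\nabla_y \phi - K \nabla_x \phi\bigr|^2 + \omega_I(\nabla_x \phi, \nabla_y \phi),
\]
which follows from $JK = I$, the orthogonality of $J, K$ and the relation $g(v, Iw) = -\omega_I(v, w)$. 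Integrating over $\Sigma$ and recognizing that the first term equals $\tfrac12 |\alpha|^2$ with $\alpha$ as in \eqref{eq:3dAalpha}, and that the second equals the locally constant function $h$ in \eqref{eq:3dAh}, gives precisely the bosonic action of the supersymmetric mechanics \eqref{eq:1dN=2gaugedsigmamodel} (with trivial gauge group, $a=0$). The only subtle point is that the ``potential'' $h$ in the mechanics is locally constant, and here it arises as a topological $\omega_I$-symplectic-area term that is automatically constant in $t$.

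Finally I would check the supersymmetry. Using $IJ = K$, $IK = -J$, one computes
\[
I\,a(\nabla \phi) \;=\; -\nabla_t \phi + K \nabla_x \phi - J \nabla_y \phi \;=\; -\nabla_t \phi - \alpha^\sharp,
\]
where $\alpha^\sharp$ is the vector field metrically dual to $\alpha$, giving exactly the variation $\delta \psi^i = -\partial_t \phi^i - \alpha^i - \im \Gamma^i_{jk} \chi^j \psi^k$ of \eqref{eq:1dAgaugedN=2supercharge}. The $\delta \phi$ and $\delta \chi$ variations are then immediate. The main technical obstacles are bookkeeping of conventions: the signs of $j$ acting on $T^*\Sigma$, of the action map $a$ on $\Omega^1_M$ via the identification \eqref{eq:Qisometry}, and of the moment-map sign in $\omega_I(v,w) = \pm g(Iv,w)$; once these are fixed consistently (and agreeing with the conventions already used in Propositions \ref{prop:1dN=2} and \ref{prop:1dN=4}), the matching is a direct calculation exactly parallel to the 2d case.
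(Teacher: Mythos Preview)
Your proposal is correct and follows essentially the same approach as the paper's proof: both curry the fields, expand $\tfrac{1}{2}|\alpha|^2$ using $JK=I$ to separate the $\Sigma$-kinetic term from the topological $\omega_I$-area term $h$, and then verify the $\delta\psi$ variation by computing $I\,a(\nabla\phi)$ in local coordinates. Your acknowledgment that the only subtleties are sign conventions is exactly right and matches the level of care in the paper's own argument.
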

\begin{proof}
A section $\phi$ of the bundle $\fX=S\times \fX_\Sigma\rightarrow S\times \Sigma$ is the same as a map $\phi\colon S\rightarrow X_\Sigma=\Sect(\Sigma, \fX_\Sigma)$. Using the identification
\[\Gamma(S, \phi^* \T_{X_\Sigma})\cong \Gamma(S\times M, \phi^* V \fX)\]
we can identify the fermion fields in the 3d A-model with the fermion fields in the supersymmetric mechanics.

Next, in local coordinates we have
\begin{align*}
\iota_v \alpha & = \int_\Sigma \big( (v, K \nabla_x \phi) - (v, J \nabla_y \phi) \big) \d x \d y \\
& = (v, K \nabla_x \phi)_{X_\Sigma} - (v, J \nabla_y \phi)_{X_\Sigma} .
\end{align*} 
Therefore, we can expand
\begin{align*}
\frac12 |\alpha|_{X_\Sigma}^2 & = \frac12 \int_\Sigma \big(|K \nabla_x \phi|^2 + |J \nabla_y \phi|^2 \big) \d x \d y - \int_\Sigma (J \nabla_x \phi, K \nabla_y \phi) \d x \d y \\ & = 
\frac12 \int_\Sigma \big(|\nabla_x \phi|^2 + | \nabla_y \phi|^2 \big) \d x \d y - \int_\Sigma \omega_I(\nabla_x \phi, \nabla_y \phi) \d x \d y,
\end{align*}
where in the last line we have used the fact that $JK = I$. 

This shows that 
\begin{align*}
S_{bosonic} & = \frac12 \int_{S \times \Sigma} |\d \phi|^2 \, \dvol \\ & = \frac12 \int_S \big(|\partial_t \phi|_{X_\Sigma}^2 + |\alpha|_{X_\Sigma}^2 \big) \d t + \int h \, \d t .
\end{align*}

The variation of the field $\psi$ with respect to the supercharge $Q_A$ is given by
\begin{align*}
\delta \psi^i & = - \partial_t \phi^i - \im \, \Gamma^i_{jk} \chi^j \psi^i - \alpha^i \\ 
& = - \partial_t \phi^i - \im \, \Gamma^i_{jk} \chi^j \psi^i - K \nabla_x \phi^i + J \nabla_y \phi^i \\ 
& = (I a(\nabla \phi))^i - \im \, \Gamma^i_{jk} \chi^j \psi^i  .
\end{align*} 
This agrees with the supersymmetry transformation in \eqref{eq:3dSUSY} as desired.
\end{proof}

Since $\alpha$ is the real part of a closed $(1, 0)$ form $\beta$, we obtain an $\cN=4$ supersymmetric mechanics into $X_\Sigma$. The zeros of $\beta$ are solutions to
\[a(\dbar \phi) = 0,\]
i.e. $(J - \im K) \dbar \phi = 0$. Applying $J$ and rearranging terms we obtain the Cauchy--Riemann equation
\[\d \phi\circ j = I\circ \d \phi.\]
In other words, the zeros of $\beta$ are $I$-holomorphic sections of $\fX_\Sigma\rightarrow \Sigma$.

Let us now suppose $(X, I, \Omega_I)$ admits the structure of a complex symplectic algebraic variety. Moreover, suppose the complexification $G_\C$ of $G$ and the bundle $P_\C = P\times^G G_\C\rightarrow \Sigma$ are algebraic. Then $\fX_\Sigma = P_\C\times^{G_\C} X\rightarrow \Sigma$ is also algebraic. By the results of \cite{GinzburgRozenblyum} we obtain a $(-1)$-shifted symplectic structure on the space $\Sect(\Sigma, \fX_\Sigma)$ of algebraic sections of $\fX_\Sigma\rightarrow \Sigma$.

\begin{proposal}
Choose an orientation data on $\Sect(\Sigma, \fX_\Sigma)$. Then the space of states in the 3d A-model is the cohomology
\[\R\Gamma(\Sect(\Sigma, \fX_\Sigma), P_{\Sect(\Sigma, \fX_\Sigma)})\]
of the perverse sheaf $P_{\Sect(\Sigma, \fX_\Sigma)}$. It admits a grading by the symplectic volume of the section with respect to $\omega_I$.
\end{proposal}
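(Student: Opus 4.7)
The plan is to verify that the data assembled in this section fits the hypotheses of \cref{mainproposal} and then transport the statement of that proposal to the present setting, checking along the way that the grading by symplectic volume arises from the locally constant function $h$.

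First, I would record that the preceding compactification theorem identifies the 3d A-model on $S\times\Sigma$ with supersymmetric mechanics of maps $S\to X_\Sigma$, where $X_\Sigma$ carries the (weak) K\"ahler structure induced by $I$, the closed real one-form $\alpha$ of \eqref{eq:3dAalpha}, and the locally constant function $h$ of \eqref{eq:3dAh}. One checks (as already observed in the text) that $\alpha = \Re\beta$ for the closed $(1,0)$-form $\beta$ of \eqref{eq:3dAbeta}, so the supersymmetry is in fact $\cN=4$ by \cref{prop:1dN=4}, placing us in the setting required by \cref{mainproposal}. I would then identify the zero locus $\beta^{-1}(0) \subset X_\Sigma$ with the space of $I$-holomorphic sections of $\fX_\Sigma\to\Sigma$: the equation $a(\dbar\phi)=0$ unpacks to $(J-\im K)\dbar\phi = 0$, and applying $J$ gives $\d\phi\circ j = I\circ\d\phi$, which is the Cauchy--Riemann condition for an $I$-holomorphic section.

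Next, I would invoke the algebraicity hypothesis to promote the zero locus to a derived object. Assuming $(X,I,\Omega_I)$ is complex symplectic algebraic and the $G_\C$-bundle $P_\C = P\times^G G_\C\to\Sigma$ is algebraic, so that $\fX_\Sigma$ is an algebraic bundle of complex symplectic varieties over $\Sigma$, the derived mapping space $\Sect(\Sigma,\fX_\Sigma)$ exists as a derived Artin stack. By the AKSZ-type construction of \cite{GinzburgRozenblyum} (or alternatively via the $\Omega_I$-pairing integrated against $\dvol_\Sigma$ on the derived mapping stack), $\Sect(\Sigma,\fX_\Sigma)$ carries a canonical $(-1)$-shifted symplectic structure whose underlying classical stack is exactly the $I$-holomorphic sections. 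Fixing an orientation data $\det(\bL)^{1/2}$ as in the statement, \cref{mainproposal} (or \cref{mainproposalstack} in the gauged variant one obtains after locally trivializing $P_\C$ and descending) then applies directly and outputs
\[
\cH \;\cong\; \R\Gamma\bigl(\Sect(\Sigma,\fX_\Sigma),\, P_{\Sect(\Sigma,\fX_\Sigma)}\bigr).
\]

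Finally, I would address the grading. The function $h$ of \eqref{eq:3dAh} is the $\omega_I$-symplectic volume of a section. Because $\omega_I$ is closed and $\Sigma$ is closed, the integral $\int_\Sigma \phi^*\omega_I$ depends only on the homotopy class of $\phi$, so $h$ is locally constant on $X_\Sigma$ and thus descends to a locally constant function on $t_0(\Sect(\Sigma,\fX_\Sigma))$. By the gauged version \cref{mainproposalstack}, such a locally constant function enters the data of the theory and induces a grading operator on $\cH$, yielding the asserted grading by symplectic volume.

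The main obstacle is not any single computation but rather the reliance on \cref{conj:perverseBV} and on the geometric existence of the $(-1)$-shifted symplectic structure on the possibly stacky infinite-dimensional mapping space; the former is a conjecture, while the latter requires that the AKSZ construction of \cite{GinzburgRozenblyum} extends to the equivariant/twisted setting with general $G$-bundle $P$. Modulo these, the argument is a straightforward specialization of \cref{thm:maintheorem} and \cref{mainproposal,mainproposalstack}.
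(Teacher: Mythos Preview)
Your sketch is correct and follows exactly the motivation the paper gives in the text immediately preceding the proposal: reduce to supersymmetric mechanics into $X_\Sigma$, observe $\alpha=\Re\beta$ so that $\cN=4$ applies, identify $\beta^{-1}(0)$ with $I$-holomorphic sections via the Cauchy--Riemann equation, invoke \cite{GinzburgRozenblyum} for the $(-1)$-shifted symplectic structure on $\Sect(\Sigma,\fX_\Sigma)$, and then apply \cref{mainproposal} with the locally constant $h$ of \eqref{eq:3dAh} supplying the grading. Note that the paper treats this statement as a \emph{Proposal} rather than a theorem and accordingly gives no formal proof; your write-up is simply a packaging of the paper's informal justification, and the caveats you flag (dependence on \cref{conj:perverseBV} and on the applicability of the AKSZ construction in this twisted setting) are precisely why it is stated as a proposal.
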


\begin{example}
Consider the case $G=\U(1)=\SO(2)$ and $X=\T^* Y$ for a smooth complex algebraic variety $Y$. Equip $X$ with a $\U(1)$-action given by scaling the cotangent fiber. The isomorphism $P\times^\U(1) \R^2\cong \Omega^1_\Sigma$ uniquely determines $P$, so that $P^\C$ is the $\C^\times$-bundle corresponding to the canonical bundle $K_\Sigma\rightarrow \Sigma$. In this case
\[\Sect(\Sigma, \fX_\Sigma)\cong \T^*[-1] \Map(\Sigma, Y).\]
The component of the space $\Map(\Sigma, Y)$ containing $\phi\colon \Sigma\rightarrow Y$ has virtual dimension
\[\dim_\Map = \int_\Sigma \phi^* c_1(Y) + \dim(Y)(1-g).\]
Therefore, using \cref{ex:cotangent} we get that the space of states in the 3d A-model into $\T^* Y$ is the shifted Borel--Moore homology
\[\rH^\BM_{\dim_\Map - \bullet}(\Map(\Sigma, Y)).\]
This answer was previously proposed in \cite{Nakajima}.
\end{example}

\section{GL twist of the 4d $\cN=4$ super Yang--Mills theory}
\label{sect:GL}

In this section we describe a compactification of the GL twist \cite{Marcus,KapustinWitten} of the 4d $\cN=4$ super Yang--Mills theory on a 3-manifold.

\subsection{Twisted super Yang--Mills theory}

Consider the following data:
\begin{itemize}
\item $G$ is a compact Lie group equipped with nondegenerate symmetric bilinear pairing $(-, -)$ on its Lie algebra.

\item A parameter $\theta\in\R$.
\end{itemize}

We may define the 4d $\cN=4$ super Yang--Mills theory given the above data. It admits a twist (known as the GL twist) which allows us to consider the theory on an arbitrary Riemannian 4-manifold. Let $M$ be a closed oriented Riemannian 4-manifold. The theory has the following fields:
\begin{itemize}
\item A principal $G$-bundle $P\rightarrow M$.
\item A connection $A$ on $P$.
\item A one-form $\phi\in\Omega^1(M, \ad P)$.
\item Sections $\sigma,\tilde{\sigma}\in\Gamma(M, \ad P)$.
\item Odd one-forms $\psi,\tilde{\psi}\in\Pi\Omega^1(M, \ad P)$.
\item Odd two-forms $\chi^\pm\in\Pi\Omega^2(M, \ad P)$, where $\chi^+$ is self-dual and $\chi^-$ anti self-dual.
\item Odd sections $\eta,\tilde{\eta}\in\Pi\Gamma(M, \ad P)$.
\end{itemize}

There are two commuting supersymmetry transformations $Q_l$, $Q_r$. We will consider the supercharge $Q = u Q_l + v Q_r$, where $u,v\in\R$. The supersymmetry transformation is \cite[Formulas (3.27), (3.28)]{KapustinWitten}
\begin{equation}
\begin{cases}
\delta A_\mu = \im u \psi_\mu + \im v \tilde{\psi}_\mu \\
\delta \phi_\mu = \im v \psi_\mu - \im u \tilde{\psi}_\mu \\
\delta \sigma = 0 \\
\delta \tilde{\sigma} = \im u\eta + \im v \tilde{\eta} \\
\delta \chi^+ = u(F - \frac{1}{2}[\phi\wedge \phi])^+ + v (\d_A \phi)^+ \\
\delta \chi^- = v(F - \frac{1}{2}[\phi\wedge \phi])^- - u(\d_A \phi)^- \\
\delta \eta = v\d^*_A \phi + u[\tilde{\sigma}, \sigma] \\
\delta\tilde{\eta} = -u\d^*_A \phi + v[\tilde{\sigma}, \sigma] \\
\delta \psi = u\d_A \sigma + v [\phi, \sigma] \\
\delta \tilde{\psi} = v \d_A \sigma - u[\phi, \sigma].
\end{cases}
\label{eq:KWSUSY}
\end{equation}

The bosonic part of the action is (see \cite[Section 3.4]{KapustinWitten})
\begin{equation}
S_{bosonic} = \int_M\dvol_M\left(\frac{1}{2}|F_\cA|^2 + \frac{1}{2}|\d^*_A\phi|^2 - \frac{1}{2}[\tilde{\sigma}, \sigma]^2 + (\d_A\tilde{\sigma}, \d_A\sigma) + ([\phi, \tilde{\sigma}], [\phi, \sigma])\right) + \frac{\im\theta}{16\pi^2} \int_M (F_A\wedge F_A).
\label{eq:KWaction}
\end{equation}

\subsection{Supersymmetric mechanics on the space of connections}

Let $N$ be a closed oriented Riemannian 3-manifold, $S$ a closed oriented Riemannian 1-manifold and set $M = N\times S$ with the product metric and orientation. Consider the affine space $\Conn^{\triv}_{G_\C}(N)$ of $C^\infty$ connections $\cA=A+\im\phi$ on the trivial principal $G_\C$-bundle over $N$. It is naturally a weakly K\"ahler manifold \cite{Corlette} with the K\"ahler form
\[\omega = \int_N (\delta \phi\wedge \star_N \delta A).\]

Let $G_N = \Map(N, G)$ be the Fr\'echet Lie group of smooth maps $N\rightarrow G$. Its Lie algebra $\g_N = C^\infty(N; \g)$ carries an invariant symmetric bilinear form
\[(v, w)_N = \int_N \dvol_N (v(x), w(x)),\]
where on the right we use the pairing on $\g$. The group $G_N$ acts on $\Conn_{G_\C}(N)$ preserving the K\"ahler structure with the moment map
\[\mu = \d_A\star_N \phi.\]
A principal $G_N$-bundle $P_N\rightarrow S$ is the same as a principal $G$-bundle $P\rightarrow N\times S$ trivializable along the fibers of $N\times S\rightarrow S$ by \cref{prop:bundlecorrespondence}.

There is a Chern--Simons functional
\[S_{CS} = \frac{1}{2}\int_N\left((\cA\wedge \d\cA) + \frac{1}{3}(\cA\wedge [\cA\wedge \cA])\right)\]
on $\Conn^{\triv}_{G_\C}(N)$ which defines a holomorphic function. Its imaginary part is given by
\[\Im S_{CS} = \int_N \left((\phi\wedge F_A) - \frac{1}{6}(\phi\wedge [\phi\wedge\phi])\right).\]
The differential of $S_{CS}$ defines a closed $(1, 0)$ one-form
\begin{align*}
\delta S_{CS} &= \int_N (\delta \cA\wedge F_\cA) \\
&= \int_N \left(\delta A\wedge \left(F_A-\frac{1}{2}[\phi\wedge \phi]\right) - \delta\phi\wedge \d_A \phi\right) + \im\int_N \left(\delta A\wedge \d_A\phi + \delta\phi\wedge \left(F_A-\frac{1}{2}[\phi\wedge \phi]\right)\right)
\end{align*}
on $\Conn^{\triv}_{G_\C}(N)$. Define
\begin{equation}
\alpha = \frac{u^2-v^2}{u^2+v^2}\delta \Re S_{CS} - \frac{2uv}{u^2+v^2} \delta\Im S_{CS} \label{eq:KWalpha}
\end{equation}
and
\begin{equation}
a = \frac{\theta}{8\pi^2}\delta \Re S_{CS}. \label{eq:KWa}
\end{equation}

Note that $\alpha$ is the real part of a closed $(1, 0)$-form
\[\beta = \frac{u^2-v^2+\im uv}{u^2+v^2} \delta S_{CS}\]
on $\Conn^{\triv}_{G_\C}(N)$ and similarly for $a$.

\begin{thm}
The GL twist of the 4d $\cN=4$ super Yang--Mills theory on $N\times S$ is equivalent to the $\cN=4$ $G_N$-gauged supersymmetric mechanics of maps $S\rightarrow \Conn^{\triv}_{G_\C}(N)$. Under this correspondence the supersymmetry transformation \eqref{eq:KWSUSY} corresponds to the transformation induced by the A supercharge $Q_A$ \eqref{eq:1dAgaugedN=4supercharge}.
\end{thm}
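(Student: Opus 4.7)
The approach is to follow the same three-step strategy used in the 2d and 3d A-model sections: (i) establish a dictionary between the 4d fields on $N\times S$ and the fields of the 1d $\cN=4$ $G_N$-gauged mechanics into $\Conn^{\triv}_{G_\C}(N)$; (ii) check that the bosonic action \eqref{eq:KWaction} reduces to \eqref{eq:1dN=4gaugedsigmamodel} with the data $(\alpha, a)$ specified in \eqref{eq:KWalpha}, \eqref{eq:KWa}; (iii) match the supersymmetry variations \eqref{eq:KWSUSY} to \eqref{eq:1dAgaugedN=4supercharge} induced by $Q_A$.

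For the dictionary, I would use the results of \cref{sect:bundlecompactification}. The 4d $G$-bundle $P\to N\times S$ (assumed trivializable along $N$) corresponds by \cref{prop:bundlecorrespondence} to a $G_N$-bundle $P_N\to S$. The $S$-component of $A$ becomes the connection $A_S$ on $P_N$ (\cref{prop:connectionScorrespondence}), while the $N$-component $A_N$ becomes a section of $P_N\times^{G_N}\Conn^{\triv}_G(N)$ (\cref{prop:connectionNcorrespondence}). The one-form $\phi\in\Omega^1(M,\ad P)$ similarly splits into an $N$-part $\phi_N$, combining with $A_N$ into the complexified connection $\cA = A_N+\im\phi_N$ viewed as a section of $P_N\times^{G_N}\Conn^{\triv}_{G_\C}(N)\to S$, and an $S$-part which plays the role of the gauged-mechanics field $\sigma\in\Gamma(S,\ad P_N)$. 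The scalars $\sigma,\tilde{\sigma}$ of the Yang--Mills multiplet become the 1d gauge-multiplet scalars $\varphi,\xi$. The fermions $\psi,\tilde{\psi},\chi^\pm,\eta,\tilde{\eta}$ decompose along $N\times S$ into forms on $N$ tensored with either functions or $1$-forms on $S$, and I would match them to $\chi,\psi,\nu,c,\eta,\lambda$ as dictated by the requirement that the $Q_A$-variations \eqref{eq:1dAgaugedN=4supercharge} reproduce \eqref{eq:KWSUSY}.

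For the bosonic action, the key computation is the integration along $N$. The term $\tfrac12|F_\cA|^2$ splits as the pieces $|F_{A_N}-\tfrac12[\phi_N\wedge\phi_N]|^2$ and $|\d_{A_N}\phi_N|^2$ on $N$ (giving the $|\beta|^2 = |\alpha|^2$ contribution at generic $(u,v)$ and, combined with the $\theta$-term below, the precise combination \eqref{eq:KWalpha}), plus the $S$-kinetic term for $\cA$ which becomes $\tfrac12|\partial_t\cA|^2$, matching the $\tfrac12|\d_{A_S}\phi_M|^2$ in \eqref{eq:1dN=4gaugedsigmamodel}. The term $\tfrac12|\d^*_A\phi|^2$ on $M$ integrates, along $N$, to $\tfrac12|\d_{A_N}\star_N\phi_N|^2 = \tfrac12|\mu|^2$, which is exactly the moment-map term in the hyperK\"ahler quotient description of \cite{Corlette}. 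The remaining scalar terms reproduce the Yang--Mills-multiplet kinetic/commutator terms. Finally, the $\theta$-term $\tfrac{\im\theta}{16\pi^2}\int_M F_A\wedge F_A$ integrates along $N$ to $\im \int_S\phi^* a$ with $a$ the closed one-form $\tfrac{\theta}{8\pi^2}\delta\Re S_{CS}$, using that $F_A\wedge F_A = \d\bigl((A\wedge \d A)+\tfrac13(A\wedge[A\wedge A])\bigr)$ and the identification of the $N$-integral of $S_{CS}$ as a function on $\Conn^{\triv}_{G_\C}(N)$. The combination of coefficients in \eqref{eq:KWalpha} arises from the $(u,v)$-dependent rotation between the self-dual/anti-self-dual terms $(F-\tfrac12[\phi\wedge\phi])^\pm$ and $(\d_A\phi)^\pm$ in the twisted theory, which after integration along $N$ repackages the real and imaginary parts of $\delta S_{CS}$.

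Matching the supersymmetries is in principle a direct calculation: substitute the dictionary into \eqref{eq:KWSUSY} with $Q = uQ_l+vQ_r$, rewrite the resulting expressions in terms of $\cA$, $\sigma$, and the fermions, and compare with \eqref{eq:1dAgaugedN=4supercharge} applied to $\phi = \cA$, noting that the Christoffel terms on the affine space $\Conn^{\triv}_{G_\C}(N)$ vanish (so the $\Gamma^i_{jk}\chi^j\psi^k$ corrections are absent) and that the terms $\sigma^j\partial_i\mu(e_j)$ in $\delta\psi$ become the commutators $[\sigma,\phi_N]$ and $\star_N \d_{A_N}\sigma$ that appear in \eqref{eq:KWSUSY}. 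The main obstacle I expect is the bookkeeping of the $(u,v)$-dependent mixing: the split $\delta\chi^\pm$ equations in \eqref{eq:KWSUSY} produce a $u:v$ rotation between $F_\cA$ and $\d_A\phi$, and I must verify that this rotation is exactly the one that makes $\alpha$ in \eqref{eq:KWalpha} (the real part of $\beta$) the correct one-form entering the $Q_A$-variation of $\psi$. Handling the topological $\theta$-term jointly with this rotation, so that the full one-form $\alpha+\im a$ matches on both sides, is the subtle point; the rest is a lengthy but mechanical verification.
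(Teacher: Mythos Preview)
Your approach is essentially the same as the paper's: set up the field dictionary via the $N\times S$ splitting, verify the bosonic action, and match the $Q_A$ variations. Two small points you glossed over that the paper handles explicitly: first, the fermions $\chi^\pm$ are (anti) self-dual two-forms on $N\times S$, and the paper uses the identification $\chi^\pm = \d t\wedge \chi^\pm_N \pm \star_N\chi^\pm_N$ to recast them as one-forms along $N$ before matching to the mechanics field $\psi$; second, your claim that $\tfrac12|\d^*_A\phi|^2$ integrates along $N$ to $\tfrac12|\mu|^2$ drops the $S$-derivative contribution $\d^*_{A_0}\phi_0$ and its cross term with $\d^*_{A_N}\phi_N$, which the paper absorbs via the identity $\tfrac12|\d\cA_0 + \d_{\cA_0}\cA|^2 + (\d^*_A\phi, \d^*_{A_0}\phi_0) = \tfrac12|\d A_0 + \d_{A_0}A + \im\d_{A_0}\phi|^2 + \tfrac12|\d_A\phi_0 + \im[\phi,\phi_0]|^2$. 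Neither is a genuine obstacle, and with these two details filled in your argument goes through exactly as the paper's does.
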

\begin{proof}
We decompose the one-form fields under the splitting $M=N\times S$ as follows:
\[A\mapsto A + A_0,\qquad \phi\mapsto \phi + \phi_0,\qquad \psi\mapsto \psi + \psi_0,\]
where the first component is a one-form along $N$ and the second component is a one-form along $S$. As before, we denote by $\cA = A + \im\phi$ the component of the complexified connection on $P$ in the $N$ direction; we denote by $\cA_0 = A_0 + \im\phi_0$ the component of the complexified connection on $P$ in the $S$ direction.

If $\star$ is the Hodge star operator on $N\times S$ and $\star_N$ is the Hodge star operator on $N$, then we have $\star(\d t\wedge \gamma) = \star_N \gamma$ for any one-form $\gamma$ along $N$, where $\d t$ is a one-form of norm $1$ along $S$. (Anti) self-dual two-forms on $M$ are identified with one-forms $\gamma$ along $N$: (anti) self-dual two-forms on $N\times S$ are $\d t\wedge \gamma \pm \star_N \gamma$. Therefore, we may write
\begin{equation}
\chi^\pm = \d t\wedge \chi^\pm_N \pm \star_N\chi^\pm_N.
\label{eq:selfdualcompactification}
\end{equation}

If we match the fields as in \cref{tbl:KWcompactification}, then the supersymmetry transformations \eqref{eq:KWSUSY} coincide with \eqref{eq:1dAgaugedN=4supercharge}.

The action \eqref{eq:KWaction} becomes
\begin{align*}
S_{bosonic} = \int_M\dvol_M&\Bigl(\frac{1}{2}|F_\cA|^2 + \frac{1}{2}|\d \cA_0 + \d_{\cA_0} \cA|^2 + \frac{1}{2}|\d^*_A\phi|^2 + \frac{1}{2}|\d_{A_0} \phi_0|^2 + (\d^*_A\phi, \d^*_{A_0} \phi_0) \\
- &\frac{1}{2}[\tilde{\sigma}, \sigma]^2 + (\d_A\tilde{\sigma}, \d_A\sigma) + (\d_{A_0}\tilde{\sigma}, \d_{A_0}\sigma) + ([\phi, \tilde{\sigma}], [\phi, \sigma]) + ([\phi_0, \tilde{\sigma}], [\phi_0, \sigma])\Bigr) \\
+ &\frac{\im\theta}{8\pi^2} \int_M F_A\wedge (\d A_0 + \d_{A_0} A).
\end{align*}

The norm squared of $\alpha$ with respect to the metric on $\Conn^{\triv}_{G_\CC}(N)$ is
\[|\alpha|^2 = \int_N \dvol_N |F_\cA|^2.\]
Also, observe that
\[
\frac{1}{2} \int_M \dvol_M(|\d \cA_0 + \d_{\cA_0} \cA|^2 + (\d^*_A\phi, \d^*_{A_0} \phi_0)) = \frac{1}{2}\int_M \dvol_M(|\d A_0 + \d_{A_0} A + \im\d_{A_0} \phi|^2 + |\d_A\phi_0 + \im[\phi, \phi_0]|^2).
\]

Using these identities it is easy to see that the action \eqref{eq:KWaction} is equivalent to \eqref{eq:1dN=4gaugedsigmamodel} substituting the fields using \cref{tbl:KWcompactification}.
\end{proof}

\begin{table}[h]
\begin{tabular}{cc}
\hline
supersymmetric mechanics & GL twist \\
\hline
$A$ & $A_0$ \\
$\eta$ & $-\frac{u\eta+v\tilde{\eta}}{u^2+v^2}$ \\
$c$ & $\frac{u\tilde{\eta}-v\eta}{u^2+v^2}$ \\
$\nu$ & $v\psi_0-u\tilde{\psi}_0$ \\
$\lambda$ & $u\psi_0 + v\tilde{\psi}_0$ \\
$\varphi$ & $-(u^2+v^2)\sigma$ \\
$\xi$ & $-\frac{1}{u^2+v^2} \tilde{\sigma}$ \\
$\sigma$ & $\phi_0$ \\
$\phi$ & $A + \im\phi$ \\
$\chi$ & $(u+\im v)(\psi-\im \tilde{\psi})$ \\
$\psi$ & $-\frac{(u+\im v)(\chi^+_N-\im \chi^-_N)}{u^2+v^2}$ \\
\hline
\end{tabular}
\caption{Fields in the $\cN=4$ gauged supersymmetric mechanics and in the GL twist of the 4d $\cN=4$ super Yang--Mills theory.}
\label{tbl:KWcompactification}
\end{table}

The zero locus of $\alpha$ on $\Conn^{\triv}_{G_\C}(N)/\Map(N, G_\C)$ coincides with the moduli space of flat $G_\C$-connections $\Loc^{\triv}_{G_\C}(N)$ on a trivializable $G_\C$-bundle. Let us also consider nontrivial $G_\C$-bundles. Assuming $G_\C$ is an algebraic group, there is a $(-1)$-shifted symplectic stack $\R\Loc_{G_\C}(N)$ parametrizing $G_\C$-local systems (see e.g. \cite{PTVV}) whose classical truncation is $\Loc_{G_\C}(N)$. It follows from \cite[Theorem 4.8]{JoyceTanakaUpmeier} and \cite{JoyceUpmeier1} that $\R\Loc_{G_\C}(N)$ carries a canonical orientation data (called a spin structure in those papers). Note that the restriction of $a$ to $\Loc_{G_\C}(N)$ is zero, so the twist by $\cL_a$ in \cref{mainproposalstack} disappears.

\begin{proposal}
Suppose $G_\C$ is an algebraic group. Then the space of states in the GL twist on a closed oriented 3-manifold $N$, for generic parameters, is
\[\R\Gamma(\Loc_{G_\C}(N), P_{\R\Loc_{G_\C}(N)}).\]
\end{proposal}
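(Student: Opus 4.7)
The plan is to apply Proposal \ref{mainproposalstack} to the $\cN=4$ gauged supersymmetric mechanics produced by the preceding theorem. The K\"ahler target is $M = \Conn^{\triv}_{G_\C}(N)$, the gauge group is $G_N = \Map(N, G)$ (with complexification $G_{N,\C}$) acting with moment map $\mu = \d_A \star_N \phi$, and the one-forms $\alpha$ and $a$ defined in \eqref{eq:KWalpha}--\eqref{eq:KWa} have $(1,0)$-parts that are nonzero scalar multiples of $\delta S_{CS}$, the differential of the holomorphic Chern--Simons functional. In particular, $\beta = \tfrac{u^2-v^2+\im uv}{u^2+v^2}\,\delta S_{CS}$ is $G_{N,\C}$-equivariantly closed, verifying the hypothesis of Proposal \ref{mainproposalstack}, and the locally constant function $h$ is taken to vanish.

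The next step is to identify $[\beta]^{-1}(0)$ inside the quotient stack. Critical points of $S_{CS}$ are precisely flat $G_\C$-connections, so on the topologically trivial component one recovers $\Loc^{\triv}_{G_\C}(N)$; repeating the construction for all topological types of $G_\C$-bundles on $N$ assembles the full classical moduli stack $\Loc_{G_\C}(N)$. On the derived side, the construction of \cite{PTVV} applied to the Betti mapping stack of $N$ into $BG_\C$ equips $\R\Loc_{G_\C}(N)$ with a canonical $(-1)$-shifted symplectic structure, and the results of \cite{JoyceTanakaUpmeier,JoyceUpmeier1} supply the required orientation data $\det(\bL_{\R\Loc_{G_\C}(N)})^{1/2}$. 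Since $a \propto \delta \Re S_{CS}$ vanishes on the flat locus, the local system $\cL_a$ of Proposal \ref{mainproposalstack} is trivializable and can be dropped. Collecting these inputs and invoking Proposal \ref{mainproposalstack} yields the stated formula for the Hilbert space, for generic values of the twisting parameter $(u,v)\in\CP^1$.

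The main obstacle is the passage between the infinite-dimensional analytic description and its finite-dimensional algebraic counterpart. Specifically, one needs the derived critical locus of the holomorphic Chern--Simons functional on the Fr\'echet stack $[\Conn^{\triv}_{G_\C}(N)/G_{N,\C}]$ to be equivalent, as a $(-1)$-shifted symplectic stack, to the algebraic $\R\Loc_{G_\C}(N)$: the underlying classical identification of flat connections with local systems is standard, but a genuine derived match requires a shifted symplectic formalism in the Fr\'echet setting (or an a priori finite-dimensional reduction, e.g.\ through slice or Uhlenbeck-type arguments). Granting this compatibility together with Conjecture \ref{conj:perverseBV}, the conclusion then follows formally from Proposal \ref{mainproposalstack} as outlined above.
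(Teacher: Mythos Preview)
Your argument is correct and follows essentially the same route as the paper: apply \cref{mainproposalstack} to the $\cN=4$ gauged mechanics, identify the zero locus of $\beta$ with flat $G_\C$-connections, invoke the $(-1)$-shifted symplectic structure on $\R\Loc_{G_\C}(N)$ from \cite{PTVV} and the orientation data from \cite{JoyceTanakaUpmeier,JoyceUpmeier1}, and note that $a$ vanishes on the flat locus so $\cL_a$ drops out. Your explicit discussion of the infinite-dimensional-to-algebraic comparison as the residual gap is a useful addition; the paper leaves this implicit in labeling the statement a Proposal rather than a Theorem.
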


The previous complex was considered in \cite{AbouzaidManolescu} where it was called the \emph{complexified instanton Floer homology} of $N$.

\section{Haydys--Witten theory}
\label{sect:HaydysWitten}

In this section we describe a compactification of a topological twist of the 5d $\cN=2$ super Yang--Mills theory considered in \cite{WittenFivebranes} on a K\"ahler surface.

\subsection{Twisted super Yang--Mills theory}
Consider a compact Lie group $G$ equipped with a nondegenerate symmetric bilinear pairing $(-, -)$ on its Lie algebra. We may define the 5d $\cN=2$ super Yang--Mills theory with a gauge group $G$. It admits a topological twist introduced in \cite{WittenFivebranes} which allows us to consider the theory on the product $M\times S$ of an oriented Riemann 4-manifold $M$ and an oriented Riemannian 1-manifold $S$. We call it the \emph{Haydys--Witten} twist of the 5d $\cN=2$ super Yang--Mills theory. The coordinates along $M$ will have Greek indices and $S$ will have a coordinate $t$.

\begin{notation}
For a vector bundle $V\rightarrow M\times S$ we denote by $\Omega^n_M(M\times S, V)$ the space of $V$-valued $n$-forms on $M\times S$ along the $M$ direction.
\end{notation}

The theory has the following fields:
\begin{itemize}
\item A principal $G$-bundle $P\rightarrow M\times S$.
\item A connection $A$ on $P$ in the $M$ direction and $A_0$ in the $S$ direction.
\item A self-dual two-form $B\in\Omega^2(M\times S, \ad P)$.
\item Sections $\sigma,\tilde{\sigma}\in\Gamma(M\times S, \ad P)$.
\item Odd one-forms $\psi, \tilde{\psi}\in\Pi\Omega^1_M(M\times S, \ad P)$.
\item Odd self-dual two-forms $\chi, \tilde{\chi}\in\Pi\Omega^2_M(M\times S, \ad P)$.
\item Odd sections $\eta, \tilde{\eta}\in\Pi\Gamma(M\times S, \ad P)$.
\end{itemize}

We denote by $F_A\in\Omega^2_M(M\times S, \ad P)$ the curvature of $A$ and by $F_t\in\Omega^1_M(M\times S, \ad P)$ the contraction of the curvature of $A+A_0$ with $\partial_t$.

\begin{notation}
Suppose $X$ and $Y$ are self-dual two-forms on an oriented Riemannian 4-manifold. Then one can define their cross product to be (see \cite[Formula (5.29)]{WittenFivebranes} and \cite[Section 2.2]{QiuZabzine})
\[(X\times Y)_{\mu\nu} = X_{\mu \lambda} Y_\nu^{\ \lambda}-X_{\nu\lambda}Y_\mu^{\ \lambda}.\]
In the case $X,Y$ are self-dual two-forms valued in a bundle of Lie algebras, their cross product is symmetric and is given by
\[(X\times Y)_{\mu\nu} = \frac{1}{2}[X_{\mu \lambda}, Y_\nu^{\ \lambda}] - \frac{1}{2}[X_{\nu\lambda}, Y_\mu^{\ \lambda}].\]
\end{notation}

\begin{remark}
Let $N$ be an oriented Riemannian 3-manifold and consider the product metric on $N\times S^1$. Identify the self-dual two-forms on $N\times S^1$ with $\Omega^1_N(N\times S^1)$ using \eqref{eq:selfdualcompactification}. Then the cross product of the self-dual two-forms on $N\times S^1$ is identified with the cross product $X, Y\mapsto \star(X\wedge Y)$ of one-forms on $N$.
\end{remark}

The supersymmetry transformation is written in \cite[Section 4]{Anderson} (we take $u=1$ and $v=0$). Adjusting the conventions slightly (\cite{Anderson} works in the Minkowski signature, while we work in the Euclidean signature), it is
\begin{equation}
\begin{cases}
\delta A_\mu = \im\tilde{\psi}_\mu \\
\delta A_0 = \im\tilde{\eta} \\
\delta \sigma = -\im \sqrt{2}\eta \\
\delta\tilde{\sigma} = 0 \\
\delta B^{\mu\nu} = \im\tilde{\chi}^{\mu\nu} \\
\delta\eta = [\sigma, \tilde{\sigma}] \\
\delta\tilde{\eta} = -\sqrt{2}\d_{A_0}\tilde{\sigma} \\
\delta \psi = -F_t + \d^*_A B \\
\delta\tilde{\psi} = -\sqrt{2}\d_A \tilde{\sigma} \\
\delta \chi_{\kappa\lambda} = -2 (F_A)^+_{\kappa\lambda} + \frac{1}{2}(B\times B)_{\kappa\lambda} + \d_{A_0} B_{\kappa\lambda} \\
\delta \tilde{\chi}_{\kappa\lambda} = -\sqrt{2}[B_{\kappa\lambda}, \tilde{\sigma}].
\end{cases}
\label{eq:HWSUSY}
\end{equation}

The bosonic part of the action is (see \cite[Formula 5.40]{Anderson})
\begin{equation}
\begin{split}
S_{bosonic} = \int_{M\times S}\dvol_{M\times S}\Bigl(&\frac{1}{2}|F_A|^2 + \frac{1}{2}|F_t|^2 + \frac{1}{8}(\d_A B_{\mu\nu}, \d_A B^{\mu\nu}) + (\d_A \sigma, \d_A\tilde{\sigma}) \\
&+\frac{1}{4}|\d_{A_0} B|^2 + (\d_{A_0}\sigma, \d_{A_0}\tilde{\sigma}) \\
&+\frac{1}{16}|B\times B|^2 + \frac{1}{2}([B, \sigma], [B, \tilde{\sigma}]) - \frac{1}{2}|[\sigma, \tilde{\sigma}]|^2 \\
&+\frac{1}{8} R |B|^2 - \frac{1}{8}R_{\mu\nu\rho\sigma} B^{\mu\rho} B^{\nu\sigma}\Bigr).
\end{split}
\label{eq:HWaction}
\end{equation}
Here $R$ is the scalar curvature and $R_{\mu\nu\rho\sigma}$ is the Riemann curvature tensor.

Consider the Fr\'echet manifold $\Conn^{\triv,+}_G(M)$ parametrizing pairs of a connection $A$ on the trivial $G$-bundle $P\rightarrow M$ together with a self-dual $\ad P$-valued two-form $B$. It carries a metric
\[g(\delta A + \delta B, \delta A + \delta B) = \int_M \dvol_M (\delta A, \delta A) + \frac{1}{2}\int_M\dvol_M (\delta B, \delta B).\]
The group $G_M = \Map(M, G)$ acts on $\Conn^{\triv, +}_G(M)$ by gauge transformations on $A$ and by conjugation on $B$. This action preserves the metric.

Consider a smooth $G_M$-invariant function $f\colon \Conn^{\triv, +}_G(M)\rightarrow \R$ given by
\begin{equation}
f(A, B) = \int_M \dvol_M\left(-(F_A, B) +\frac{1}{12}(B\times B, B)\right).
\end{equation}
Its differential is
\begin{equation}
\alpha = \delta f = \int_M \dvol_M\left(-(F_A, \delta B) + \frac{1}{4}(B\times B, \delta B) + (\d_A^* B, \delta A)\right)
\label{eq:HWalpha}
\end{equation}

In addition, consider the $G_M$-invariant function $h\colon \Conn^{\triv, +}_G(M)\rightarrow \R$ given by the first Pontryagin class
\begin{equation}
h(A, B) = \frac{1}{2}\int_M (F_A\wedge F_A).
\label{eq:HWh}
\end{equation}

Note that this function is \emph{identically zero} since we are restricting to connections on topologically trivial $G$-bundles, but it has nontrivial values if we include nontrivial bundles.

\begin{thm}
The Haydys--Witten twist of the 5d $\cN=2$ super Yang--Mills theory on $M\times S$ is equivalent to the $\cN=2$ $G_M$-gauged supersymmetric mechanics of maps $S\rightarrow \Conn^{\triv, +}_G(M)$ with $\alpha$ given by \eqref{eq:HWalpha} and $h$ given by \eqref{eq:HWh}. Under this correspondence the supersymmetry transformation \eqref{eq:HWSUSY} corresponds to the transformation induced by the A supercharge $Q_A$ \eqref{eq:1dAgaugedN=2supercharge}.
\label{thm:HWRiemannian}
\end{thm}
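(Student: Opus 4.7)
The plan is to follow the Kaluza--Klein reduction strategy of the GL twist theorem proved just above, decomposing the 5d Haydys--Witten fields along the product $M \times S$ and identifying them with the fields of the $G_M$-gauged $\cN=2$ supersymmetric mechanics whose target is the infinite-dimensional Fr\'echet manifold $\Conn^{\triv,+}_G(M) = \{(A, B)\}$.

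First I would invoke \cref{prop:bundlecorrespondence,prop:connectionScorrespondence,prop:connectionNcorrespondence} to identify the 5d bundle-and-connection data $(P, A, A_0)$ (with $P$ trivializable along $M$) with a $G_M$-bundle $P_M\to S$, a 1d connection (the $S$-component $A_0$), and an equivariant section $S \to \Conn^{\triv}_G(M)$ (the $M$-component $A$). Pairing this section with the self-dual two-form $B$ produces the matter field $\phi\colon S \to \Conn^{\triv,+}_G(M)$. The 5d bosonic scalars $\sigma,\tilde\sigma$ are matched with the 1d gauge-multiplet scalars $\varphi,\xi$, while the fermionic matching is forced by the variation rules in \eqref{eq:HWSUSY} and \eqref{eq:1dAgaugedN=2supercharge}: the pair $(\tilde\psi, \tilde\chi)$ encodes the 1d superpartner $\chi$ of $\phi$ (since $\delta A = \im \tilde\psi$ and $\delta B = \im \tilde\chi$), while $(\psi,\chi)$ together encode the 1d $\psi$; the odd scalars $\eta,\tilde\eta$ correspond to the 1d $\eta$ and to the $\d t$-coefficient of $\lambda$. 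Assembling these into a table analogous to \cref{tbl:KWcompactification} completes the dictionary.

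Given this dictionary, the SUSY variation match reduces to a term-by-term check. The substantive identities are $\delta\psi = -F_t + \d^*_A B$, in which $F_t = \partial_t A - \d_A A_0$ is (up to sign conventions) the 1d covariant derivative of $A$ as a section of $P_M\times^{G_M}\Conn^{\triv}_G(M)$ and $\d^*_A B$ is the $A$-component of $\alpha$ from \eqref{eq:HWalpha}, and $\delta\chi = -2F_A^+ + \tfrac12(B\times B) + \d_{A_0}B$, in which $\d_{A_0}B$ is the 1d covariant derivative of $B$ and the remaining terms give the $B$-component of the metric-dual $\alpha^\sharp$ (the factor of $2$ matching the $\tfrac12$ in the metric on self-dual two-forms).

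Matching the bosonic action \eqref{eq:HWaction} with \eqref{eq:1dN=2gaugedsigmamodel} is the part I expect to be most delicate. The natural decomposition gives $\tfrac12|\d_{A_0}\phi|^2 = \tfrac12|F_t|^2 + \tfrac14|\d_{A_0} B|^2$, and dualizing \eqref{eq:HWalpha} through the product metric on $\Conn^{\triv,+}_G(M)$ yields
\[
\tfrac12 |\alpha|^2 \;=\; \int_M \dvol_M \bigl(|F_A^+|^2 - (F_A^+, B\times B) + \tfrac{1}{16}|B\times B|^2 + \tfrac12 |\d^*_A B|^2\bigr),
\]
while the 5d Yang--Mills term reorganizes as $\tfrac12|F_A|^2 = |F_A^+|^2 - \tfrac12 F_A\wedge F_A/\dvol_M$, producing exactly the $\phi^* h$ grading contribution prescribed by \eqref{eq:HWh}. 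The main obstacle is that the $\tfrac12|\d^*_A B|^2$ piece in $|\alpha|^2$ does not manifestly equal the $\tfrac18 (\d_A B_{\mu\nu}, \d_A B^{\mu\nu})$ kinetic term of \eqref{eq:HWaction}; these are related by a Bochner--Weitzenb\"ock identity on the self-dual two-form bundle of a Riemannian 4-manifold, and the discrepancy is precisely absorbed by the explicit curvature terms $\tfrac18 R |B|^2 - \tfrac18 R_{\mu\nu\rho\sigma} B^{\mu\rho} B^{\nu\sigma}$ appearing in \eqref{eq:HWaction}. Once this Weitzenb\"ock identity is isolated, the remaining bosonic terms (and, by parallel bookkeeping, the fermionic terms not written out here) organize themselves into the $G_M$-gauged extension of the $\cN=2$ mechanics action from \cref{prop:1dN=2}.
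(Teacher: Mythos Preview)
Your proposal is correct and follows essentially the same route as the paper: a field dictionary (the paper records it in \cref{tbl:HWcompactification}), a direct check of the supersymmetry variations, and then the two identities you singled out---the self-dual decomposition $|F_A^+|^2\dvol_M = \tfrac12|F_A|^2\dvol_M + \tfrac12(F_A\wedge F_A)$ producing the $\phi^*h$ term, and a Weitzenb\"ock-type identity on $\ad P$-valued self-dual two-forms absorbing the Riemann-curvature terms of \eqref{eq:HWaction}. One small bookkeeping point: in the paper's version of that identity the cross term $(F_A^+, B\times B)$ sits on the \emph{left} side together with $|\d_A^* B|^2$ (it is the gauge-curvature contribution to the Weitzenb\"ock formula), so when you work out $\tfrac12|\alpha|^2$ carefully the coefficient of that cross term should come out as $-\tfrac12$ rather than $-1$, and then everything matches the $\tfrac18$-normalized terms in \eqref{eq:HWaction}.
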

\begin{proof}
We can match the fields in the super Yang--Mills theory and the gauged supersymmetric mechanics as shown in \cref{tbl:HWcompactification}. It is then straightforward to check that the A supersymmetry transformation in gauged supersymmetric mechanics corresponds to the supersymmetry transformation \eqref{eq:HWSUSY}.

Using integration by parts one may compute that
\[\int_M \dvol_M\left(|\d_A^* B|^2 - (F_A^+, B\times B)\right) = \int_M \dvol_M\left(\frac{1}{4}(\d_A B_{\mu\nu} \d_A B^{\mu\nu}) + \frac{1}{4} R|B|^2 - \frac{1}{4} R_{\mu\nu\rho\sigma} B^{\mu\rho}B^{\nu\sigma}\right).\]
Moreover,
\[|F_A^+|^2\dvol_M = \frac{1}{2}|F_A|^2\dvol_M + \frac{1}{2}(F_A\wedge F_A).\]
Combining these identities it is easy to see that the action \eqref{eq:HWaction} coincides with the action \eqref{eq:1dN=2gaugedsigmamodel} of gauged supersymmetric mechanics.
\end{proof}

\begin{table}[h]
\begin{tabular}{cc}
\hline
supersymmetric mechanics & Haydys--Witten twist \\
\hline
$A$ & $A_0$ \\
$\eta$ & $-\eta$ \\
$\lambda$ & $\tilde{\eta}$ \\
$\varphi$ & $\sqrt{2}\tilde{\sigma}$ \\
$\xi$ & $\frac{1}{\sqrt{2}}\sigma$ \\
$\phi$ & $A, B$ \\
$\chi$ & $\tilde{\psi}, \tilde{\chi}$ \\
$\psi$ & $\psi, -\chi$ \\
\hline
\end{tabular}
\caption{Fields in the $\cN=2$ gauged supersymmetric mechanics and in the Haydys--Witten twist of the 5d $\cN=2$ super Yang--Mills theory.}
\label{tbl:HWcompactification}
\end{table}

\subsection{K\"ahler case}

Suppose now that $M$ is a K\"ahler manifold. We will denote by $(-, -)_\C$ the hermitian extension of the metric on differential forms to complexified differential forms, by $(-, -)_\omega$ the corresponding symplectic structure and by $(-, -)$ the $\C$-linear extension of the metric.

In this section we are going to show that the supersymmetry of the compactified model is enhanced. Consider the Fr\'echet manifold $\Conn^{\overline{\partial}, \triv, (2, 0)}_{G_\C}(M)$ parametrizing $(0, 1)$ connections $A_{0, 1}$ on the trivial $G_\C$-bundle over $M$ together with an $\ad P$-valued $(2, 0)$ form $B_{2, 0}$. It will also be convenient to identify $A_{0, 1}$ with the $(0, 1)$ part of a connection $A$ on the trivial $G$-bundle over $M$. This manifold admits a linear K\"ahler structure associated with the Hermitian metric
\[(\delta A_{0, 1} + \delta B_{2, 0}, \delta A'_{0, 1} + \delta B'_{2, 0})_\C = 2\int_M\dvol_M (\delta A_{0, 1}, \delta A'_{0, 1})_\C + \int_M \dvol_M (\delta B_{2, 0}, \delta B'_{2, 0})_\C.\]

$\Conn^{\overline{\partial}, \triv, (2, 0)}_{G_\C}(M)$ admits a holomorphic action by $\Map(M, G_\C)$ given by a gauge transformation on $A_{0, 1}$ and conjugation on $B_{2, 0}$. The following lemma is proven by a straightforward computation.

\begin{lm}
For $X,Y\in\Omega^{2, 0}(M)$ we have
\[\omega\cdot (X, Y)_\omega = -2 \Re X\times \Re Y.\]
\end{lm}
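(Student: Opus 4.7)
The plan is to reduce the identity to a pointwise linear-algebra statement and verify it in a standard orthonormal frame. Both sides are pointwise and $C^\infty(M;\C)$-bilinear in $X$ and $Y$ and involve no derivatives, so it suffices to check the identity in the fibre $\Lambda^+_p$ at an arbitrary point $p\in M$. At $p$ I would pick a real orthonormal coframe $e^1,\dots,e^4$ with $dz^1 = e^1 + \im e^2$, $dz^2 = e^3 + \im e^4$, so that $\omega_p = e^{12} + e^{34}$, where $e^{ij}\define e^i\wedge e^j$. Since $\Omega^{2,0}_p$ is one-dimensional over $\C$, it is enough to treat $X = \lambda\,dz^1\wedge dz^2$ and $Y = \mu\,dz^1\wedge dz^2$ for $\lambda,\mu\in\C$.

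Next I would decompose $dz^1\wedge dz^2 = \zeta_+ + \im\zeta_-$ with $\zeta_+ = e^{13} - e^{24}$ and $\zeta_- = e^{14} + e^{23}$, which form an orthogonal basis of the rank-$2$ subspace $\Re\Omega^{2,0}_p \subset \Lambda^+_p$ orthogonal to $\omega_p$. Then $\Re X = (\Re\lambda)\zeta_+ - (\Im\lambda)\zeta_-$ and analogously for $\Re Y$. The heart of the computation is the multiplication table of the cross product in the basis $(\omega_p,\zeta_+,\zeta_-)$ of $\Lambda^+_p$, read off directly from $(X\times Y)_{\mu\nu} = X_{\mu\lambda}Y_\nu{}^\lambda - X_{\nu\lambda}Y_\mu{}^\lambda$: one obtains $\zeta_+\times\zeta_+ = \zeta_-\times\zeta_- = 0$ and $\zeta_+\times\zeta_- = -\zeta_-\times\zeta_+ = 2\,\omega_p$. (This reflects the familiar fact that the cross product turns $\Lambda^+_p$ into the Lie algebra $\mathfrak{so}(3)$, with $\omega$ in the Cartan on a K\"ahler surface.) Substituting, $\Re X\times\Re Y$ collapses to a scalar multiple of $\omega_p$ depending only on $\Im(\Bar\lambda\mu)$.

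For the left-hand side I would evaluate $(dz^1\wedge dz^2, dz^1\wedge dz^2)_\C$ in the chosen frame from the definition of the Hermitian extension of the metric, and extract $(X,Y)_\omega$ as the imaginary part with the sign convention fixed in \cref{sec:supersymplectic}. Matching the two coefficients of $\omega$ produces the identity with exactly the prefactor $-2$. The one place that requires care, and the step I expect to be the main obstacle, is sign bookkeeping: the cross-product table depends on the orientation used in the definition of self-duality, and the normalisation of $(-,-)_\omega$ relative to $(-,-)_\C$ must be consistent with the conventions used elsewhere in the paper. Once those conventions are pinned down the identity becomes a one-line check, and no analytic input beyond pointwise linear algebra is needed.
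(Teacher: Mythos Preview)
Your approach is correct and is exactly what the paper means by ``proven by a straightforward computation'': both sides are pointwise, tensorial, and live in $\Lambda^+_p$, so verifying the identity in a unitary frame on the rank-one line $\Omega^{2,0}_p$ is all that is needed. One small correction: the definition of $(-,-)_\omega$ you need is not in \cref{sec:supersymplectic} but in the paragraph immediately preceding the lemma, where $(-,-)_\C$ is declared to be the Hermitian extension of the metric and $(-,-)_\omega$ the associated symplectic form; once you fix the sign there, your cross-product table $\zeta_+\times\zeta_- = 2\omega$ and the resulting coefficient $\Im(\bar\lambda\mu)$ match on the nose.
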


Using this identity we can see that the subgroup $G_M=\Map(M, G)\subset \Map(M, G_\C)$ of compact gauge transformations acts by isometries and admits a moment map
\begin{equation}
\mu(A_{0, 1}, B_{2, 0}) = -(F_A - \Re B_{2, 0}\times \Re B_{2, 0})\wedge \omega.
\label{eq:HWmoment}
\end{equation}

Consider the holomorphic function $W\colon \Conn^{\overline{\partial}, \triv, (2, 0)}_{G_\C}(M) \rightarrow \C$ given by
\begin{equation}
W(A_{0, 1}, B_{2, 0}) = -2\int_M \dvol_M (F_{A_{0, 1}}, B_{2, 0})
\end{equation}
and its differential
\begin{equation}
\beta = \partial W = -2\int_M \dvol_M \left((\delta A_{0, 1}, \partial_{A_{0, 1}}^* B_{2, 0}) +(F_{A_{0, 1}}, \delta B_{2, 0})\right).
\label{eq:HWbeta}
\end{equation}

\begin{thm}
Suppose $M$ is a K\"ahler surface. The Haydys--Witten twist of the 5d $\cN=2$ super Yang--Mills theory on $M\times S$ is equivalent to the $\cN=4$ $G_M$-gauged supersymmetric mechanics of maps $S\rightarrow \Conn^{\overline{\partial}, \triv, (2, 0)}_{G_\C}(M)$ with $\beta$ given by \eqref{eq:HWbeta} and $h$ given by \eqref{eq:HWh}. Under this correspondence the supersymmetry transformation \eqref{eq:HWSUSY} corresponds to the transformation induced by the A supercharge $Q_A$ \eqref{eq:1dAgaugedN=4supercharge}.
\end{thm}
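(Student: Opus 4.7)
The overall strategy is to reduce the $\cN=4$ statement to the $\cN=2$ statement already proven in \cref{thm:HWRiemannian} by means of \cref{rmk:N=2toN=4}, using the K\"ahler structure on $M$ to realize the $\cN=2$ target as a product involving the $\cN=4$ K\"ahler target and a copy of the Lie algebra of the gauge group.

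\textbf{Step 1: Decomposition of the self-dual two-form.} On a K\"ahler surface, self-dual two-forms decompose as $\Omega^{2,+}_M = \Omega^{2,0}_M \oplus \Omega^{0,2}_M \oplus \R\cdot\omega$. Accordingly, any $\ad P$-valued self-dual two-form $B$ can be written uniquely as
\[B = B_{2,0} + \overline{B_{2,0}} + \sigma\,\omega,\]
where $B_{2,0}\in\Omega^{2,0}(M; \ad P\otimes\C)$ and $\sigma\in\Gamma(M; \ad P)$. This identifies
\[\Conn^{\triv,+}_G(M) \;\cong\; \Conn^{\bar\partial, \triv, (2,0)}_{G_\C}(M) \times \fg_M,\]
where $\fg_M = C^\infty(M; \g)$, with the first factor carrying the K\"ahler structure stated in the text and the second factor being a real vector space. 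One checks directly that the metrics agree under this decomposition.

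\textbf{Step 2: Identify the moment map and the holomorphic function.} The group $G_M$ acts by compact gauge transformations, preserving the K\"ahler structure on $\Conn^{\bar\partial, \triv, (2,0)}_{G_\C}(M)$. One shows by a direct computation (using the identity from the lemma on self-dual two-forms) that the moment map coincides with formula \eqref{eq:HWmoment}, and that the $G_M$-invariant holomorphic function $W$ recovers the holomorphic Chern--Simons-like superpotential so that $\beta = \partial W$ is given by \eqref{eq:HWbeta}.

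\textbf{Step 3: Match the two one-forms.} The key computation is to verify that the real closed one-form $\alpha$ used in \cref{thm:HWRiemannian} and defined in \eqref{eq:HWalpha} agrees, under the decomposition of Step 1, with the expression
\[\tilde\alpha = \Re\beta + \sum_i\bigl(\mu(e_i)\,\d\sigma^i + \sigma^i\, \d\mu(e_i)\bigr)\]
appearing in \cref{rmk:N=2toN=4}. This is the heart of the proof: starting from $\alpha = \delta f$ with $f(A, B) = \int_M(-(F_A, B) + \tfrac{1}{12}(B\times B, B))$, one substitutes $B = B_{2,0} + \overline{B_{2,0}} + \sigma\omega$, uses the decomposition $F_A = F^{2,0}_A + F^{0,2}_A + (F_A,\omega)\omega/|\omega|^2$, and expands the cross-product $B\times B$ in the three components $(B_{2,0}, \overline{B_{2,0}}, \sigma\omega)$. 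The pure holomorphic/antiholomorphic terms reassemble into $\Re W$, while the cross terms involving $\sigma$ pair $\sigma$ with the moment map $\mu$, giving exactly $\tilde\alpha$. This is the step I expect to be the main obstacle, as it requires careful bookkeeping of the K\"ahler identities and the various pairings.

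\textbf{Step 4: Assemble the equivalence.} Given Step 3, \cref{rmk:N=2toN=4} identifies the $\cN=2$ gauged mechanics into $\Conn^{\triv, +}_G(M)$ with $\cN=4$ gauged mechanics into $\Conn^{\bar\partial, \triv, (2,0)}_{G_\C}(M)$ with superpotential one-form $\beta$ and moment map $\mu$. Combining with \cref{thm:HWRiemannian} produces the desired equivalence with the Haydys--Witten twist on $M\times S$. The matching of fields is read off by composing the dictionaries in \cref{tbl:N=2toN=4} and \cref{tbl:HWcompactification}: in particular, the extra $\cN=4$ scalar $\sigma$ is identified with the $\omega$-component of $B$, while the extra fermions $\nu, c$ correspond to the $\omega$-components of $\chi$ and $\tilde\chi$. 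The comparison of supersymmetry transformations reduces, after this identification, to the already verified $\cN=2$ case together with the $\cN=2\hookrightarrow\cN=4$ embedding of \cref{rmk:N=2intoN=4}. Matching of the bosonic action follows since both actions agree with \eqref{eq:HWaction} by \cref{thm:HWRiemannian}, and the function $h$ is the same in both descriptions.
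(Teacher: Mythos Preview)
Your proposal is correct and follows essentially the same route as the paper: reduce to the $\cN=2$ statement via \cref{thm:HWRiemannian}, use the K\"ahler decomposition of self-dual two-forms to split $\Conn^{\triv,+}_G(M)$ as $\Conn^{\overline\partial,\triv,(2,0)}_{G_\C}(M)\times\fg_M$, and invoke \cref{rmk:N=2toN=4}. The only minor difference is that the paper carries out your Step~3 at the level of the primitive $f$ rather than its differential $\alpha$, showing directly that $f(A,B)=\Re W(A_{0,1},B_{2,0})+\int_M \mu(A_{0,1},B_{2,0})\,B_\omega$; this is slightly cleaner bookkeeping but amounts to the same computation.
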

\begin{proof}
By \cref{thm:HWRiemannian} the compactification is equivalent to the $\cN=2$ $G_M$-gauged supersymmetric mechanics of maps $S\rightarrow \Conn^{\triv, +}_G(M)$. We may identify self-dual two-forms on a Kahler manifold as
\[\Omega^{2, +}(M)\cong \Omega^{2, 0}(M)\oplus \Omega^0(M)\cdot\omega,\]
where the map $\Omega^{2, +}(M)\rightarrow \Omega^{2, 0}(M)$ is given by taking the $(2, 0)$ component of a self-dual two-form and the decomposition on the right-hand side is orthogonal. This allows us to identify
\[\Conn^{\triv, +}(M)\cong \Conn^{\overline{\partial}, \triv, (2, 0)}_{G_\C}(M)\times \Map(M, \g)\]
by sending $(A, B)$ to $A_{0, 1}$ given by the $(0, 1)$ component of $A$, $B_{2, 0}$ given by the $(2, 0)$ component of $B$ and $B_\omega\in\Map(M, \g)$ given by the $\omega$ component of $B$. Under this decomposition
\[f(A, B) = \Re W(A_{0, 1}, B_{2, 0}) + \int_M \mu(A_{0, 1}, B_{2, 0}) B_\omega.\]
The claim then follows from \cref{rmk:N=2toN=4}.
\end{proof}

The critical points of $W$ on $\Conn^{\overline{\partial}, \triv, (2, 0)}_{G_\C}(M)$ are given by
\begin{equation}
F_{A_{0, 1}} = 0,\qquad \partial^*_{A_{0, 1}} B_{2, 0} = 0.
\label{eq:HWcriticalpoints}
\end{equation}
Using the K\"ahler identities we may rewrite the last equation as $\overline{\partial}_{A_{0, 1}} B_{2, 0} = 0$.

Let us now assume $M$ is a projective surface and $G_\C$ is an algebraic group. Then we may consider the derived algebraic stack $\R\Bun_{G_\C}(M)$ of algebraic $G_\C$-bundles on $M$ which is quasi-smooth. The shifted cotangent stack $\T^*[-1]\R\Bun_{G_\C}(M)$ parametrizes principal $G_\C$-bundles $P\rightarrow M$ together with an algebraic section $B\in\Gamma(M, \ad P\otimes K_M)$, i.e. solutions of \eqref{eq:HWcriticalpoints}. Using \cref{ex:cotangent} we arrive at the following.

\begin{proposal}
Suppose $G_\C$ is an algebraic group and $M$ a projective surface. Then the space of states in the Haydys--Witten twist on $M$ is the shifted Borel--Moore homology
\[\rH^{\BM}_{\dim(\R\Bun_{G_\C}(M))-\bullet}(\Bun_{G_\C}(M))\]
of the moduli stack of $G_\C$-bundles on $M$. It has a natural grading given by the second Chern character $\int_M\ch_2(P)$ of the $G_\C$-bundle $P$.
\end{proposal}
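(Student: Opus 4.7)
The plan is to derive this from \cref{mainproposalstack} applied to the description of the Haydys--Witten theory on $M\times S$ in terms of $\cN=4$ $G_M$-gauged supersymmetric mechanics into $\Conn^{\overline{\partial}, \triv, (2, 0)}_{G_\C}(M)$ with holomorphic superpotential $W$, supplemented by the identification of the resulting derived zero locus with a shifted cotangent stack and then invoking \cref{ex:cotangent}.

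First, I would apply \cref{mainproposalstack} to the data $(M_{\rm target}, G, \beta, h)$ taken from the theorem immediately preceding the proposal: $M_{\rm target} = \Conn^{\overline{\partial}, \triv, (2, 0)}_{G_\C}(M)$ as a K\"ahler manifold with its linear K\"ahler structure, acted on by $G_M = \Map(M, G)$ with moment map \eqref{eq:HWmoment}, $\beta = \partial W$ as in \eqref{eq:HWbeta}, and $h$ as in \eqref{eq:HWh}. Since $\beta = \partial W$ is exact, the induced $(1,0)$-form $[\beta]$ on $\Conn^{\overline{\partial}, \triv, (2, 0)}_{G_\C}(M) / G_{\C,M}$ is the differential of the induced function $W$, and its zero locus is the critical locus of $W$ on the quotient stack. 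The equations cut out are \eqref{eq:HWcriticalpoints}, i.e. a holomorphic $G_\C$-bundle (from $F_{A^{0,1}} = 0$) together with a holomorphic section $B_{2,0} \in \Gamma(M, \ad P \otimes K_M)$ (from $\overline\partial_{A^{0,1}} B_{2,0} = 0$ after K\"ahler identities).

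Next, I would exhibit the desired algebraic structure on this zero locus. The crucial input is that $\R\Bun_{G_\C}(M)$ is quasi-smooth (being the moduli stack of principal bundles on a smooth projective surface with $G_\C$ reductive), and the Serre duality pairing identifies sections of $\ad P \otimes K_M$ with the dual of the obstruction space, so that $\T^*[-1]\R\Bun_{G_\C}(M)$ parametrizes precisely the pairs $(P, B)$ solving \eqref{eq:HWcriticalpoints}. In particular the classical truncation of the derived critical locus of $W$ on $[\Conn^{\overline\partial, \triv, (2,0)}_{G_\C}(M)/G_{\C,M}]$ is canonically identified with the stack of singularities $t_0(\T^*[-1]\R\Bun_{G_\C}(M))$. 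Combined with the canonical orientation data on any shifted cotangent of a quasi-smooth stack (mentioned in \cref{ex:cotangent}), this puts us in position to apply that example.

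Finally, \cref{ex:cotangent} gives the quasi-isomorphism
\[\R\Gamma(t_0(\T^*[-1]\R\Bun_{G_\C}(M)), P_{\T^*[-1]\R\Bun_{G_\C}(M)}) \cong \rH^{\BM}_{\dim(\R\Bun_{G_\C}(M)) - \bullet}(\Bun_{G_\C}(M)),\]
which, combined with the fact that $h$ restricted to this locus computes $\int_M \ch_2(P)$ (since on a flat connection $\frac12 \tr F_A\wedge F_A$ computes the Chern--Weil representative of $\ch_2$), yields the grading claim. The main obstacle is the translation from the infinite-dimensional analytic Fr\'echet description (holomorphic structures $A^{0,1}$, compact gauge group $G_M$, K\"ahler quotient by $\mu$) to the finite-dimensional derived algebraic geometry description (the algebraic stack $\R\Bun_{G_\C}(M)$ with its complex gauge group $G_{\C,M}$ as a GIT-style quotient); one must argue that the relevant $(-1)$-shifted symplectic derived stack $\sfX$ that rigorously presents $[\beta]^{-1}(0)$, in the sense required by \cref{mainproposalstack}, really is $\T^*[-1]\R\Bun_{G_\C}(M)$, which requires invoking the standard Kobayashi--Hitchin-type correspondence between holomorphic and semistable algebraic bundles on the projective surface $M$, plus the derived enhancement of this correspondence ensuring the match of obstruction theories.
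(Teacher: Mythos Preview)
Your proposal is correct and follows essentially the same route as the paper: identify the zero locus of $\beta$ via \eqref{eq:HWcriticalpoints}, recognize it as the classical truncation of $\T^*[-1]\R\Bun_{G_\C}(M)$ using that $\R\Bun_{G_\C}(M)$ is quasi-smooth and Serre duality, and then invoke \cref{ex:cotangent}. The paper's justification is terser and does not spell out the Kobayashi--Hitchin bridge between the analytic and algebraic descriptions that you flag as the main obstacle, but the underlying argument is the same.
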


\section{The twist of the 7d $\cN=1$ super Yang--Mills theory}
\label{sect:G2monopoles}

In this section we describe a compactification of the topological twist of 7d $\cN=1$ super Yang--Mills theory on a Calabi--Yau three-fold. 

\subsection{Twisted super Yang--Mills theory}
Consider a compact Lie group $G$ equipped with a nondegenerate symmetric bilinear pairing $(-, -)$ on its Lie algebra and consider the 7d $\cN=1$ super Yang--Mills theory with a gauge group $G$. It admits a topological twist (see e.g. \cite{AOS}) which allows us to consider the theory on a $G_2$ manifold $M$.

\begin{notation}
Denote the fundamental $3$-form on the $G_2$ manifold $M$ by $\varphi \in \Omega^3(M)$.
\end{notation}

The theory has the following fields:
\begin{itemize}
\item A principal $G$-bundle $P \to M$.
\item A connection $A$ on $P$.
\item Three sections $\sigma, \rho, \Tilde{\rho} \in \Gamma(M , \ad P)$. 
\item Two odd one-forms $\psi,\chi \in \Pi \Omega^1(M , \ad P)$. 
\item Odd sections $\nu, \eta \in \Pi \Gamma(M , \ad P)$.
\end{itemize}

The supersymmetry transformation is
\begin{equation}
\begin{cases}
\delta A  = \im \, \chi \\
\delta \sigma  = \im \, \nu \\
\delta \rho  = 0 \\
\delta \Tilde{\rho}  = 2 \im \, \eta \\
\delta \eta = \frac{1}{2} \, [\rho, \Tilde{\rho}] \\
\delta \chi  = \d_A \rho \\
\delta \nu = [\sigma, \rho] \\
\delta \psi  = \d_A \sigma - \star ( \star \varphi \wedge F_A) .
\end{cases}
\label{eqn:7susy}
\end{equation}

The bosonic part of the action is
\begin{equation}
\begin{split}
S_{bosonic} = \int_{M}\dvol_M \, \bigg(\frac{1}{2}|F_A|^2 + \frac{1}{2}|\d_A \sigma|^2 + (\d_A \rho, \d_A\tilde{\rho}) + ([\sigma, \rho], [\sigma, \tilde{\rho}]) - \frac{1}{2}|[\rho, \tilde{\rho}]|^2 \bigg)
\end{split}
\label{eq:7dbosonaction}
\end{equation}

\begin{remark}
The above formulas are obtained by a dimensional reduction from the formulas in \cite[Section 3]{AOS} which describe the topological twist of the 8-dimensional super Yang--Mills theory on an $8$-manifold with ${\rm Spin}(7)$-holonomy.
\end{remark}

\subsection{Calabi--Yau compactification}

Suppose that $X$ is a smooth projective Calabi--Yau 3-fold with a holomorphic volume form $\Omega \in \Omega^{3,0}(X)$ and K\"ahler form $\omega_X \in \Omega^{1,1}(X)$. Our convention is that
\[\dvol_X = \frac{1}{4}\Re\Omega\wedge \Im\Omega = \frac{\omega_X^3}{6}.\]
Let $S$ be a one-dimensional Riemannian manifold and consider the product Riemannian metric on $M=X \times S$.
There is a natural $G_2$ structure on $X \times S$ with the fundamental three-form
\[
\varphi = \Re(\Omega) - \d t \wedge \omega_X
\]
and the fundamental four-form
\[\star\varphi = -\d t\wedge \Im(\Omega) - \frac{\omega_X^2}{2}.\]

Consider the Fr\'echet manifold $\Conn^{\triv}_G(X)$ parametrizing connections $A$ on the trivial $G$-bundle on $X$. It admits a Riemannian structure
\[g(v, w) = \int_X \dvol_X \, (v, w) .\]
The group $G_X = \Map(X, G)$ acts on $\Conn^{\triv}_G(X)$ by gauge transformations on $A$. 
Its Lie algebra $\fg_X = C^\infty(X ; \fg)$ carries an invariant symmetric bilinear form 
\[
(v, w)_{\fg_X} = \int_X \dvol_X \, (v(x), w(x)) .
\]
This action preserves the metric.

$\Conn^{\triv}_G(X)$ is equipped with a K\"ahler structure with K\"ahler form defined by
\[
\omega(v, w) = \frac12 \int_X \omega_X^2 \wedge (v \wedge w) .
\]
The $G_X$ action is Hamiltonian with respect to this K\"ahler structure. The moment map is
\[
\mu = \frac{1}{2} \omega_X^2 \wedge F_A = (\Lambda F_A) \dvol_X .
\]

Consider the holomorphic Chern--Simons functional
\[S_{hCS} = -\frac{\im}{2}\int_X \Omega \wedge \left(A\wedge \d A + \frac{1}{3}(A\wedge [A\wedge A])\right)\]
which defines a holomorphic function on $\Conn^{\triv}_G(X)$. (Notice that $S_{hCS}$ depends holomorphically on the $(0,1)$ part of the connection $A$.) 
Define the one-form $\alpha$ on $\Conn^{\triv}_G(X)$ by
\begin{equation}\label{eqn:7alpha}
\alpha = \delta \Re S_{hCS} = \int_X \Im(\Omega) \wedge \delta A \wedge F_A .
\end{equation}
Finally, define the smooth function
\begin{equation}\label{eqn:7h} 
h = -\frac{1}{2}\int_X (F_A\wedge F_A)\wedge \omega.
\end{equation}

\begin{thm}
Let $X$ be a Calabi--Yau 3-fold.
The topological twist of the 7d $\cN=1$ super Yang--Mills theory on $X \times S$ is equivalent to the $\cN=4$ $G_X$-gauged supersymmetric mechanics of maps $S\rightarrow \Conn^{\triv}_G(X)$ with $\alpha$ given by \eqref{eqn:7alpha} and $h$ given by \eqref{eqn:7h}. Under this correspondence the supersymmetry transformation \eqref{eqn:7susy} corresponds to the transformation induced by the A supercharge $Q_A$ \eqref{eq:1dAgaugedN=4supercharge}.
\label{thm:7cy}
\end{thm}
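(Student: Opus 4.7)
The plan is to prove the equivalence by direct dimensional reduction along $S$, following the template of \cref{thm:HWRiemannian}. First I would decompose the 7d fields along $X\times S$: write the connection as $A + A_0\,\d t$ with $A$ a time-dependent connection on the trivial $G$-bundle over $X$ and $A_0\in \Gamma(X,\ad P)$; split the odd 1-forms as $\psi=\psi + \psi_0\,\d t$ and $\chi = \chi + \chi_0\,\d t$. Correspondingly $F_{A+A_0\d t} = F_A^X + \d t \wedge F_t$ with $F_t = \partial_t A - \d_A A_0$, and the $G_2$ structure decomposes as $\star\varphi = -\d t\wedge \Im\Omega - \omega_X^2/2$.

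Next I would construct a field-matching table: the mechanics matter field $\phi$ is the $X$-connection $A$, the mechanics gauge field $A$ is $A_0$, mechanics $\sigma$ is 7d $\sigma$, and the two scalars $\varphi, \xi$ match (with scalings fixed by comparing $\delta\eta = [\varphi,\xi]$ against $\delta\eta = \tfrac12[\rho,\tilde\rho]$) to $-\rho$ and $\tilde\rho/2$. The matter fermions $\chi, \psi$ of the mechanics match the $X$-parts of the 7d $\chi, \psi$; the odd scalars $\lambda$ and $c$ match $\chi_0$ and $\psi_0$ respectively (using $\delta A_0 = \im\chi_0$ against $\delta A = \im\lambda$); and $\eta, \nu$ agree with their 7d namesakes. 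Under these identifications, checking \eqref{eq:1dAgaugedN=4supercharge} against \eqref{eqn:7susy} reduces to direct substitution for all fields except $\psi$.

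The only nontrivial variation is $\delta\psi = \d_A\sigma - \star(\star\varphi\wedge F_A)$. Using the splitting above, together with the identities $\star_7(\d t\wedge\beta) = \star_6\beta$ for a form $\beta$ on $X$ and the Kähler identity $\star_6(\omega_X^2/2 \wedge \gamma) = J\gamma$ for 1-forms $\gamma$ on $X$, one obtains
\[
  \star_7(\star\varphi\wedge F_A) = -\alpha^\sharp - (\Lambda F_A^X)\,\d t - J F_t,
\]
where $\alpha^\sharp = \star_6(\Im\Omega\wedge F_A^X)$ is the metric dual of \eqref{eqn:7alpha}. The $X$-component supplies the $\alpha$-term and, via the Kähler moment-map identity $\hat\sigma = -\d_A\sigma$, the correction $\sigma^j\partial_i\mu(e_j)$ in the mechanics variation of $\psi$; the $\d t$-coefficient supplies the moment map $\mu = (\Lambda F_A)\dvol_X$ in $\delta c$. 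For the bosonic action, the curvature-squared splits as $\tfrac{1}{2}|F_A|^2 = \tfrac{1}{2}|F_A^X|^2 + \tfrac{1}{2}|F_t|^2$, and the Hodge together with Lefschetz decomposition of $F_A^X$ on the Calabi--Yau 3-fold gives
\[
  \tfrac{1}{2}\int_X |F_A^X|^2\,\dvol_X = \tfrac{1}{2}|\alpha|^2 + \tfrac{1}{2}|\mu|^2 + h,
\]
with $h$ as in \eqref{eqn:7h}. The $|F_t|^2$ piece matches $|\d_A\phi|^2$ in the mechanics, and the remaining scalar and interaction terms $|\d_A\sigma|^2$, $(\d_A\rho,\d_A\tilde\rho)$, $([\sigma,\rho],[\sigma,\tilde\rho])$, and $|[\rho,\tilde\rho]|^2$ in \eqref{eq:7dbosonaction} map onto the corresponding terms of \eqref{eq:1dN=4gaugedsigmamodel} by direct substitution.

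The principal obstacle is the Hodge-theoretic decomposition of $\star(\star\varphi\wedge F_A)$ above. This step is where the $G_2$-geometry on $X\times S$ must be reconciled with the Calabi--Yau/Kähler geometry on $X$, and it is the unique place where the specific forms of $\alpha$ in \eqref{eqn:7alpha}, the moment map $\mu = (\Lambda F_A)\dvol_X$, and the topological density $h$ in \eqref{eqn:7h} all emerge simultaneously from the twisted 7d supersymmetry. Once this identity is in hand, both the supersymmetry comparison and the bosonic action calculation are routine.
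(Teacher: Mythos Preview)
Your approach is the paper's: decompose the 7d fields along $X\times S$, set up a field dictionary, and verify the bosonic action and the $Q_A$-variations, with the K\"ahler/Hodge decomposition of $\star_7(\star\varphi\wedge F_A)$ as the crux. The paper derives the curvature identity $\tfrac12|F_A^X|^2 = \tfrac12|\alpha|^2+\tfrac12|\mu|^2+h$ by an explicit computation of $|\alpha|^2$ and $|\mu|^2$ together with a cited K\"ahler identity from \cite{ESW} rather than the Lefschetz decomposition you invoke, but this is the same calculation; your field table also differs cosmetically from the published one (which pairs $\lambda\leftrightarrow\rho$), and your choice $\lambda\leftrightarrow\chi_0$, $\varphi\leftrightarrow-\rho$ is the natural one.

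One small misattribution to fix: the moment-map correction $\sigma^j\partial_i\mu(e_j)$ in the mechanics $\delta\psi$ does not come from the $JF_t$ piece of $\star_7(\star\varphi\wedge F_A)$. It is supplied instead by the $\d_A^X\sigma$ part of the 7d term $\d_A\sigma$ in $\delta\psi$, via the identity $\nabla\langle\mu,\sigma\rangle \propto J\,\d_A^X\sigma$ on $\Conn_G(X)$ (this is where the relation $\hat\sigma=\d_A\sigma$ you quote actually enters). The $JF_t$ term must then match the kinetic piece $-\d_A\phi$, and reconciling these requires care with how the complex structure enters the identification of the matter fermion $\psi$. The paper's proof is equally laconic at this point --- it simply writes the reduced $\delta\psi$ and asserts agreement with \eqref{eq:1dAgaugedN=4supercharge} --- so your sketch is at the same level of detail, but the attribution as stated should be corrected.
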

\begin{proof}
We decompose the one-form fields under the splitting $M=X\times S$ as follows:
\[
A\mapsto A + A_0, \quad \psi \mapsto \psi + \psi_0 , \quad \chi \mapsto \chi + \chi_0,
\]
where the first component is a one-form along $X$ and the second component is a one-form along $S$. 
For instance, $A$ is a connection on $P$ in the $X$ direction and $A_0$ is a connection on $P$ in the $S$ direction. We may then match the fields as in \cref{tbl:7dcpt}.

The action \eqref{eq:7dbosonaction} becomes 
\begin{align*}
S_{bosonic} = \int_M\dvol_M&\Bigl(\frac{1}{2}|F_A|^2 + \frac{1}{2}|\d A_0 + \d_{A_0} A|^2 + \frac{1}{2}|\d_{A} \sigma|^2 + \frac12 |\d_{A_0} \sigma|^2 \\
+ &(\d_A \rho, \d_A \Tilde{\rho}) + (\d_{A_0} \rho, \d_{A_0} \Tilde{\rho}) + (\d_{A_0}\tilde{\sigma}, \d_{A_0}\sigma) + ([\sigma, \rho], [\sigma, \Tilde{\rho}]) - \frac12 |[\rho, \Tilde{\rho}]|^2 \Bigr).
\end{align*}

We have
\begin{align*}
\frac{1}{2}|\alpha|^2 &= \frac{1}{2} \int_X |\Im\Omega\wedge F_A|^2\dvol_X \\
&= \frac{1}{2}\int_X \left|\frac{\Omega}{2\im}\wedge F_{0, 2} - \frac{\Bar{\Omega}}{2\im}\wedge F_{2, 0}\right|^2\dvol_X \\
&= 2 \int_X (F_{0, 2}, F_{2, 0}) \dvol_X
\end{align*}
and
\[\frac{1}{2}|\mu|^2 = \frac{1}{2}\int_X |\Lambda F_A|^2 \dvol_X.\]
By \cite[Corollary 4.3]{ESW} we have
\[\frac{1}{2}|F_A|^2\dvol_X + \frac{1}{2}(F_A\wedge F_A)\wedge\omega = \left(2(F_{2, 0}, F_{0, 2}) + \frac{1}{2}(\Lambda F_A)^2\right)\dvol_X.\]
This shows that the bosonic action agrees with that of the $\cN=4$ SUSY mechanics \eqref{eq:1dN=4gaugedsigmamodel}.

We now show that the reduction of the 7d supersymmetry matches that of the supersymmetric mechanics. 
The variation of the field $c$ in supersymmetric mechanics is obtained from the variation of the 7d field $\psi_0$. 
We read this off as
\[
\delta \psi_0 = \d_{A_0} \sigma + \frac12 \star_7 (\omega^2_X \wedge F_A) .
\]
(The subscript in $\star_7$ is to emphasize that we are using the Hodge operator with respect to the metric on the $7$-manifold $X \times S$.)
Using $c = -\psi_0\d t$ we see this matches with the supersymmetry in \eqref{eq:1dAgaugedN=4supercharge}.
The variation of the field $\psi$ in supersymmetric mechanics is obtained from the variation of the 7d field $\psi$. We read this off as
\[
\delta \psi = \d_A \sigma + \star_7 \left(\d t \wedge \Im (\Omega) \wedge F_A + \frac{\omega_X^2}{2}\wedge (\d A_0 + \d_{A_0} A)\right).
\]
This coincides with the variation in \eqref{eq:1dAgaugedN=4supercharge}.
\end{proof}

\begin{table}[h]
\begin{tabular}{cc}
\hline
SUSY mechanics & 7d theory \\
\hline
$A$ & $A_0$ \\
$\eta$ & $\eta$ \\
$c$ & $- \psi_0\d t$ \\
$\nu $ & $\nu$ \\
$\lambda $ & $\rho$ \\
$\xi$ & $\Tilde{\rho}$ \\
$\sigma$ & $\sigma$ \\
$\phi$ & $A$ \\
$\chi$ & $\chi$ \\
$\psi$ & $-\psi\d t$ \\
\hline
\end{tabular}
\caption{Fields in the $\cN=4$ gauged supersymmetric mechanics and in the twist of the 7d $\cN=1$ super Yang--Mills theory.}
\label{tbl:7dcpt}
\end{table}

We may identify connections $A$ on the trivial $G$-bundle over $X$ with $(0, 1)$ connections $A_{0, 1}$ on the trivial $G_\C$-bundle over $X$. The action of $\Map(X, G)$ extends to an action of $\Map(X, G_\C)$ by complexified gauge transformations. The zero locus of $\alpha$ on $\Conn^{\triv}_G(X)/\Map(X, G_\C)$ coincides with the moduli space $\Bun_G^{\triv}(X)$ of holomorphic structures on a trivializable $G_\C$-bundle. Let us also consider nontrivial $G_\C$-bundles. If $G_\C$ is an algebraic group, there is a $(-1)$-shifted symplectic stack $\R\Bun_{G_\C}(X)$ parametrizing algebraic $G_\C$-bundles (see e.g. \cite{PTVV}). It is shown in \cite{JoyceUpmeier2} that $\R\Bun_{G_\C}(X)$ carries a canonical orientation data for $G=\SU(n)$.

\begin{proposal}
Suppose $G_\C$ is an algebraic group. Then the space of states in the topological twist of the 7d $\cN=1$ super Yang--Mills theory on a smooth projective Calabi--Yau 3-fold $X$ is
\[\R\Gamma(\Bun_{G_\C}(X), P_{\R\Bun_{G_\C}(X)}).\]
It carries a natural grading by the second Chern character $\int_X \ch_2(P)\wedge \omega$ of the $G_\C$-bundle $P$.
\end{proposal}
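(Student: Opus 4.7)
My plan is to follow the same blueprint used for the GL twist (Theorem in \cref{sect:GL}) and the Haydys--Witten twist (\cref{thm:HWRiemannian}): pick a dictionary between the 7d fields and the fields of the $\cN=4$ gauged mechanics, then verify separately that (i) the bosonic actions agree and (ii) the A-type supersymmetry variations agree.

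First, I would decompose the one-form fields of the 7d theory under the splitting $M = X \times S$, writing $A \mapsto A + A_0$, $\psi \mapsto \psi + \psi_0$, $\chi \mapsto \chi + \chi_0$, where the first summand is a one-form along $X$ and the second a one-form along $S$. With the dictionary of \cref{tbl:7dcpt}, the gauge field $A_0$ in the $S$ direction becomes the mechanics gauge connection, while $A$ along $X$ plays the role of the target coordinate $\phi$ taking values in $\Conn^{\triv}_G(X)$; the scalars $\sigma, \rho, \Tilde\rho, \eta, \nu$ map onto the complexified scalar multiplet $\sigma, \lambda, \xi, \eta, \nu$ of the mechanics, and $\psi_0, \chi_0$ become the fermions $c, \dots$ in the gauge multiplet while $\psi, \chi$ along $X$ become the matter multiplet fermions.

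For the bosonic action, I would expand \eqref{eq:7dbosonaction} on $X \times S$, separating the curvature into its $X$-component $F_A$ and the mixed component $F_t = \d A_0 + \d_{A_0} A$. The key algebraic identity I need is the one from \cite[Corollary 4.3]{ESW} (already used in the paper):
\[
\tfrac12 |F_A|^2 \dvol_X + \tfrac12 (F_A \wedge F_A) \wedge \omega_X = \Bigl(2 (F_{2,0}, F_{0,2}) + \tfrac12 (\Lambda F_A)^2\Bigr)\dvol_X.
\]
The left-hand side expresses the Yang--Mills density plus the instanton term $h$; the right-hand side matches exactly $\tfrac12|\alpha|^2 + \tfrac12|\mu|^2$ once one computes
\[
|\alpha|^2 = \int_X \bigl|\Im\Omega \wedge F_A\bigr|^2 \dvol_X = 4\int_X (F_{2,0}, F_{0,2})\dvol_X,\qquad |\mu|^2 = \int_X (\Lambda F_A)^2 \dvol_X,
\]
using $\Im\Omega = (\Omega - \Bar\Omega)/(2\im)$ and the standard relation between $\wedge$ and the Hermitian inner product. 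The remaining terms ($\d_{A_0}\sigma$, $\d_{A_0}\rho$, commutators) match the gauge-multiplet part of \eqref{eq:1dN=4gaugedsigmamodel} directly.

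For the supersymmetry, the variations of the scalars and of $A_0$ are immediate from the dictionary. The nontrivial check is the variation of $\psi$, where the 7d rule gives $\delta\psi = \d_A \sigma - \star(\star\varphi \wedge F_A)$, and one must verify this reproduces the A-variation of the matter fermion in \eqref{eq:1dAgaugedN=4supercharge}, namely $-\d_A \phi - \alpha - \sigma\cdot\d\mu$ (schematically). Using $\star\varphi = -\d t \wedge \Im\Omega - \tfrac12 \omega_X^2$, the term $\star(\d t \wedge \Im\Omega \wedge F_A)$ produces $\alpha$ as defined in \eqref{eqn:7alpha}, and $\star(\tfrac12 \omega_X^2 \wedge F_A)$ produces the $S$-direction piece $\d_{A_0} A$ together with the moment-map contribution once combined with $\psi_0$. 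The main obstacle is bookkeeping: tracking signs and the seven-dimensional Hodge star $\star_7$ acting on mixed $X$/$S$ forms and showing that the $\d t$-coefficient and the $X$-form coefficient separately reproduce the $c$-variation and the $\psi$-variation of the mechanics. Once this matching is checked on $\psi_0$ and $\psi$, the remaining fermion variations are formal consequences.
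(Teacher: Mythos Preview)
Your approach is correct and matches the paper's proof of \cref{thm:7cy} essentially line for line: the same field decomposition, the same dictionary (\cref{tbl:7dcpt}), the same ESW identity for the bosonic action, and the same two nontrivial supersymmetry checks on $\psi$ and $\psi_0$ via $\star\varphi = -\d t\wedge\Im\Omega - \tfrac12\omega_X^2$. The only thing you left implicit is the final passage from the theorem to the Proposal: once the mechanics description is established, you identify the zero locus of $\beta = \partial S_{hCS}$ on $\Conn^{\triv}_G(X)/\Map(X, G_\C)$ with $\Bun_{G_\C}(X)$, invoke the $(-1)$-shifted symplectic structure on $\R\Bun_{G_\C}(X)$ from \cite{PTVV} together with the orientation data, and then apply \cref{mainproposalstack} to obtain the stated space of states and grading by $\int_X \ch_2(P)\wedge\omega$ coming from $h$.
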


The previous complex gives the \emph{categorified Donaldson--Thomas invariants} of $X$; this definition was introduced in the papers \cite{BBDJS,KiemLi}.

\printbibliography

\end{document}